\documentclass{article}

\usepackage{amsmath}
\usepackage{amsthm}
\usepackage{amssymb}
\allowdisplaybreaks[4]

\usepackage[top=2.5cm, bottom=2.5cm, left=2.4cm, right=2.4cm]{geometry}
\usepackage{indentfirst}
\usepackage{bm}
\usepackage{bbm}

\usepackage{tikz}
\usetikzlibrary{shapes.geometric, arrows}
\usepackage{booktabs}
\usepackage{multirow}
\usepackage{multicol}
\usepackage{makecell}
\usepackage{algorithm}
\usepackage{algpseudocode}

\usepackage[numbers]{natbib}

\usepackage{charter}

\usepackage{hyperref}
\hypersetup{
    colorlinks=true,
    linkcolor=blue,
    filecolor=magenta,
    urlcolor=cyan,
    citecolor = cyan,
        }

\newcommand{\Acal}{\mathcal{A}}
\newcommand{\Bcal}{\mathcal{B}}

\newcommand{\Ecal}{\mathcal{E}}
\newcommand{\Fcal}{\mathcal{F}}
\newcommand{\Gcal}{\mathcal{G}}

\newcommand{\Mcal}{\mathcal{M}}
\newcommand{\Ncal}{\mathcal{N}}

\newcommand{\Pcal}{\mathcal{P}}

\newcommand{\Scal}{\mathcal{S}}

\newcommand{\Vcal}{\mathcal{V}}

\newcommand{\fbf}{\mathbf{f}}
\newcommand{\gbf}{\mathbf{g}}

\newcommand{\sbf}{\mathbf{s}}

\newcommand{\ubf}{\mathbf{u}}
\newcommand{\vbf}{\mathbf{v}}
\newcommand{\wbf}{\mathbf{w}}
\newcommand{\xbf}{\mathbf{x}}
\newcommand{\ybf}{\mathbf{y}}

\newcommand{\Abf}{\mathbf{A}}

\newcommand{\Gbf}{\mathbf{G}}

\newcommand{\Ibf}{\mathbf{I}}

\newcommand{\Wbf}{\mathbf{W}}

\newcommand{\Ybf}{\mathbf{Y}}

\newcommand{\Ebb}{\mathbb{E}}
\newcommand{\Pbb}{\mathbb{P}}

\newcommand{\id}{\mathbb{I}}

\newcommand{\Var}{\mathrm{Var}}

\newcommand{\set}[1]{\{#1\}}

\newcommand{\Bigset}[1]{\Bigl\{#1\Bigr\}}

\newcommand{\norm}[1]{\left\Vert #1 \right\Vert}
\newcommand{\bignorm}[1]{\bigl\lVert #1\bigr\rVert}
\newcommand{\Bignorm}[1]{\Bigl\lVert #1\Bigr\rVert}
\newcommand{\biggnorm}[1]{\biggl\lVert#1\biggr\rVert}

\newcommand{\abs}[1]{|#1|}

\newcommand{\Bigabs}[1]{\Bigl| #1\Bigr|}
\newcommand{\biggabs}[1]{\biggl|#1\biggr|}

\newcommand{\size}[1]{|#1|}

\DeclareMathOperator*{\argmin}{arg\,min}

\newtheorem{theorem}{Theorem}[section]
\newtheorem{lemma}[theorem]{Lemma}
\newtheorem{prop}[theorem]{Proposition}

\newtheorem{condition}[theorem]{Condition}
\newtheorem{coro}[theorem]{Corollary}
\newtheorem{definition}[theorem]{Definition}
\newtheorem{rmk}[theorem]{Remark}

\newcommand{\revise}[1]{#1}

\usepackage{subcaption}
\usepackage{enumerate}
\usepackage[normalem]{ulem}
\newcommand{\btheta}{{\bm{\theta}}}

\renewcommand{\citep}{\cite}
\renewcommand{\citet}{\cite}

\title{Toward Exact Convergence in Byzantine-Robust Decentralized Learning: A Statistical Identification Approach}

\author{Siyuan Zhang\thanks{State Key Laboratory of Scientific and Engineering Computing, Academy of Mathematics and Systems Science, Chinese Academy of Sciences, and University of Chinese Academy of Sciences, China (e-mail: zhangsiyuan@amss.ac.cn)}, ~
Chengde Qian\thanks{School of Mathematical Sciences, Shanghai Jiao Tong University, China.  (e-mail: qianchd@gmail.com).},
~
Xin Liu\thanks{State Key Laboratory of Scientific and Engineering Computing, Academy of Mathematics and Systems Science, Chinese Academy of Sciences, and University of Chinese Academy of Sciences, China (e-mail: liuxin@lsec.cc.ac.cn)}
~ and  ~
Changliang Zou\thanks{School of Statistics and Data Sciences, Nankai University, China (e-mail: zoucl@nankai.edu.cn)}
}

\begin{document}
\maketitle

\begin{abstract}%
To defend against Byzantine attacks in decentralized learning, most existing methods rely on robust aggregation rules to mitigate the influence of malicious machines. However, these strategies inherently introduce bias, leading to inexact convergence with non-vanishing steady-state errors. In this paper, we propose a strategic shift from passive aggregation to active identification by introducing the Decentralized Rescaled Stochastic Gradient Descent with Byzantine Machine Identification (DRSGD-ByMI) framework. The core of our approach is an identification-based ``detect-then-optimize'' pipeline, where a p-value-free detection procedure is developed to accurately prune malicious nodes from the network. By leveraging sample-splitting score statistics, this identification mechanism achieves false discovery rate control without requiring restrictive distributional assumptions. We theoretically demonstrate that this precise identification allows the decentralized network to \revise{recover sufficient connectivity among the normal nodes}, \revise{thereby enabling DRSGD-ByMI to match, even in the presence of Byzantine machines, the same order-optimal convergence rate as standard decentralized stochastic first-order methods.} Numerical experiments validate our theoretical results and demonstrate the effectiveness of DRSGD-ByMI for decentralized robust learning problems.
\end{abstract}

\textbf{Keywords:}
Byzantine-robust decentralized learning, decentralized stochastic optimization, Byzantine machine identification, false discovery rate control, exact convergence.

\section{Introduction}

Decentralized learning has emerged as a critical paradigm for training global models across distributed machines, avoiding communication bottlenecks that occur in distributed systems with a central server.
Within a decentralized network topology, machines exchange messages only with their neighbors, which is communication-efficient and helps alleviate growing privacy concerns.
However, in practical large-scale systems, transmitted messages are highly susceptible to corruption due to various system uncertainties \citep{blanchard2017machine}.
Inevitably, a subset of machines will transmit unreliable or malicious messages.
These anomalous nodes are formally designated as \textit{Byzantine machines}, and the resulting disruptions are termed \textit{Byzantine attacks}.
Learning a reliable model in this regime is particularly challenging because the specific identities of the compromised machines remain unknown to the system.

A large number of studies have explored Byzantine-robust decentralized optimization.
Some studies have focused on deterministic settings \citep{yang-byrdie-2019, xu2018robust, fang-bridge-2022, gupta2021byzantine, ravi2019detection, kuwaranancharoen2020byzantine, kuwaranancharoen2024scalable, kuwaranancharoen2025geometric}, \revise{typically through} designing rules to make local updates insensitive to outlier neighbors.
Because computing exact \revise{local} gradients is often impractical, more attention has shifted to stochastic optimization \citep{guo2021byzantine, fang-byzantine-robust-2024, wu-byzantine-resilient-2023, yang-byzantine-robust-2024, he2022byzantine, peng2023byzantine, peng2021byzantine}.
In this area, several different strategies have been proposed.
For example, UBAR \citep{guo2021byzantine} improves robustness by using only neighbors with small parameter distances and lower loss values.
Centered-clipping methods \citep{he2022byzantine, yang-byzantine-robust-2024} limit the update size to defend against attacks, converging to a neighborhood of the stationary point.
However, the final error does not vanish and depends on the weights given to malicious peers and gradient differences.
Other works like \citet{peng2021byzantine} and \citet{peng2023byzantine} apply sign-based methods \citep{xu2018robust} to stochastic settings, achieving linear convergence with an error related to the number of Byzantine neighbors.
Additionally, BALANCE \citep{fang-byzantine-robust-2024} uses a decay rule to reduce the weight of suspicious nodes based on their distance.
\citet{wu-byzantine-resilient-2023} propose IOS, an iterative detection method analogous to the centralized FABA \citep{ijcai2019p670}, to locally compute robust centers.
Despite their diverse defensive mechanisms, these methods suffer from a shared limitation. Specifically, the interplay between Byzantine interference and the inherent bias of robust aggregation inevitably induces an optimization error. As a result, such approaches generally only guarantee inexact convergence, suffering from a non-vanishing steady-state error (see Table~\ref{tab:1} below).

\begin{table}[htbp]
  \centering
      \fontsize{8}{11}\selectfont
      \begin{tabular}{|c|c|c|c|c|}
      \hline
      {Existing works} &  {Convexity} & {Step-sizes} & {Aggregation} & {Convergence}  \\
      \hline
      \makecell[c]{\citep{peng2020byzantine},\\
      \citep{peng2021byzantine}}    & strongly convex & $\eta_{k}=\mathcal{O}(\frac{1}{k})$&  sign mapping & inexact + linear  \\
      \hline
      BRAVO \citep{peng2023byzantine}   &  strongly convex & constant &  sign mapping & inexact + linear   \\
      \hline
      UBAR \citep{guo2021byzantine}   & nonconvex & constant &  detection $+$ average  & No convergence theory  \\
      \hline
      SCCLIP \citep{he2022byzantine}    & nonconvex &  $\eta_{k}=\mathcal{O}(\frac{1}{\sqrt{K}})$ &  centered clipping & inexact + sublinear  \\
      \hline
      \makecell[c]{DSGD-RTC \\ \citep{yang-byzantine-robust-2024}}  & nonconvex &  $\eta_{k}=\mathcal{O}(\frac{1}{\sqrt{K}})$ &  centered clipping & inexact + sublinear  \\
      \hline
      IOS \citep{wu-byzantine-resilient-2023}  & nonconvex & $\eta_{k}=\mathcal{O}(\frac{1}{\sqrt{K}})$ &  detection $+$ (w)average & inexact + sublinear \\
     \hline
     BALANCE \citep{fang-byzantine-robust-2024}  & \makecell[c]{{strongly convex} \\  {nonconvex}} & constant & detection $+$ average & \makecell[c]{{inexact + linear} \\  {inexact + sublinear}}  \\
      \hline
     \revise{DRSGD-ByMI (this work)}  & nonconvex & $\eta_{k}=\mathcal{O}(\frac{1}{\sqrt{K}})$  & \makecell[c]{{any robust aggregation}\\ {satisfying Condition~\ref{asp:warm_up}}}  & {\textbf{exact} + sublinear}   \\
     \hline
     \end{tabular}
  \caption{Comparison of existing works in decentralized Byzantine-robust stochastic optimization. $K$ denotes total iterations.}
  \label{tab:1}
\end{table}

Building on these limitations, another promising direction is to integrate statistically principled Byzantine machine identification into decentralized optimization.
With statistically reliable identification and removal of Byzantine connections, exact convergence becomes achievable.
However, traditional outlier detection tests rely on p-values derived from asymptotic distributions (e.g., Hotelling’s $T^2$).
The accuracy of these tests degrades as the parameter dimension increases relative to the sample size, making them unreliable for distributed machine learning tasks characterized by high dimensionality.
Recently, in a centralized setup, \citet{qian2024bymi} proposed a sample-splitting procedure that used a general robust estimator to construct score statistics.
The approach is p-value-free and achieves finite-sample false discovery rate control.
Nevertheless, decentralized tasks are characterized by parameter heterogeneity across distributed machines, which prevents the direct adoption of centralized detection protocols and requires novel methodological designs.
In this direction, there are only a limited number of studies \citep{kailkhura2017byzantine, kailkhura2016data} that attempt to detect Byzantine machines through hypothesis testing.
However, these works are all restricted to scalar-valued problems and require prior conditions on the distribution in Byzantine machines (e.g., the Gaussian mixture condition in \citet{kailkhura2016data}).

Inspired by \citet{qian2024bymi}, we develop DRSGD-ByMI, a \textit{p-value-free} and \textit{dimension-insensitive} detect-and-optimize framework that integrates Byzantine machine identification with decentralized rescaled stochastic gradient descent.
The framework of DRSGD-ByMI is illustrated in Figure~\ref{fig:framework1} and the main contributions are summarized as follows.

\begin{itemize}
\item We design a novel framework which consists of three phases: warm-up, detection, and optimization.
It is flexible and can be seamlessly integrated with various existing decentralized Byzantine-robust algorithms during the warm-up phase.
Moreover, by incorporating a p-value-free Byzantine identification mechanism into the decentralized system and removing Byzantine edges, our framework allows for exact convergence by means of a decentralized rescaled stochastic gradient descent (DRSGD).

\item The proposed detection procedure guarantees finite-sample false discovery rate control and high-probability sure detection.
Based on these results, we prove that the resulting sub-graph of normal nodes forms a strongly connected component, enabling the optimization algorithm to operate effectively on the pruned graph.

\item The proposed algorithm achieves a high-probability non-asymptotic convergence rate of $\mathcal{O}(1/\sqrt{m_g K})$ in the nonconvex setting, where $K$ denotes the total number of iterations, $m_g$ denotes the number of normal machines.
\revise{This rate is order-optimal, matching that of standard Byzantine-free decentralized stochastic first-order methods \citep{lian2017can, yuan2022revisiting}.}
Numerical experiments demonstrate that DRSGD-ByMI not only accurately identifies Byzantine machines but also improves the robustness and accuracy of decentralized learning.

\end{itemize}

\begin{figure}[htbp]
    \centering
   \includegraphics[width=15cm, height= 8.4cm]{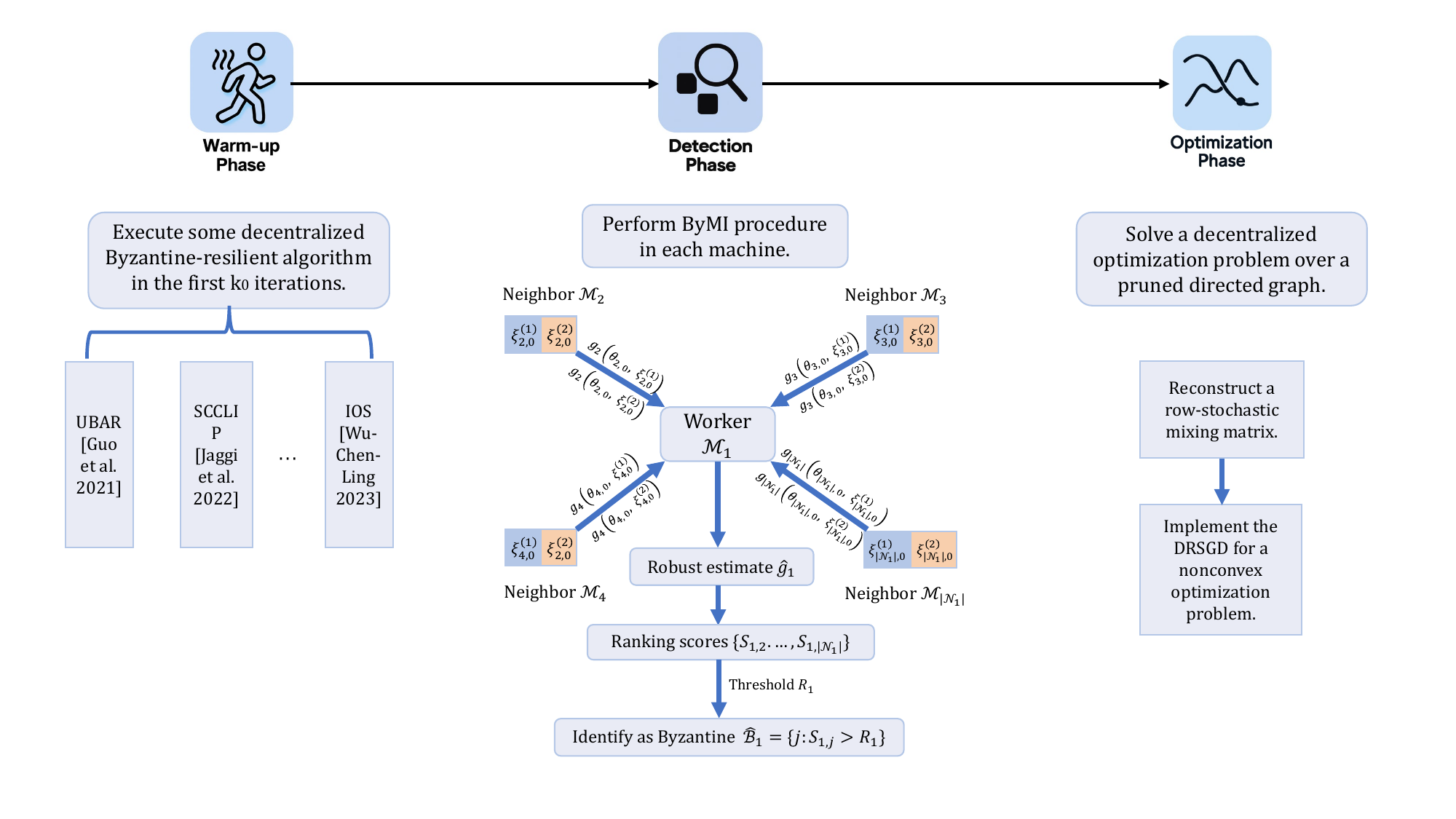}
   \caption{The framework of our proposed method: DRSGD-ByMI}
  \label{fig:framework1}
\end{figure}

\subsection{Organization}

The remainder of this paper is organized as follows.
Section~\ref{sec:pre} formulates the decentralized problem of interest and establishes the preliminaries on decentralized optimization and the Byzantine model, then formulates our ``detect-then-optimize'' problem setup.
Section~\ref{sec:frame} \revise{presents} the DRSGD-ByMI algorithmic framework.
Section~\ref{sec:gua} establishes statistical guarantees for identification and convergence guarantees for the optimization.
Finally, Section~\ref{sec:numeric} presents numerical results, followed by conclusions and future directions.

\subsection{Notation and Basic Terminology}

In this work, we denote scalars, vectors, and matrices by regular-font, bold lower-case, and bold upper-case letters, respectively.
The notations $``\langle \cdot, \cdot \rangle"$ and $``\| \cdot \|"$ represent the standard
inner product and norm of vectors, respectively.
We use $``\|\cdot \|_{2}"$ and $``\|\cdot \|_{\mathrm{F}}"$
particularly for the 2-norm and Frobenius norm of matrices, respectively.
For any semi-positive matrix ${\mathbf {\Omega}}$, $\|\cdot \|_{\bf \Omega}$ stands for the norm induced by $\mathbf {\Omega}$.
We also use $``\mathrm{Diag}(\cdot)"$ and $``\mathrm{diag}(\cdot)"$ to denote the diagonal matrix formed from an input matrix, and an input vector, respectively.
Moreover, the notation $\rho(\Abf)$ denotes the spectral radius of matrix $\Abf$, and $\lambda_2(\Abf)$ denotes the second-largest eigenvalue modulus of $\Abf$.

In a graph $\mathtt{G}=(\mathcal{V}, \mathcal{E})$, we use $\mathcal{N}_{i}$ to denote the neighbor set of node $i$.
When $\mathtt{G}$ is directed, we use $\mathcal{N}^{\text{in}}_{i}$ and $\mathcal{N}^{\text{out}}_{i}$ to denote the in-neighbor set and out-neighbor set of node $i$, respectively.
More specifically, $\mathcal{N}^{\text{in}}_{i}$ consists of the nodes that have directed arcs pointing to node $i$, while $\mathcal{N}^{\text{out}}_{i}$ consists of the nodes to which node $i$ has directed arcs.
Throughout the paper, we include node $i$ itself in these sets whenever self-loops are allowed by the mixing matrix.
For any subset $\Acal \subseteq \Vcal$, we use $\Ecal|_{\Acal}$ to denote the set of edges (or arcs) in $\Ecal$ whose two endpoints both belong to $\Acal$.
For a directed graph $\mathtt{G}$, a sub-graph $\mathtt{G}'$ is said to be strongly connected if every node in $\mathtt{G}'$ is reachable from every other node in $\mathtt{G}'$.
A sub-graph $\mathtt{G}'$ is called a strongly connected component if it is strongly connected and is not a proper sub-graph of any other strongly connected sub-graph of $\mathtt{G}$.

In general, we use $\{\btheta_{i,k}\}$ to represent the local optimization variable at the $k$-th iteration, and use $\bar{\btheta}_k$, $\tilde{\btheta}_k$, and $\widehat{\btheta}_{i, k}$ as the global average, global weighted average, and local robust average of the optimization variable $\{\btheta_{i,k}\}$ at the $k$-th iteration, respectively.
For any $a,b \in \mathbb{R},$ $a \vee b$ denotes the maximum of $a$ and $b$.
For any set $\Acal$, $\Acal^{c}$ denotes its complement, $\size{\Acal}$ denotes the number of elements in set $\Acal$, and $\mathbb{I}({\Acal})$ represents the indicator function on the set $\Acal$. \revise{For any two sets $\Acal$ and $\Bcal$, $\Acal \setminus \Bcal$ denotes the set difference $\{x \in \Acal: x \notin \Bcal\}$.} The notation ${\bm 1}_d$ stands for the all-ones vector in $\mathbb{R}^d$, ${\bm 1}_{\{i\}}$ stands for the vector whose $i$-th component is $1$ and all other components are $0$. \revise{For a vector $\vbf$, $[\vbf]_i$ denotes its $i$-th component.} We use $\sbf$ to denote a single sample from some distribution, and $\xi$ to denote a stochastic mini-batch sampled from some distribution.

\section{Problem Formulation and Preliminaries}\label{sec:pre}

Consider an undirected connected graph $\mathtt{G}=(\Vcal, \Ecal)$, where the node set $\Vcal:= \Gcal \cup \Bcal$ represents the indices of machines $\{\mathcal{M}_{i}: i\in[m]\}$, $\Gcal$ and $\Bcal$ respectively denote the node sets of normal machines and Byzantine machines, and the edge set $\Ecal$ represents the communication links between machines.
If $(i, j) \in \Ecal,$ machine $\Mcal_i$ and $\Mcal_j$ are neighbors and can communicate with each other.
Suppose that the total number of Byzantine machines is $\size{\Bcal} = \lfloor \varrho m \rfloor$, $\varrho \in (0, \frac{1}{2})$.
Meanwhile, denote $m_g := \size{\Gcal}$, and the set of edges between nodes in $\mathcal{G}$ as $\mathcal{E}|_{\mathcal{G}}$.
Note that each normal machine has no prior knowledge of either the number or the identities of its Byzantine neighbors.

In such a Byzantine environment, each node $i$ collects a local dataset $\Scal_i=\{\sbf^{(i)}_1, \ldots, \sbf_{N_i}^{(i)}\}$ with $N_i$ samples, and is assigned a local objective function $f_i(\btheta)$ associated with $\Scal_i$.
We aim to solve an empirical risk minimization problem over $\mathtt{G}$ in a decentralized manner,
\begin{equation}\label{obj:dcentral_all}
  \begin{aligned}
  \min_{{\bm \theta}_{i}\in \mathbb{R}^{d}, i\in \Gcal}  &~  \sum\limits_{i\in \Gcal} f_i(\btheta_i)\\
  \text{s. t.} &~ {\bm \theta}_{i} = {\bm \theta}_{j}, (i,j)\in \Ecal|_{\Gcal}.\\
  \end{aligned}
\end{equation}
Here, ${\btheta}_{i}$ is the local optimization variable of node $i$.
The local objective function $f_i(\btheta) := {N}_i^{-1} \sum_{u=1}^{N_i}\ell(\btheta; \sbf_{u}^{(i)})$, where $\ell$ is the empirical loss function that is continuously differentiable and possibly nonconvex.
We denote $f(\btheta) := {m}_g^{-1}\sum_{i\in \Gcal} f_i(\btheta)$.

Moreover, we characterize the Byzantine attacks via the Huber contamination model \citep{huber1964huber}.
Specifically, each sample $\sbf_{u}^{(i)}$ in the local dataset $\Scal_i$ is drawn from a distribution that depends on whether node $i$ is normal or Byzantine, i.e.,
\begin{equation}\label{eq:huber}
\begin{cases}
\sbf_{u}^{(i)} \sim  \Pcal, & \text{ if } i \in \mathcal{G},\\
\sbf_{u}^{(i)}  \sim \mathcal{Q}_i \neq \mathcal{P},  & \text{ if } i \in \mathcal{B},
\end{cases}
\end{equation}
where $\mathcal{Q}_i$ represents an arbitrary adversarial distribution.
That is to say, the data distributions across normal nodes are homogeneous, whereas the distributions at Byzantine nodes may deviate significantly from $\Pcal$.

To guarantee that problem \eqref{obj:dcentral_all} is well-posed, we impose the following connectivity condition on graph $\mathtt{G}$:

\begin{condition}\label{asp:scc_graph}
The graph $\mathtt{G}=(\mathcal{V}, \mathcal{E})$ is connected, and the sub-graph of normal machines $(\mathcal{G}, \mathcal{E}|_{\mathcal{G}})$ is connected.
\end{condition}

In the absence of Byzantine nodes, \eqref{obj:dcentral_all} is generally solved by the popular decentralized stochastic gradient descent (DSGD) algorithm \citep{lian2017can}.
At iteration $k$, each node $i$ independently samples a mini-batch $\xi_{i,k}$ from $\Scal_i$, and computes the stochastic gradient by $\gbf_i(\btheta_{i, k}; \xi_{i, k}):= \frac{1}{\size{\xi_{i, k}}} \sum_{\sbf \in \xi_{i, k}} \nabla \ell(\btheta_{i, k}; \sbf)$.
Let $(\Omega,\mathcal{F},\mathbb{P})$ denote the probability space associated with the per-iteration sampling randomness. Define $\mathcal{F}_{k} := \sigma(\{\xi_{i, t}: t\leq k-1, i \in \mathcal{V}\})$, and $\mathcal{F}_0 := \{\Omega, \varnothing\}$.
The local oracle at node $i$ satisfies
\begin{equation}
\mathbb{E}[\gbf_{i}(\btheta_{i,k}; \xi_{i, k}) | \mathcal{F}_{k}]= \nabla f_i(\btheta_{i, k}).
\end{equation}
Meanwhile, node $i$ communicates with its neighbors and aggregates their local optimization variables by $\sum_{j \in \Ncal_i} {\Wbf}(i,j){\bm  \theta}_{j, k}$, where ${\Wbf}$ is a symmetric, doubly stochastic mixing matrix satisfying ${\Wbf}(i,j)=0$ iff $i\neq j$ and $(i, j) \notin \Ecal$ \citep{xiao2006distributed}.
Finally, node $i$ performs a decentralized stochastic gradient descent step by
\begin{equation}
    \label{Eq_intro_DSGD0}
    {\bm \theta}_{i, k+1} = \sum\limits_{j \in {\Ncal_i}} {\Wbf}(i,j){\bm  \theta}_{j, k} - \eta_k \gbf_{i}(\btheta_{i,k}; \xi_{i, k}).
\end{equation}

If the graph $(\Gcal, \Ecal|_{\Gcal})$ is directed, each node can only aggregate the variables received from its in-neighbors.
The mixing matrix ${\Wbf}$ is restricted to be row-stochastic and asymmetric.
DGD-RS algorithm, proposed by \citet{mai2016distributed}, introduces auxiliary variables $\mathbf{y}_{i,k}$ to track the Perron left eigenvector $\vbf_1$ of the mixing matrix $\mathbf{W}$ corresponding to the eigenvalue $1$, and updates them by
\begin{equation}\label{eq:DGDRS}
\begin{cases}
{\bm \theta}_{i,k+1} & = \sum\limits_{j \in \Ncal_{i}^{\text{in}}} {\Wbf} (i,j) {\bm \theta}_{j,k} - \eta_k \frac{\nabla f_{i}(\btheta_{i,k})}{[\ybf_{i,k}]_i},\\
\ybf_{i,k+1} & = \sum\limits_{j \in \Ncal_{i}^{\text{in}}} {\Wbf}(i,j) \ybf_{j,k}, \\
\end{cases}
\end{equation}
where $\nabla f_{i}(\btheta_{i,k})$ is an exact local gradient at $\btheta_{i,k}$, $\ybf_{i,k}$ corrects the update directions across all nodes.
For strongly convex functions, \citet{mai2016distributed} demonstrates that \eqref{eq:DGDRS} converges to the optimal solution with diminishing step-size $\eta_k$.
Furthermore, with constant step-size, variants of the DGD-RS algorithm \citep{xi2018linear, shen2020heavy} can achieve linear convergence.
However, for nonconvex settings or stochastic scenarios, theoretical guarantees for row-stochastic-type approaches remain unexplored.

In the presence of Byzantine attacks, regardless of whether the communication graph is undirected or directed, the optimization variables of Byzantine nodes may drift far away due to corrupted gradients.
Through communication with neighbors, Byzantine nodes can further manipulate the optimization variables of the normal nodes via \eqref{Eq_intro_DSGD0} or \eqref{eq:DGDRS}, even forcing these variables to become zero or diverge arbitrarily.

\section{Algorithmic Framework}\label{sec:frame}

The algorithmic framework for DRSGD-ByMI consists of three phases: warm-up, detection, and optimization.
The warm-up phase provides stable initial optimization variables; the detection phase serves to identify suspicious Byzantine neighbors at each node and remove the corresponding in-arcs; the optimization phase carries out the decentralized learning task over the pruned graph.

Concretely, we randomly split each local dataset $\Scal_i$ into a warm-up set $\Scal^{\mathrm{W}}_i$ and an identification set $\Scal^{\mathrm{D}}_i$ of size $n$.
In the warm-up phase, we utilize  $\Scal^{\mathrm{W}}_i$ to generate stochastic oracle $\gbf_{i}(\btheta^{\mathrm{W}}_{i,k}, \xi_{i,k})$, where $\btheta^{\mathrm{W}}_{i, k}$ stands for the local optimization variable specific to the warm-up phase, $\xi_{i,k}$ is a mini-batch sampled from $\Scal^{\mathrm{W}}_i$.
Then, we employ an existing Byzantine-robust decentralized stochastic optimization algorithm to obtain stable initial optimization variables.
In the detection phase, we use the samples in $\Scal^{\mathrm{D}}_i$ to construct test statistics for identifying each node’s potentially Byzantine neighbors.
During the optimization phase, we sample from $\Scal_i$ to obtain stochastic gradients, and then perform decentralized stochastic optimization over the pruned directed graph.
In Algorithm~\ref{alg:DRSGD-ByMI}, we present the complete procedure of DRSGD-ByMI.

\begin{algorithm}[t]
\small
\caption{DRSGD-ByMI}
\label{alg:DRSGD-ByMI}
\begin{algorithmic}[1]
\Require  $\btheta_0, \ybf_{0} \in \mathbb{R}^{d}$ and a mixing matrix ${\Wbf}$.
\State{Employ a decentralized Byzantine-robust stochastic optimization algorithm over sets $\{\Scal^{\mathrm{W}}_i\}_{i \in \Vcal}$ to solve problem \eqref{obj:dcentral_all} during $k_0$ iterations, and obtain $(\btheta^{\mathrm{W}}_{1, k_0}, \ldots, \btheta^{\mathrm{W}}_{m, k_0})$.}  \hfill $\triangleleft$ Warm-up phase
\For{all $i\in [m]$ in parallel}
\State Set $k \gets 0$.
Initialize $\ybf_{i, k}= \ybf_{0},$ and $\btheta_{i, k}=\btheta_{i, k_0}^{\mathrm{W}}.$  \hfill $\triangleleft$ Detection phase
\State Independently and randomly split the identification set $\Scal^{\mathrm{D}}_i$ into two equal-sized sub-batches, $\xi_{i, k}^{(1)}$ and $\xi_{i, k}^{(2)}$.
\State Estimate the gradients $\gbf_{i} (\btheta_{i, k}; \xi_{i, k}^{(1)})$ and $\gbf_{i} (\btheta_{i, k}; \xi_{i, k}^{(2)})$ via~\eqref{eq:gradients}.
\State Send $\gbf_{i} (\btheta_{i, k}; \xi_{i, k}^{(1)})$,  $\gbf_{i} (\btheta_{i, k}; \xi_{i, k}^{(2)})$ and ${\bm \theta}_{i, k}$ and receive $\gbf_{j} (\btheta_{j, k}; \xi_{j, k}^{(1)})$,  $\gbf_{j} (\btheta_{j, k}; \xi_{j, k}^{(2)})$, ${\bm \theta}_{j, k}$ from its neighbors $j \in \Ncal_{i}$.
 \State Construct ranking score $S_{i,j}$ by \eqref{eq:score}  for each $j \in \Ncal_{i}$.
\State Choose threshold $R_i$ by \eqref{eq:thold}.
\State Identify the set of Byzantine machines as $\widehat{\mathcal{B}}_{i}:= \{j\in \Ncal_{i} : S_{i,j} \geq R_i \},$ and cut off the in-arcs from $\widehat{\mathcal{B}}_{i}$.
\State Construct row-stochastic mixing matrix ${\Wbf}$ by \eqref{eq:reassign}.  \hfill $\triangleleft$ Optimization phase
\While{\textit{stopping criterion is not satisfied}}
\State Update $\ybf_{i, k+1}$ and ${\bm \theta}_{i, k+1}$ by \eqref{eq:eupdate} and \eqref{eq:local_dsgdrs}.
\State Set $k\gets k+1$.
\EndWhile
\EndFor
\State \Return ${\bm \Theta}_k:= [{\btheta}_{1,k}^{\top}; \ldots; {\btheta}_{m, k}^{\top}]$.
\end{algorithmic}
\end{algorithm}

\subsection{Warm-up Phase}

We aim to provide an approximate solution of Problem \eqref{obj:dcentral_all} with a low consensus error over the normal nodes.
Mathematically, we require the following condition, Condition~\ref{asp:warm_up} to hold.

\begin{condition}\label{asp:warm_up}
Let $\{\btheta_{1, k_0}^{\mathrm{W}}, \ldots, \btheta_{m, k_0}^{\mathrm{W}}\}$ be the optimization variables obtained by the decentralized Byzantine-robust stochastic algorithm in the warm-up phase, where $k_0$ is the number of warm-up iterations.
Then with probability at least $1-\tau_{w}$, the following inequality holds:
\begin{equation}\label{eq:warmup}
\sum\limits_{j \in \mathcal{G}}\|\btheta^{\mathrm{W}}_{j, k_0}- \bar{\btheta}^{\mathrm{W}}_{k_0} \|^2= \mathcal{O}\biggl(\frac{1}{k_0^{1-\delta}}\biggr).
\end{equation}
Here, $\bar{\btheta}^{\mathrm{W}}_{k_0} = \frac{1}{m_g}\sum_{j\in \mathcal{G}}\btheta^{\mathrm{W}}_{j, k_0}$ is the average among normal nodes, $\delta\in(0,1)$, and $\tau_{w}$ decreases to $0$, as $k_0$ tends to $+\infty$.
\end{condition}

In fact, many existing Byzantine-robust stochastic optimization algorithms can provide optimization variables satisfying this condition.
For instance,
 SCCLIP \citep{he2022byzantine} demonstrated that
\begin{equation}\label{eq:SCCLIP}
  \frac{1}{m_g}\sum\limits_{j \in \mathcal{G}} \mathbb{E}[\|\btheta^{\mathrm{W}}_{j, k_0}-\bar{\btheta}^{\mathrm{W}}_{k_0} \|^2] \leq \mathcal{O}\biggl(\frac{\zeta^2}{(k_0+1)\lambda_2({\Wbf})^2}\biggr),
\end{equation}
with step-size $\eta= \mathcal{O}(\frac{1}{\sqrt{k_0}})$, where $\zeta$ upper-bounds the discrepancy between the exact local gradient at each normal
node and the global gradient, $\lambda_2({\Wbf})$ is the second largest norm of eigenvalue of $\Wbf$.

In addition, IOS \cite[Theorem 2]{wu-byzantine-resilient-2023} showed that
\begin{equation}\label{eq:IOS}
 \frac{1}{m_g}\sum\limits_{j \in \mathcal{G}}\mathbb{E}[\|\btheta^{\mathrm{W}}_{j, k_0}- \bar{\btheta}^{\mathrm{W}}_{k_0}  \|^2]\leq \mathcal{O}\biggl(\frac{\sigma^2+\zeta^2}{k_0 }\biggr),
\end{equation}
with step-size $\eta= \mathcal{O}(\frac{1}{\sqrt{k_0}})$, where $\sigma$ is an upper bound on the standard deviation of stochastic gradients at each normal node, $\zeta$ is defined as in \eqref{eq:SCCLIP}.

For convenience, we unify \eqref{eq:SCCLIP} and \eqref{eq:IOS} into the following form:

\begin{equation*}
 \frac{1}{m_g}\sum\limits_{j \in \mathcal{G}}\mathbb{E}[\|\btheta^{\mathrm{W}}_{j, k_0}- \bar{\btheta}^{\mathrm{W}}_{k_0}  \|^2]\leq \mathcal{O}\biggl(\frac{1}{k_0 }\biggr).
\end{equation*}
By Markov inequality, for any $\varepsilon>0$,
\begin{equation*}
\mathbb{P}\biggl( \frac{1}{m_g}\sum\limits_{j \in \mathcal{G}}[\|\btheta^{\mathrm{W}}_{j, k_0}- \bar{\btheta}^{\mathrm{W}}_{k_0} \|^2]\geq \varepsilon \biggr)   \leq \frac{1}{\varepsilon}{ \sum\limits_{j \in \mathcal{G}}\mathbb{E}[\|\btheta^{\mathrm{W}}_{j, k_0}- \bar{\btheta}^{\mathrm{W}}_{k_0} \|^2]}.
\end{equation*}
Set $\varepsilon:= \frac{1}{k_0^{1-\delta}}$, then we obtain
\begin{equation*}
\mathbb{P}\biggl( \frac{1}{m_g}\sum\limits_{j \in \mathcal{G}}[\|\btheta^{\mathrm{W}}_{j, k_0}- \bar{\btheta}^{\mathrm{W}}_{k_0}  \|^2]\leq \frac{1}{k_0^{1-\delta}} \biggr)   \geq 1- \mathcal{O}\biggl(\frac{1}{k_0^{\delta}}\biggr).
\end{equation*}

Condition~\ref{asp:warm_up} ensures that the aggregation rules in the adopted warm-up algorithm produce robust and nearly consensual estimates before entering the detection phase.
It can also be inferred that the maximal deviation of any individual node $j$ from the consensus mean is bounded by $\mathcal{O}(1/k_0^{(1-\delta)})$ with probability at least $1-\tau_{w}$.

\subsection{Detection Phase}

To address the inherent steady-state error in decentralized robust aggregation, we explicitly identify and isolate Byzantine machines in a statistically principled manner.
For each normal machine $\mathcal{M}_{i}$ ($i\in \mathcal{G}$), we define the set of its normal neighbors and Byzantine neighbors as $\Gcal_i:= \Gcal \cap \Ncal_i$ and $\Bcal_{i}:= \Bcal \cap \Ncal_i$, respectively.
Identifying its Byzantine neighbors $\mathcal{B}_i$ can be formulated as a local multiple hypothesis testing problem:
\begin{equation*}
\mathbb{H}_{0j}: j \in \mathcal{G}_i \quad \text{versus}\quad  \mathbb{H}_{1j}: j \in \mathcal{B}_i, \quad \text{ for each } j\in \Ncal_{i}.
\end{equation*}
Let $\widehat{\mathcal{B}}_{i}$ be the set of rejected (identified) neighbors.
We evaluate the identification performance using the False Discovery Proportion (FDP) and the Sure-Detection Probability ($\mathrm{P}_a$):
\begin{equation}\label{eq:fdptpr}
\text{FDP}(\widehat{\mathcal{B}}_i) := \frac{\size{\widehat{\mathcal{B}}_i \cap \mathcal{G}}}{\size{\widehat{\mathcal{B}}_i}\vee 1}, \quad \mathrm{P}_{a}(\widehat{\mathcal{B}}_i) := \mathbb{P}(\mathcal{B}_i \subseteq \widehat{\mathcal{B}}_i).
\end{equation}
Our goal is to achieve $\mathrm{P}_a \approx 1$ (to eliminate Byzantine bias) while controlling FDP (to maintain connectivity among normal nodes), thereby enabling exact convergence.
To this end, we propose a decentralized Byzantine machine identification (ByMI) procedure: we construct test statistics $S_{i,j}$ via sample splitting and calibrate the decision threshold using a data-driven empirical null distribution rather than an asymptotic approximation, thereby improving finite-sample reliability in high-dimensional settings. For notational convenience, we denote $(\btheta_{1, 0}, \ldots, \btheta_{m, 0}):= (\btheta_{1, k_0}^{\mathrm{W}}, \ldots, \btheta_{m, k_0}^{\mathrm{W}})$.
The detailed steps are as follows:

\textbf{Splitting:} For each machine $\mathcal{M}_i$, we independently and randomly split $\Scal^{\mathrm{D}}_i$ into two sub-batches, denoted by $\xi_{i, 0}^{(1)}$ and $\xi_{i, 0}^{(2)}$, each of size $n/2$.
Notice that the selection of the two sub-batches is independent of $\Scal^{\mathrm{W}}_i$ and the generation of $\btheta_{i, 0}$.
We compute the stochastic gradients at $\btheta_{i, 0}$ associated with $\xi_{i, 0}^{(1)}$ and $\xi_{i, 0}^{(2)}$, respectively.
\begin{equation}\label{eq:gradients}
\gbf_{i} (\btheta_{i, 0}; \xi_{i, 0}^{(h)}) :=\frac{1}{\size{\xi_{i, 0}^{(h)}}} \sum\limits_{\sbf \in \xi_{i, 0}^{(h)}} \nabla \ell(\btheta_{i, 0}; \sbf),\,\quad h=1,2.
\end{equation}

\textbf{Communication:} Each machine $\Mcal_i$ broadcasts $\set{\gbf_{i} (\btheta_{i, 0}; \xi_{i, 0}^{(h)})}_{h=1,2}$ and $\btheta_{i, 0}$ to its neighbors $\Ncal_{i}$ and receives $\set{\gbf_{j} (\btheta_{j, 0}; \xi_{j, 0}^{(h)})}_{h=1,2}$ and $\btheta_{j, 0}$ from each $j \in \Ncal_{i}$.

\textbf{Scoring:} Each machine $\Mcal_i$ obtains a robust mean estimate $\widehat\gbf_{i}$ by $\{\gbf_{j} (\btheta_{j, 0}; \xi_{j, 0}^{(1)})\}_{j\in \Ncal_i}$ and constructs a test score
\begin{equation}\label{eq:score}
  S_{i,j}=(\gbf_{j} (\btheta_{j, 0}; \xi_{j, 0}^{(1)})-\widehat\gbf_{i})^{\top}{\bm \Omega}_{i}(\gbf_{j} (\btheta_{j, 0}; \xi_{j, 0}^{(2)})-\widehat\gbf_{i}), j\in \Ncal_i.
\end{equation}
Here, the robust mean estimator can be computed using various methods, such as median-type algorithms \citep{su2019securing, yin2018byzantine} or dimension-agnostic algorithms \citep{diakonikolas2017being, zhu2022robust}. ${\bm \Omega}_i$ serves as a rough scale estimator for normalization or a projection matrix in conjunction with projection-based robust mean estimators.
More importantly, ${\bm \Omega}_i$ can depend only on the random variables in $\xi_{i, 0}^{(1)}$.
For $j \in \mathcal{G}$, $\gbf_{j} (\btheta_{j, 0}; \xi_{j, 0}^{(2)})-\widehat\gbf_{i}$ is approximately normally distributed due to the central limit theorem and the independence between $\xi_{j, 0}^{(1)}$ and $\xi_{j, 0}^{(2)}$.
This fact indicates that $S_{i,j}$ enjoys the (asymptotic) symmetry with mean zero.
In contrast, for $j \in \mathcal{B}$, the mean of $S_{i,j}$ is a large positive value, which depends on the difference between distributions $\mathcal{P}$ and $\mathcal{Q}_{j}$.

\textbf{Thresholding:} Each machine $\Mcal_i$  chooses the threshold $R_i$ by

\begin{equation}\label{eq:thold}
R_i:= \inf \biggl\{r>0: \frac{\size{\{j \in \Ncal_i: S_{i,j} \leq-r\}}}{\size{\{j \in \Ncal_i: S_{i,j} \geq r\}} \vee 1} \leq \alpha\biggr\},
\end{equation}
where $\alpha$ is a prespecified target significance level, and
identify the set of Byzantine machines as $\widehat{\mathcal{B}}_{i}:= \{j\in \Ncal_{i} : S_{i,j} \geq R_i \}.$ If the set is empty, we set $R_i:= +\infty$.
Intuitively,
$\size{\{ j : S_{i, j} \le -r \}}$ provides an upper bound for $\size{\{ j : S_{i, j} \le -r,\, \mathcal{M}_j \in \mathcal{G} \}},$ which in turn serves as a good approximation of
$\size{\{ j : S_{i, j} \ge r,\, \mathcal{M}_j \in \mathcal{G} \}}$, i.e., the number of false discoveries, according to the symmetry of \(S_{i, j}\) for all normal machines.
Consequently, the fraction in \eqref{eq:thold} constitutes an estimate of the FDP.

Compared with traditional mean tests, the test statistic $S_{i,j}$ does not rely on the $p$-values from the asymptotic distribution.
Moreover, conditioned  on $\{\xi_{j, 0}^{(1)}\}_{j \in \Ncal_i}$,  $S_{i,j}$ can be viewed as a univariate projection of $\gbf_{j} (\btheta_{j, 0}; \xi_{j, 0}^{(2)})-\widehat\gbf_{i}$, and therefore exhibits asymptotic symmetry, regardless of the gradient dimension $d$.

\subsection{Optimization Phase}

To begin with, we convert the original undirected graph $\mathtt{G}=(\mathcal{V}, \mathcal{E})$ into a directed graph by treating each undirected edge as two directed arcs with opposite directions.
Based on the results in the detection phase, each normal machine $\mathcal{M}_i$ removes the in-arcs from $\widehat{\mathcal{B}}_{i}$, while each Byzantine machine $\mathcal{M}_j$ arbitrarily deletes some in-arcs from its neighbors $\Ncal_j$.
After this, we get a pruned directed graph denoted by $\mathtt{G}'=(\mathcal{V}, \mathcal{E}')$.

An ideal scenario is that each normal machine successfully identifies all of its Byzantine neighbors, that is, $\mathcal{B}_i= \widehat{\mathcal{B}}_{i},$ for all $i \in \mathcal{G}$.
In this case, $(\mathcal{G}, \mathcal{E}'|_{\mathcal{G}})= (\mathcal{G}, \mathcal{E}|_{\mathcal{G}})$ constitutes the largest strongly connected component of $\mathtt{G}'$ and has no in-arcs from other components, in accordance with Condition~\ref{asp:scc_graph} and the Byzantine ratio condition $\varrho< \frac{1}{2}$.
Executing decentralized optimization over $\mathtt{G}'$ is essentially equivalent to solving a collection of decentralized subproblems defined on its strongly connected components that have no external in-arcs.
Among these subproblems, the one associated with the largest strongly connected component $(\mathcal{G}, \mathcal{E}'|_{\mathcal{G}})$ takes the form
\begin{equation}\label{obj:dcentral_homo}
  \begin{aligned}
  \min_{\btheta_{i}\in \mathbb{R}^{d}, i\in \mathcal{G}} &~ \sum\limits_{i \in \mathcal{G}} f_i(\btheta_i)\\
  \text{s. t.} &~ \btheta_{i} = \btheta_{j}, (i,j)\in \mathcal{E}'|_{\mathcal{G}}.\\
  \end{aligned}
\end{equation}
This problem coincides with the decentralized empirical risk optimization problem~\eqref{obj:dcentral_all}.

To approach this ideal scenario as closely as possible, we require that the detection phase accurately identifies all Byzantine neighbors (i.e., achieves a $\mathrm{P}_a$ of $100\%$) with high probability, while misidentifying only a small number of non-Byzantine neighbors (i.e., maintaining a low FDP or false discovery number) with high probability.
In Section \ref{sec:gua}, we will analyze the control of FDP and $\mathrm{P}_a$ and discuss how these criteria influence the probability that $(\mathcal{G}, \mathcal{E}'|_{\mathcal{G}})$ forms a strongly connected component without external in-arcs.

We now turn our attention to optimizing over the directed graph $\mathtt{G}'$.
Our first step is to locally adjust the weights to ensure that the mixing matrix is row-stochastic.
Let ${\Wbf}^{\text{old}}$ denote the original mixing matrix, and define the updated weights ${\Wbf}(i,j)$ at each $\mathcal{M}_i$ as follows:
\begin{equation} \label{eq:reassign}
  {\Wbf}(i,j) = \begin{cases}
  0, & j\in \widehat{\mathcal{B}}_{i},  \\
  \frac{{\Wbf}(i,j) ^{\text{old}}}{\sum\limits_{j \in \Ncal_{i} \setminus \widehat{\mathcal{B}}_{i}}{\Wbf}(i,j) ^{\text{old}}}, & j\in  \Ncal_{i}^{\text{in}}\setminus \widehat{\mathcal{B}}_{i},\\
  0, & \text{ otherwise.}\\
  \end{cases}
\end{equation}

Motivated by the approach in \citep{mai2016distributed}, we introduce an auxiliary variable  ${\ybf}_{i}$ for each node $i$, and update ${\ybf}_{i}$ by
\begin{equation}\label{eq:eupdate}
\ybf_{i, k+1}  =  \sum\limits_{j \in \Ncal_{i}^{\text{in}} } {\Wbf}(i,j)\ybf_{j, k}
\end{equation}
to track the dominant left eigenvector of ${\Wbf}$.
Notice that the entries of $\mathbf{y}_{i,k}$ corresponding to the nodes in the largest strongly connected component also track the dominant left eigenvector $\vbf_1$ of the submatrix of $\Wbf$ associated with this connected component.
Then, we use $\frac{1}{[{\ybf}_{i, k}]_i}$ to scale the stochastic gradient descent step in the $k$-th local update,
 \begin{equation}\label{eq:local_dsgdrs}
  \begin{aligned}
    {\bm \theta}_{i, k+1} & =  \sum\limits_{j \in \Ncal_{i}^{\text{in}} }{\Wbf}(i,j){\bm \theta}_{j, k} -  \eta_{k} \frac{\gbf_i({\bm \theta}_{i, k}; \xi_{i, k})}{[\ybf_{i, k+1}]_{i}},
    \end{aligned}
  \end{equation}
where $\gbf_i(\btheta_{i, k}; \xi_{i, k}):= \frac{1}{\size{\xi_{i, k}}} \sum_{\sbf \in \xi_{i, k}} \nabla \ell(\btheta_{i, k}; \sbf)$ is a mini-batch stochastic gradient, defined in the same way as in Section \ref{sec:pre}.
Without loss of generality, we assume the first $m_g$ nodes are normal nodes, and define $\tilde{{\bm \theta}}_{k}:=[{\bm \theta}_{1,k}, \ldots, {\bm \theta}_{m_g,k}]\vbf_1$ as the weighted average over the largest strongly connected component.
The update formula for $\{\tilde{{\bm \theta}}_{k}\}$ is derived from \eqref{eq:local_dsgdrs},
\begin{equation*}
  \tilde{{\bm \theta}}_{k+1}  =   \tilde{{\bm \theta}}_{k} - \eta_k \sum\limits_{i=1}^{m_g} \frac{[\vbf_1]_i}{[\ybf_{i, k+1}]_{i}}  \gbf_i(\btheta_{i, k}; \xi_{i, k}).
\end{equation*}
Here, $[\ybf_{i, k+1}]_{i}$ serves to counterbalance the effect of
$[\vbf_1]_i$. Thus, the rescaled stochastic gradient is expected to move $\{\tilde{{\bm \theta}}_{k}\}$ toward a stationary point of the problem \eqref{obj:dcentral_homo}.

\section{Theoretical Guarantees}\label{sec:gua}

This section provides statistical guarantees and convergence analyses of our proposed method DRSGD-ByMI.
We summarize the relationship between the conditions and theories across the different phases in Figure~\ref{fig:framework2}. A key departure from prior literature is our analytical focus: the challenge shifts from bounding the bias of robust rules to ensuring the preservation of a strongly connected component among honest nodes. We first analyze the identification accuracy and then demonstrate how the pruned network maintains sufficient connectivity to support exact convergence.

\begin{figure}[htbp]
    \centering
   \includegraphics[width=\textwidth]{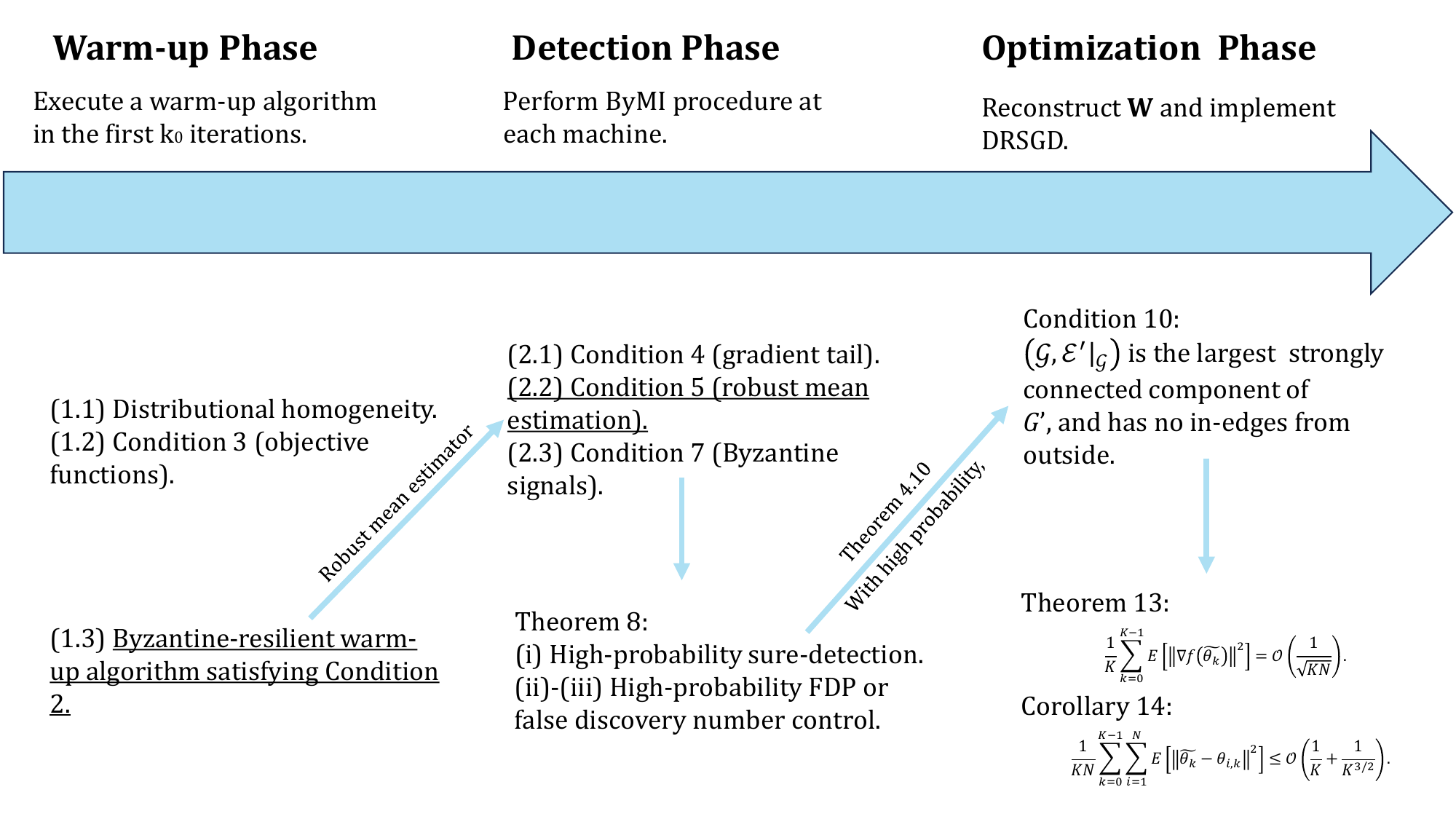}
   \caption{The condition–theorem framework of DRSGD-ByMI.}
  \label{fig:framework2}
\end{figure}

\subsection{Conditions on Objective Functions}

To begin with, let $(\Omega,\mathcal{F},\mathbb{P})$ denote the probability space associated with the per-iteration sampling randomness in the optimization phase. Note that it is independent of (i) the randomness for splitting $\{\Scal^{\mathrm{W}}_i\}_{i \in \Vcal}$ and $\{\Scal^{\mathrm{D}}_i\}_{i \in \Vcal}$, (ii) the randomness in the warm-up phase, and (iii) the randomness for splitting in the detection phase.  

\begin{condition}\label{asp:loss}
\begin{enumerate}[(i)]
\item (Bounded below) For any $\btheta\in \mathbb{R}^d$,  $f(\btheta)\geq f^\ast$.

\item (Lipschitz smooth) For any sample $\sbf \sim \mathcal{P}$, the loss function $\ell(\btheta; \sbf)$ is Lipschitz smooth with probability 1, i.e., there exists $L>0$, such that for any $\xbf, \ybf \in \mathbb{R}^d$,
\begin{equation*}
\norm{\nabla \ell(\xbf; \sbf) - \nabla \ell(\ybf; \sbf)} \leq L \norm{\xbf -\ybf}.
\end{equation*}

\item (Bounded variance) For any $\btheta\in \mathbb{R}^d$, there exists $\sigma_0>0$ such that
\begin{equation*}
\mathbb{E}_{\sbf \sim \mathcal{P}} \bigl[ \left\| \nabla \ell(\btheta; \sbf) - \mathbb{E}_{\sbf \sim \mathcal{P}} [\nabla \ell(\btheta; \sbf)] \right\|^2 \bigr] \leq \sigma_0^2.
\end{equation*}
Moreover, there exists $\sigma_i >0, i \in \Vcal$ such that
\begin{equation*}
\mathbb{E} \bigl[ \left\| \gbf_{i}(\btheta; \xi_{i, k})  -  \nabla f_i(\btheta) \right\|^2 | \mathcal{F}_{k} \bigr] \leq \sigma_i^2, \text{ for } k\in \mathbb{N}.
\end{equation*}

\end{enumerate}
\end{condition}

Condition~\ref{asp:loss}(ii) indicates that, during the detection phase, for any normal node $j \in \Gcal_i$ and $h=1,2$,
\[
\| \gbf_{j} (\btheta_{j, 0}; \xi_{j, 0}^{(h)}) - \gbf_{j}(\bar{\btheta}_{0}; \xi_{j, 0}^{(h)}) \|^2
\le L^2 \| \btheta_{j, 0} - \bar{\btheta}_{0} \|^2.
\]
Applying \eqref{eq:warmup}, we obtain
\[
\| \gbf_{j} (\btheta_{j, 0}; \xi_{j, 0}^{(h)}) - \gbf_{j}(\bar{\btheta}_{0}; \xi_{j, 0}^{(h)}) \|^2
\le L^2 \sup_{j \in \Gcal}\| \btheta_{j, 0} - \bar{\btheta}_{0} \|^2
= \mathcal{O}\!\biggl(\frac{L^2}{k_0^{1-\delta}}\biggr),
\]
with probability at least $1-\tau_w$.
This implies that as the parameters $\btheta_{j, 0}$ converge, the gradients computed at each normal node act as unbiased estimators of the gradient at $\bar{\btheta}_{0}$ with a diminishing bias term controlled by $\mathcal{O}(1/k_0^{(1-\delta)})$.
This concentration ensures that each $\mathcal{M}_i, i \in \mathcal{G}$ produces desired gradient estimates and helps the detection of the Byzantine neighbors.

Condition~\ref{asp:loss} gives that
\begin{equation*}
\mathbb{E}_{\sbf\sim \Pcal}\norm{\nabla f_i(\btheta)- \mathbb{E}_{\sbf \sim \mathcal{P}} [\nabla \ell(\btheta; \sbf)]}^2 \leq  \frac{\sigma_0^2}{N_i}.
\end{equation*}
Let $N_0:= \min\{N_i: i\in \Gcal\}$.
We have
\begin{equation*}
\mathbb{E}_{\sbf\sim \Pcal}\Bigl[\max_{i \in \Gcal} \norm{\nabla f_i(\btheta)- \mathbb{E}_{\sbf \sim \mathcal{P}} [\nabla \ell(\btheta; \sbf)]}^2\Bigr] \leq  \frac{\sigma_0^2}{N_0},
\end{equation*}
which implies that
\begin{equation}\label{eq:heter}
\mathbb{E}_{\sbf\sim \Pcal}\biggl[\max_{i \in \Gcal} \Bignorm{\nabla f_i(\btheta)- \frac{1}{m_g} \sum\limits_{i\in \Gcal} \nabla f_i(\btheta)}^2\biggr] \leq  \frac{4\sigma_0^2}{N_0}.
\end{equation}
Inequality~\eqref{eq:heter} provides an upper bound on the gradient heterogeneity among the normal nodes, measured by the maximum deviation of each local gradient from the average normal gradient.

\subsection{Statistical Guarantees for Detection Phase}\label{sec:gua_detect}

In this section, we establish the statistical guarantees for the proposed decentralized ByMI procedure.
We focus our analysis on a representative normal machine $\mathcal{M}_i$ ($i \in \Gcal$) and its neighbors $\Ncal_i$.

Crucially, our sample-splitting strategy ensures independence: $\widehat{\gbf}_{i}$ and $\bm{\Omega}_i$ are estimated using samples from $\xi_{i, 0}^{(1)}$, while $\gbf_{i} (\btheta_{i, 0}; \xi_{i, 0}^{(2)})$ is computed from $\xi_{i, 0}^{(2)}$.
Consequently, $\gbf_{i} (\btheta_{i, 0}; \xi_{i, 0}^{(2)})$ is independent of both $\widehat{\gbf}_{i}$ and $\bm{\Omega}_i$.
We first introduce the moment conditions for the local gradient estimators to ensure the symmetry of scores $\set{S_{i, j}}_{j \in \Gcal_i}$.
For notational convenience, we denote $\mathbf{g}_j^\ast := \mathbb{E}_{\sbf \sim \mathcal{P}} [\nabla \ell(\btheta_{j, 0}; \sbf)]$ for the rest of this section.

\begin{condition}[Gradient Tail] \label{cond:grad_tail}
The sample gradient $\nabla \ell(\btheta_{j, 0}; \sbf)$ with $\sbf \sim \mathcal{P}$ has bounded $q$-th centered moments for some $q > 2$.
Furthermore, it satisfies the $L_q$-$L_2$-norm equivalence condition with parameter $\gamma_q$:
  \begin{equation*}
    \max_{\vbf \in \mathbb{S}^{d - 1}} \frac{(\mathbb{E}_{\sbf \sim \mathcal{P}}[|\vbf^\top (\nabla \ell(\btheta_{j, 0}; \sbf) - \gbf_j^\ast)|^q])^{\frac{1}{q}}}{(\mathbb{E}_{\sbf \sim \mathcal{P}}[|\vbf^\top (\nabla \ell(\btheta_{j, 0}; \sbf) - \gbf_j^\ast)|^2])^{\frac{1}{2}}} \le \gamma_q.
\end{equation*}
\end{condition}

\begin{condition}[Robust Mean Estimation] \label{cond:robust_mean_est}
We assume that with probability at least $1 - \tau_{\gbf}$, the robust mean estimator satisfies
  \begin{equation*}
    \sup_{j \in \Gcal_i} \|\widehat{\gbf}_{j} - \gbf_j^\ast\| \le \delta_{\gbf},
\end{equation*}
  where $\delta_{\gbf}$ and $\tau_{\gbf}$ diminish to zero as $n, m, k_0 \to \infty$.
\end{condition}

Under Condition~\ref{cond:grad_tail} and Condition~\ref{asp:warm_up} in the warm-up phase, Condition~\ref{cond:robust_mean_est} holds for popular robust mean estimators.
This is formalized in the following proposition.

\begin{prop}\label{prop:robust_est}
Suppose the warm-up phase satisfies Condition~\ref{asp:warm_up}.
By applying a robust mean estimator (e.g., coordinate-wise median or Filtering \citep{diakonikolas2017being}) to produce gradient estimates $\{\widehat{\gbf}_j\}$, we have with probability at least $1-\tau_{\gbf}$,
$$
 \sup_{j \in \Gcal_i} \|\widehat{\gbf}_j - \gbf_j^\ast\| = \mathcal{O}(\delta_{\gbf}),
$$
where $\delta_{\gbf}$ and $\tau_{\gbf}$ diminish to zero as $n, m, k_0 \to \infty$.
\end{prop}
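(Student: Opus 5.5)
We decompose the error of the robust estimator $\widehat{\gbf}_j$, computed from $\{\gbf_{l}(\btheta_{l,0};\xi_{l,0}^{(1)})\}_{l\in\Ncal_j}$, into three parts. The first is the stochastic fluctuation of each normal input around its population mean. The second is the drift of the population means $\gbf_l^\ast$ across normal nodes caused by the nonzero consensus error after warm-up. The third is the deterministic breakdown error of the robust aggregator under a minority of arbitrary inputs. We work on the intersection of the event in Condition~\ref{asp:warm_up} (probability $1-\tau_w$) with a concentration event for the sub-batch gradients, and take a union bound over $j\in\Gcal_i$.

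\textbf{Step 1: reduce to a common center.} For each normal $l$, write
\[
\gbf_{l}(\btheta_{l,0};\xi_{l,0}^{(1)}) - \bar{\gbf}^\ast = \bigl(\gbf_{l}(\btheta_{l,0};\xi_{l,0}^{(1)}) - \gbf_l^\ast\bigr) + \bigl(\gbf_l^\ast - \bar{\gbf}^\ast\bigr),
\]
where $\bar{\gbf}^\ast := \mathbb{E}_{\sbf\sim\Pcal}[\nabla\ell(\bar{\btheta}_0;\sbf)]$. By Condition~\ref{asp:loss}(ii) and \eqref{eq:warmup}, on the warm-up event $\sup_{l\in\Gcal}\|\gbf_l^\ast-\bar{\gbf}^\ast\|\le L\sup_l\|\btheta_{l,0}-\bar{\btheta}_0\| = \mathcal{O}(k_0^{-(1-\delta)/2})$, using the maximal-deviation remark after Condition~\ref{asp:warm_up}. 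The same bound controls $\|\gbf_j^\ast-\bar{\gbf}^\ast\|$, so it suffices to bound $\|\widehat{\gbf}_j-\bar{\gbf}^\ast\|$.

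\textbf{Step 2: concentration of normal inputs.} The sample-splitting event is independent of the warm-up randomness, so conditionally on $\btheta_{l,0}$ the vector $\gbf_l(\btheta_{l,0};\xi_{l,0}^{(1)})$ is an average of $n/2$ i.i.d.\ terms with mean $\gbf_l^\ast$. Condition~\ref{cond:grad_tail} ($q$-th moments with $L_q$--$L_2$ equivalence $\gamma_q$) and the variance bound $\sigma_0^2$ apply to these terms. For the coordinate-wise median, apply a Rosenthal/Fuk--Nagaev-type bound to each coordinate. For Filtering, use the stability/resilience condition of \citep{diakonikolas2017being}: a $(1-\varrho)$ fraction of the normal vectors has empirical mean within $\mathcal{O}(\sigma_0\sqrt{\varrho/n})$ and bounded empirical covariance. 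By Markov, this holds with probability $1-\mathcal{O}(\text{poly}(m,d)\,n^{-(q-2)/2})$ or similar. The resulting deviation is $\delta_1=\mathcal{O}(\sigma_0\sqrt{d/n})$, or its dimension-aware analogue for the chosen estimator.

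\textbf{Step 3: invoke the robust estimator's guarantee.} Condition on Steps 1--2. The normal inputs are then a good sample centered at $\bar{\gbf}^\ast$, up to a perturbation of size $\mathcal{O}(k_0^{-(1-\delta)/2})$, and at most a fraction below $1/2$ of the inputs are Byzantine. The known deterministic guarantees apply: the median is controlled by $\delta_1$ plus the spread, and Filtering gives error $\mathcal{O}(\sigma_0\sqrt{\varrho}/\sqrt{n})$ plus perturbation. This yields
\[
\delta_{\gbf}=\mathcal{O}\bigl(\delta_1+k_0^{-(1-\delta)/2}\bigr)\quad\text{and}\quad \tau_{\gbf}=\tau_w+(\text{concentration failure})\times m,
\]
where the factor $m$ comes from the union bound over $j$. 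Both quantities vanish as $n,m,k_0\to\infty$.

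\textbf{Main obstacle.} The hard part is applying the robust-estimation result locally. Step 3 needs $|\Bcal_j|<|\Ncal_j|/2$ in each neighborhood, which the global condition $\varrho<1/2$ does not ensure. We would add an explicit local Byzantine-fraction requirement in the spirit of Condition~\ref{asp:scc_graph}. A second difficulty is that the union bound over $j$ and $m$ must be compatible with only polynomial tails. We would handle this by letting $n$ grow fast enough relative to $m$.
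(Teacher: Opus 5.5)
Your outline is sound and has the same skeleton as the paper's proof. There is a clean robust-estimation error, plus an $\mathcal{O}(Lk_0^{-(1-\delta)/2})$ bias from the warm-up consensus error via smoothness, and the bias is absorbed through the aggregator's stability under small per-input perturbations.

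Where you differ is in how you build the clean sample. The paper couples pathwise: it re-evaluates each neighbor's sub-batch at node $i$'s own parameter. The proxies $\gbf_j(\btheta_{i,0};\xi_{j,0}^{(1)})$ are then exactly i.i.d. with mean $\gbf_i^\ast$, and each normal input lies within $L\|\btheta_{j,0}-\btheta_{i,0}\|$ of its proxy. You instead keep the true noise and recenter at $\bar{\gbf}^\ast$. This needs only Lipschitz continuity of the population gradient. However, your clean inputs are independent but not identically distributed, while the high-probability stability results you would cite are stated for i.i.d. samples. Coupling at $\bar{\btheta}_0$ the way the paper does removes this issue.

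The more consequential difference is the concentration step. The stability theorems of Diakonikolas et al.\ used in the paper need only bounded covariance. They give error $\mathcal{O}(\sigma_0\sqrt{\epsilon/n})$ plus a term in $d/(n|\Ncal_i|)$, and depend on the failure probability only through $\log(1/\tau)$. So the union bound over $j\in\Gcal_i$ costs just a $\log m$ inside that error. The paper's single-node argument leaves this union bound implicit, and you are right to make it explicit. Your per-input moment bounds instead give polynomial tails and a $\sqrt{d}$-dependent $\delta_1$, and that is what forces your extra condition that $n$ grow polynomially in $m$. The stability route avoids that condition.

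Your local-majority caveat is correct, and it is not a weakness relative to the paper. The paper's Step 3 assumes the same thing implicitly (``provided the fraction of outliers is below the breakdown point''), so stating it as an explicit hypothesis only clarifies the argument.
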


The justification of Proposition \ref{prop:robust_est} is provided in the Appendix \ref{appendix:Prop4.4}.
Next, we specify the signal strength required to distinguish Byzantine machines from normal ones.

\begin{condition}[Byzantine Signals]\label{cond:signals}
Assume that with probability at least $1 - \tau_{\gbf}$, the following result holds for any $j \in \mathcal{B}_i$:
    $$ \sup_{j' \in \Gcal} \|\gbf_{j'} (\btheta_{j', k_0}; \xi_{j', 0}^{(1)}) - \gbf_{j'}^\ast\|_{\bm{\Omega}} + \delta_{\gbf} \lesssim \|\gbf_{j}^\ast - \gbf_{i}^\ast\|_{\bm{\Omega}}.$$
                                                    \end{condition}

We now present the main theorem concerning the finite-sample FDP control and sure-detection capability.

\begin{theorem}[Sure detection and FDP control]\label{thm:fdp_control}
    Suppose Conditions \ref{cond:grad_tail}, \ref{cond:robust_mean_est}, and \ref{cond:signals} hold. Let $\kappa := \min(1,q - 2)$.

    (i) If either $n \ge \size{\Gcal_i}^{\frac{2}{\kappa - 2 c}}$ or $\size{\Bcal_i} \ge c \max\{1, \size{\Gcal_i} n^{-\frac{\kappa}{2}}\} \log n$ for some constant $0 < c < \frac{\kappa}{2}$, with probability at least $1 - 2\tau_{\gbf} - (C_{\kappa} + 1) \size{\Bcal_i} n^{-\frac{\kappa}{2}} - n^{-c}$, the sure-detection property holds:
    $\mathcal{B}_i \subseteq \widehat{\mathcal{B}}_i.$

    (ii) If $ n \gtrsim (\size{\Gcal_i} / \size{\Bcal_i})^{\frac{2}{\kappa}}$, with probability at least $1 - 2\tau_{\gbf} - \exp(- c \log n) - \exp(-C\size{\Bcal_i}) - 3 \exp(- C (\alpha \size{\Bcal_i})^{\frac{1}{3}})$,
    the \text{FDP} is controlled by:
    \begin{equation*}
      \text{FDP}(\widehat{\mathcal{B}}_i) \le \alpha \Bigl\{1 + \mathcal{O}\Bigl((\alpha \size{\Bcal_i})^{-1/3} + n^{-\frac{(1 - a^2) \kappa}{2}} (\log n)^{\frac{1}{2}} + \size{\Gcal_i} n^{-\frac{a^2 \kappa}{2}} + \delta_{\gbf}\Bigr) \Bigr\}:= \alpha H_{i,n},
    \end{equation*}
    where $a$ is any fixed constant in $(0, 1)$ and $H_{i,n} = 1 + o(1)$.

    (iii) If $\size{\Bcal_i} < \frac{1 - \alpha}{2\alpha} m_f$ with some sufficiently large $m_f$, with probability at least $1 - 2\tau_{\gbf} - 3 \exp(-C m_f^{\frac{1}{3}})$, the false discovery number is controlled by:
    $$\sum\limits_{j \in \Gcal_i} \id\{S_{i, j} \ge R_i\} \le 2 m_f.$$
    When $m_f = (\log m)^3$ and $m$ is sufficiently large, we have with probability at least $1 - 2\tau_{\gbf} - \exp(- C \log m)$,
    $$\sum\limits_{j \in \Gcal_i} \id\{S_{i, j} \ge R_i\} \le 2 (\log m)^3.$$
\end{theorem}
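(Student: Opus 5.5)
We follow the sample-splitting/symmetry argument of \citet{qian2024bymi}, carried out locally at machine $\Mcal_i$. Everything is conditioned on $\Fcal^{(1)}_i := \sigma(\{\xi_{j,0}^{(1)}\}_{j\in\Ncal_i}, \{\btheta_{j,0}\}_{j\in\Ncal_i})$. On the event of Condition~\ref{cond:robust_mean_est} (probability $\ge 1-\tau_{\gbf}$), $\widehat\gbf_i$ and ${\bm\Omega}_i$ are fixed. For $j\in\Gcal_i$ write $\ubf_j := {\bm\Omega}_i(\gbf_j(\btheta_{j,0};\xi^{(1)}_{j,0})-\widehat\gbf_i)$, which is $\Fcal^{(1)}_i$-measurable. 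Then
\[
S_{i,j}=\ubf_j^\top(\gbf_j(\btheta_{j,0};\xi^{(2)}_{j,0})-\gbf_j^\ast)+\ubf_j^\top(\gbf_j^\ast-\widehat\gbf_i),
\]
and the first term is a normalized sum of $n/2$ i.i.d.\ centered scalars.

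\textbf{Step 1 (approximate symmetry of the nulls).} Using the $q$-th moment bound and the $L_q$–$L_2$ equivalence of Condition~\ref{cond:grad_tail}, apply a Berry–Esseen (or nonuniform Berry–Esseen) bound to the projected sum. This should give $\sup_t|\Pbb(S_{i,j}\ge t\mid\Fcal^{(1)}_i)-\Pbb(S_{i,j}\le -t\mid\Fcal^{(1)}_i)|\lesssim C_\kappa n^{-\kappa/2}+\delta_{\gbf}$. The shift term $\ubf_j^\top(\gbf_j^\ast-\widehat\gbf_i)$ is controlled by $\delta_{\gbf}$ relative to the standard deviation $\|\ubf_j\|_{\Sigma}/\sqrt n$. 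The heterogeneity of $\gbf_j^\ast$ across $j$ is absorbed through Condition~\ref{asp:warm_up} and Lipschitz smoothness. The null scores are conditionally independent across $j$, so the counts $V^\pm(t):=\sum_{j\in\Gcal_i}\id\{\pm S_{i,j}\ge t\}$ are sums of independent Bernoullis with nearly equal means.

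\textbf{Step 2 (sure detection, part (i)).} For $j\in\Bcal_i$, Condition~\ref{cond:signals} gives $\Ebb[S_{i,j}\mid\Fcal^{(1)}_i]\gtrsim\|\gbf_j^\ast-\gbf_i^\ast\|^2_{\bm\Omega}$. This dominates the fluctuation of order $\|\cdot\|_{\bm\Omega}$ times the $n^{-1/2}$ noise, so by Markov with $q$-th moments $\Pbb(S_{i,j}\le r^\ast)\le C_\kappa n^{-\kappa/2}$ for a level $r^\ast$ of order $\max_{j'\in\Gcal_i}|S_{i,j'}|$. A union bound over $\Bcal_i$ produces the $(C_\kappa+1)|\Bcal_i|n^{-\kappa/2}$ term. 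It then remains to show $R_i\le r^\ast$, which requires $|\{j: S_{i,j}\le -r^\ast\}|\le\alpha|\Bcal_i|$. Under $n\ge|\Gcal_i|^{2/(\kappa-2c)}$, the expected number of nulls beyond $r^\ast$ is at most $|\Gcal_i|n^{-\kappa/2}\le n^{-c}$, so with probability $1-n^{-c}$ there are none. In the alternative regime, $|\Bcal_i|\ge c\max\{1,|\Gcal_i|n^{-\kappa/2}\}\log n$, so a Chernoff bound shows the negative count is small relative to $\alpha|\Bcal_i|$. Either way, every $j\in\Bcal_i$ has $S_{i,j}\ge r^\ast\ge R_i$.

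\textbf{Step 3 (FDP and false discovery number, parts (ii)–(iii)).} Bound
\[
\mathrm{FDP}\le\frac{V^+(R_i)}{V^-(R_i)}\cdot\frac{V^-(R_i)}{|\widehat\Bcal_i|\vee1}\le\alpha\sup_{t\le r^\ast}\frac{V^+(t)}{V^-(t)\vee1}.
\]
Restrict $t$ to the range where $\Ebb V^-(t)\gtrsim\alpha|\Bcal_i|$; this range is nonempty on the sure-detection event of Step 2. Discretize it into a grid, and on each grid point use a Bernstein/Chernoff bound for the ratio. With the symmetry error from Step 1 this yields $1+\mathcal{O}((\alpha|\Bcal_i|)^{-1/3})$ with failure probability $3\exp(-C(\alpha|\Bcal_i|)^{1/3})$; the cube root comes from balancing the grid size against concentration. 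The $n^{-(1-a^2)\kappa/2}(\log n)^{1/2}$ and $|\Gcal_i|n^{-a^2\kappa/2}$ terms come from splitting the thresholds at $t=n^{a}$-scale into a moderate region, handled by Berry–Esseen with a log factor, and a tail region, handled by a union bound. The condition $n\gtrsim(|\Gcal_i|/|\Bcal_i|)^{2/\kappa}$ makes the tail contribution negligible. Part (iii) uses the same ratio argument with $m_f$ in place of $\alpha|\Bcal_i|$: if $V^+(R_i)>2m_f$, then $V^-(R_i)\gtrsim m_f$, which contradicts $V^-(R_i)\le\alpha(|\Bcal_i|+V^+(R_i))$ whenever $|\Bcal_i|<\frac{1-\alpha}{2\alpha}m_f$. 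Taking $m_f=(\log m)^3$ then gives the stated bound.

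\textbf{Main obstacle.} The hardest step is the uniform-in-$t$ ratio concentration in Step 3 at the data-dependent threshold $R_i$. The null scores are only approximately symmetric and have heterogeneous scales $\|\ubf_j\|$, so $V^+$ and $V^-$ cannot be coupled exactly. I would handle this by conditioning on $\Fcal^{(1)}_i$, normalizing each score by its conditional standard deviation inside the Berry–Esseen bound, and applying a peeling/grid argument to the empirical processes $V^\pm(t)$. If that proves too loose, I would fall back on a martingale argument in the style of knockoffs, with symmetry replaced by the Step 1 error bounds.
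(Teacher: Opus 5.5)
Your architecture is the paper's. You condition on the first split. Part (i) uses a null-maximum threshold, a union bound over $\Bcal_i$, and the $n$-versus-$|\Bcal_i|$ dichotomy. Symmetry comes from a relative-error normal approximation in the moderate range plus a tail region; the split is at about $a\{2\log(1/L_n)\}^{1/2}$ conditional standard deviations, not at $t\asymp n^{a}$. Concentration comes from a geometric grid with Bennett bounds and $u=\xi\asymp v^{-1/3}$, and part (iii) is a contradiction with the definition of $R_i$.

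The genuine gap is in Step 3. The uniform ratio estimates only hold on $\{t:|\Gcal_i|F_{S,-}(t)\gtrsim\alpha|\Bcal_i|\}$, so you must show that the data-dependent $R_i$ lies in this range. That the range is nonempty is irrelevant. Sure detection does not give it, because it yields only $R_i\le r^\ast$. At the null-maximum level $r^\ast$, the conditional mean of $V^-(r^\ast)$ is $\lesssim|\Gcal_i|n^{-\kappa/2}$. Under the very hypothesis $n\gtrsim(|\Gcal_i|/|\Bcal_i|)^{2/\kappa}$, this is \emph{smaller} than $\alpha|\Bcal_i|$, so $r^\ast$ lies outside the good range. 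Near $r^\ast$ the counts $V^\pm(t)$ are $O(1)$, so $\sup_{t\le r^\ast}V^+(t)/(V^-(t)\vee1)$ is not $1+o(1)$, and your displayed bound does not close.

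The missing idea is the paper's intermediate level $t_2:=F_{S,-}^{-1}\bigl((\alpha-2(1+\alpha)c_b)|\Bcal_i|/(2|\Gcal_i|)\bigr)$. The hypothesis $n\gtrsim(|\Gcal_i|/|\Bcal_i|)^{2/\kappa}$ is used precisely to get $(C_\kappa+1)|\Gcal_i|n^{-\kappa/2}\le(\alpha-2(1+\alpha)c_b)|\Bcal_i|/2$, i.e.\ $t_2\le t_1$. It does not make the tail term negligible; indeed $|\Gcal_i|n^{-a^2\kappa/2}$ stays in $H_{i,n}$. The argument then runs as follows:
\begin{enumerate}
\item A Bennett bound shows at most $2c_b|\Bcal_i|$ Byzantine scores fall below $t_1$, with failure probability $n^{-c}$. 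So there are at least $(1-2c_b)|\Bcal_i|$ positives at $t_2$.
\item A Chernoff bound gives at most $(\alpha+2c_b)|\Bcal_i|-2(1+\alpha)c_b|\Bcal_i|$ negatives at $t_2$.
\item Hence the estimated FDP at $t_2$ is at most $\alpha$, so $R_i\le t_2$.
\item Only now do the two uniform lemmas apply, with $v\asymp\alpha|\Bcal_i|$.
\end{enumerate}

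This needs only \emph{most} Byzantines above $t_1$. Leaning on the full sure-detection event of Step 2, as you do, would import part (i)'s extra hypotheses and its $(C_\kappa+1)|\Bcal_i|n^{-\kappa/2}$ failure term, neither of which appears in (ii).

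The same localization is needed in (iii). First, concentration at $t_3:=F_{S,+}^{-1}(m_f/|\Gcal_i|)$ shows that $V^+(R_i)>2m_f$ forces $R_i<t_3$. Then you need $V^-(R_i)\ge(1-o(1))V^+(R_i)$ to contradict $V^-(R_i)\le\alpha(|\Bcal_i|+V^+(R_i))$. A bound $V^-(R_i)\gtrsim m_f$ is not enough, since $V^+(R_i)$ may be much larger than $m_f$.
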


\begin{rmk}
In the probability bound of Theorem~\ref{thm:fdp_control} (ii), we require $\size{\Bcal_i}$ to be sufficiently large (e.g., $\size{\Bcal_i} \ge (\log m)^3$) to ensure that the FDP control holds uniformly for all normal nodes $i \in \Gcal$. By combining this with Theorem~\ref{thm:fdp_control} (iii), we demonstrate that with high probability, the false pruning of normal edges is strictly bounded. Specifically, when $\size{\mathcal{B}_i}\geq \frac{1-\alpha}{2 \alpha} (\log m)^3$, the false discovery proportion $\text{FDP}(\widehat{\mathcal{B}}_i)$ is well-controlled at level $\alpha(1+o(1))$. Conversely, when $\size{\mathcal{B}_i}< \frac{1-\alpha}{2 \alpha} (\log m)^3$, the absolute number of false discoveries is explicitly bounded by $2(\log m)^3$.
\end{rmk}

\subsection{Convergence Analysis for Optimization Phase}
\revise{
In this section, we discuss how the statistical guarantees in Theorem~\ref{thm:fdp_control} affect the strong connectivity of the sub-graph $(\mathcal{G}, \mathcal{E}'|_{\mathcal{G}})$ within the pruned directed graph $\mathtt{G}'$ with edges $\mathcal{E}'$.
First, we demonstrate that Theorem~\ref{thm:fdp_control} ensures the following strong connectivity property:
\begin{condition}\label{asp:sure_detection2}
\mbox{}
$(\mathcal{G}, \mathcal{E}'|_{\mathcal{G}})$ is the largest strongly connected component of $\mathtt{G}'$, and has no in-arcs from outside.
\end{condition}
Building upon this condition, we subsequently provide a non-asymptotic convergence guarantee for the DRSGD algorithm.
}

To begin with, we introduce a result about the connectivity of Erdős–Rényi random graph
 $\mathtt{G}(m,p)$ based on Theorem 4.1 of \citet{frieze2015introduction}.
The Erdős–Rényi random graph $\mathtt{G}(m,p)$ \citep{erdos1959graph} is generated by independently placing undirected edges between $m$ nodes with probability $p$, which is widely used in the topology of decentralized networks.

\begin{prop}\label{prop:connected}
Let $c(m, p):= mp-\log(m)$, $X_1$ be the number of isolated nodes in $\mathtt{G}(m,p)$.
Then it holds that
\begin{enumerate}[(i)]
\item  $\mathbb{E}(X_{1})\leq \begin{cases}
(1+o(1)) \exp(-c(m,p)), & \text{ when } \lim_{m\to \infty}\frac{c(m, p)}{\log m}< + \infty;\\
(1+p)\exp(-c(m,p)), &   \text{ when } \lim_{m\to \infty}\frac{c(m, p)}{\log m}= + \infty.
\end{cases}$
\item  \begin{equation*}
\begin{aligned}
 & \mathbb{P}(\mathtt{G}(m,p) \text{ is connected})\\
 \geq  & \begin{cases}
1- (1+o(1)) \exp(-c(m,p)) -  \mathcal{O}(m^{o(1) - 1}), & \text{ when } \lim\limits_{m\to \infty}\frac{c(m, p)}{\log m}< + \infty;\\
 1- (1+p) \exp(-c(m,p)) -  \mathcal{O}(m^{- \frac{c(m,p)}{\log m}}), &   \text{ when } \lim\limits_{m\to \infty}\frac{c(m, p)}{\log m}= + \infty.
\end{cases}
\end{aligned}
\end{equation*}
\end{enumerate}
\end{prop}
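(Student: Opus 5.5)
The plan is to follow the first-moment argument behind Theorem 4.1 of \citet{frieze2015introduction}, keeping the error terms explicit instead of passing to the limit.

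\emph{Part (i).} A node is isolated exactly when none of its $m-1$ potential edges is present. Hence
\[
\mathbb{E}(X_1)=m(1-p)^{m-1}\le m e^{-p(m-1)}=e^{p}\,e^{\log m-mp}=e^{p}\exp(-c(m,p)).
\]
If $\lim_{m\to\infty} c(m,p)/\log m<+\infty$, then $mp=O(\log m)$, so $p\to 0$ and $e^{p}=1+o(1)$. This gives the first line.

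In the second regime $p$ need not vanish, and the crude bound $e^{-p(m-1)}$ is too lossy. I would instead write $(1-p)^{m-1}=(1-p)^{m}/(1-p)$ and use $\log(1-p)\le -p-p^2/2$. This gives $m(1-p)^{m-1}\le \exp(-c(m,p))\,e^{-mp^2/2}/(1-p)$. For $mp^2\ge 2\log\frac{1+p}{1-p}$, which holds once $m$ is large unless $p\to0$, this is at most $(1+p)\exp(-c(m,p))$. When $p\to 0$, $e^{p}\le 1+p+O(p^2)$ is absorbed into the same form. I expect only this constant bookkeeping to need care; everything else in (i) is exact.

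\emph{Part (ii).} The graph is disconnected iff some connected component has $k$ vertices with $1\le k\le m/2$. Taking a union bound over the vertex set and a spanning tree of that component (Cayley's $k^{k-2}$), and requiring no edges from the set to its complement, gives
\[
\mathbb{P}(\text{disconnected})\le \mathbb{P}(X_1\ge 1)+\sum_{k=2}^{\lfloor m/2\rfloor}\binom{m}{k}k^{k-2}p^{k-1}(1-p)^{k(m-k)}.
\]
By Markov's inequality, $\mathbb{P}(X_1\ge 1)\le\mathbb{E}(X_1)$, which part (i) already controls. For the sum I would use $\binom{m}{k}\le (em/k)^k$ and $(1-p)^{k(m-k)}\le e^{-pk(m-k)}$. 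Writing $mp=\log m+c$, each term is at most
\[
\frac{1}{p}\,(e\,mp)^k\, e^{-k(\log m+c)(1-k/m)}=\frac{1}{p}\Bigl(e\,mp\, m^{-(1-k/m)}e^{-c(1-k/m)}\Bigr)^k .
\]

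I would split the range of $k$ into two pieces.
\begin{itemize}
\item For $2\le k\le m/\log^2 m$ (or $k\le k_0$ fixed): the factor $m^{k^2/m}$ is $1+o(1)$. The sum is then dominated by $k=2$, which is of order $(\log m)^2 m^{-1}e^{-2c}/p$. In the bounded regime this is $m^{o(1)-1}$. In the regime $c/\log m\to\infty$ the factor $e^{-c}=m^{-c/\log m}$ gives $\mathcal{O}(m^{-c(m,p)/\log m})$.
\item For $m/\log^2 m<k\le m/2$: use $1-k/m\ge 1/2$ to bound each term by $\bigl(e\,mp\,m^{-1/2}e^{-c/2}\bigr)^k/p$. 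This is super-polynomially small because $k$ is large and $mp=O(\log m)$, or, if $mp$ is larger, because $e^{-c/2}$ dominates.
\end{itemize}

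The main obstacle is this tail summation. The delicate point is the intermediate range of $k$, where neither the $k=2$ estimate nor the crude $1-k/m\ge1/2$ bound is obviously sharp enough. I would first try the split at $k=m/\log^2 m$ shown above, and adjust the cut point if that range does not close.
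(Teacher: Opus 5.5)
Your proposal is correct and is essentially the paper's proof: the first-moment bound $\mathbb{E}(X_1)=m(1-p)^{m-1}\le e^{p}\exp(-c(m,p))$ plus Markov, then the Cayley-type union bound over components of order $2\le k\le m/2$, split into small and large $k$. Two details differ: the paper cuts at $k=8$, where the crude estimate $1-k/m\ge 1/2$ already closes the sum because the base is at most $m^{-1/2+o(1)}$ (so the intermediate range you worry about is not delicate), and your $k=2$ term should read $\asymp p\,e^{-2c(m,p)}$ rather than $(\log m)^2m^{-1}e^{-2c}/p$.

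Your extra care in the second regime of (i) is warranted, because the paper's step $e^{p}\le 1+p$ is false. However, your patch is uneven: the condition $mp^2\ge 2\log\frac{1+p}{1-p}$ does hold when $p\to 0$ in that regime but fails when $1-p$ is exponentially small, and ``$1+p+O(p^2)$ absorbed'' is not a valid step. A clean fix is that $g(p)=\log(1+p)-mp-(m-1)\log(1-p)$ has $g(0)=g'(0)=0$ and $g''\ge 0$ for $m\ge 2$, so $(1-p)^{m-1}\le(1+p)e^{-mp}$ and hence $\mathbb{E}(X_1)\le(1+p)\exp(-c(m,p))$ for all $p\in[0,1)$.
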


Next, we establish a finite-sample guarantee for the strong connectivity of $(\mathcal{G}, \mathcal{E}'|_{\mathcal{G}})$.
The proof of Theorem~\ref{thm:connectivity} can be found in Appendix \ref{sec:app2}.

\begin{theorem}\label{thm:connectivity}
        Let $(\mathcal{G}, \mathcal{E}|_{\mathcal{G}})$ be generated as an Erdős–Rényi random graph $\mathtt{G}(m,p)$. Suppose that, with a proper choice of $n$, the following conditions hold: (i) $\mathcal{B}_i \subseteq \widehat{\mathcal{B}}_i$ for any $i \in \Gcal$; (ii) when $\size{\mathcal{B}_i}\geq \frac{1-\alpha}{2 \alpha} (\log m)^3$, $\text{FDP}(\widehat{\Bcal}_i) \le \alpha H_{i,n}$, and when $\size{\mathcal{B}_i} < \frac{1-\alpha}{2 \alpha} (\log m)^3$, $\sum_{j \in \Gcal_i} \id\{S_{i, j} \ge R_i\} \le 2 (\log m)^3$.
    For any $\delta < 1-\beta_0$ with  $\beta_0:= \max\{\frac{4 (\log m)^3}{(m-1)p}, \alpha \max_i\{H_{i,n}\}\}$, denote
    $$c(m,p, \delta):= m[p(1-\beta_0-\delta)] -\log m.$$
Then the sub-graph $(\mathcal{G}, \mathcal{E}'|_{\mathcal{G}})$ forms a strongly connected component of the pruned graph $\mathtt{G}'$ with probability at least
\begin{equation}\label{eq-prob:connected}
\begin{aligned}
\begin{cases}
& 1- 3 \exp(-c(m,p, \delta))
-2m \exp(-\frac{(m-1)p}{8}) \\
&~~ -2m\exp(-\,\frac12 c_0 \delta^2 (m-1)p) - \mathcal{O}\!\bigl(m^{-1+o(1)}\bigr), \text{ when } \lim\limits_{m\to \infty}\frac{c(m,p, \delta)}{\log m}< + \infty;\\
& 1 - (2+p(1-\beta_0-\delta)) \exp(-c(m,p, \delta)) -2m\exp(-\frac{(m-1)p}{8})
\\
&~~ -2m\exp(-\,\frac12 c_0 \delta^2 (m-1)p) -  \mathcal{O}(m^{- \frac{c(m,p, \delta)}{\log m}}), \text{ when } \lim\limits_{m\to \infty}\frac{c(m,p, \delta)}{\log m}= + \infty.
 \end{cases}
\end{aligned}
\end{equation}
In particular, whenever $\lim_{m \to \infty} c(m,p, \delta)= + \infty,$ it holds that
\begin{equation*}
\lim\limits_{n, m\to\infty} \mathbb{P}( (\mathcal{G}, \mathcal{E}'|_{\mathcal{G}})   \text{ is a strongly connected component of } \mathtt{G}') =1.
\end{equation*}
\end{theorem}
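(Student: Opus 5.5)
Under hypothesis (i) every normal node cuts all its Byzantine in-arcs, so the normal set $\mathcal{G}$ has no in-arcs from outside in $\mathtt{G}'$. Because $\varrho<\frac12$, strong connectivity of $(\mathcal{G},\mathcal{E}'|_{\mathcal{G}})$ then already makes it the largest strongly connected component. The whole task is therefore to lower-bound the probability that the normal sub-graph is still strongly connected after each normal node $i$ removes its falsely discovered in-arcs, i.e.\ the arcs from $\widehat{\mathcal{B}}_i\cap\mathcal{G}$.

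The plan is to show that, with high probability, the pruned normal graph contains a random undirected subgraph that stochastically dominates $\mathtt{G}(m,p(1-\beta_0-\delta))$. We then invoke Proposition~\ref{prop:connected}(ii) with $c(m,p,\delta)$ in place of $c(m,p)$. An undirected edge $\{i,j\}$ counts as retained if both arcs $i\to j$ and $j\to i$ survive; connectivity of the retained undirected graph then implies strong connectivity.

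\textbf{Step 1: degree concentration.} A Chernoff bound on the binomial degree $|\mathcal{N}_i|\sim\mathrm{Bin}(m-1,p)$, together with a union bound over $i$, gives $|\mathcal{N}_i|\ge (m-1)p/2$ for all $i$ with probability at least $1-2m\exp(-(m-1)p/8)$. This produces the term $2m\exp(-(m-1)p/8)$ in the bound.

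\textbf{Step 2: fraction of deleted normal arcs per node.} We split by the two cases of hypothesis (ii).
\begin{itemize}
\item In the small-$|\mathcal{B}_i|$ case, at most $2(\log m)^3\le \frac{4(\log m)^3}{(m-1)p}\,|\mathcal{N}_i|$ normal arcs are cut, using Step 1.
\item In the large case, FDP control gives $|\widehat{\mathcal{B}}_i\cap\mathcal{G}|\le \alpha H_{i,n}|\widehat{\mathcal{B}}_i|$, which is a fraction at most $\alpha\max_i H_{i,n}$ of the neighbors.
\end{itemize}
Either way, each normal node deletes at most a $\beta_0$ fraction of its incident normal edges.

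\textbf{Step 3 (main obstacle): pass to a random-graph comparison.} The deletions are data-dependent, so the retained graph is not literally Erdős–Rényi. A node's deleted edges are chosen from its neighbor list through scores that are exchangeable among normal neighbors. I would therefore model each edge $\{i,j\}$ as retained with probability at least $1-\beta_0-\delta$. The extra $\delta$ absorbs fluctuations: a Chernoff/Hoeffding bound on the number of retained edges per node, with constant $c_0$, costs $2m\exp(-\frac12 c_0\delta^2(m-1)p)$. The retained graph can then be coupled with $\mathtt{G}(m,p(1-\beta_0-\delta))$. The minimum-degree structure is what governs isolated nodes, which are the dominant obstruction to connectivity in Proposition~\ref{prop:connected}.

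\textbf{Step 4: assemble the bound.} Applying Proposition~\ref{prop:connected}(ii) with $p$ replaced by $p(1-\beta_0-\delta)$ gives the $\exp(-c(m,p,\delta))$ and $\mathcal{O}(m^{\cdot})$ terms. The constants $3$, respectively $2+p(1-\beta_0-\delta)$, come from adding the isolated-node bound of part (i), which is used to control nodes whose degree falls below the concentration threshold. A union bound over Steps 1 through 4 yields \eqref{eq-prob:connected}. Finally, if $c(m,p,\delta)\to\infty$, then $mp\gtrsim\log m$, so every error term vanishes and the limit statement follows.

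The coupling in Step 3 is the delicate point. If exchangeability is not strong enough for the comparison, I would instead argue directly on cuts: every vertex subset $S$ with $|S|\le m/2$ has at least $(1-\beta_0-\delta)$ times its $\mathtt{G}(m,p)$ boundary retained, and the number of boundary edges of such $S$ is controlled by the same expansion estimates that underlie Theorem 4.1 of \citet{frieze2015introduction}.
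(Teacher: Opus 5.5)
\textbf{There are two gaps.} The first is in the reduction you chose, and it blocks the stated bound even if the second is granted.

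\textbf{First gap: symmetrizing loses a factor.} You count an undirected edge $\{i,j\}$ only if both $i\to j$ and $j\to i$ survive. That edge is then exposed to two deletion decisions, one at each endpoint, each at rate up to $\beta_0$. In the independent comparison model it survives with probability $(1-\beta_0-\delta)^2$, not $1-\beta_0-\delta$. Proposition~\ref{prop:connected} then gives the exponent $mp(1-\beta_0-\delta)^2-\log m$, which is strictly smaller than $c(m,p,\delta)$.

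The paper never symmetrizes. An arc $u\to v$ can only be removed by its head $v$, so it faces a single decision at rate at most $\beta_0$. The comparison object is therefore the digraph $\mathtt{D}(m,p(1-\beta_0-\delta))$, and the paper applies the directed result, Corollary~\ref{coro:connected}, to it.

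This is also where the constants $3\ge 2+o(1)$ and $2+p(1-\beta_0-\delta)$ come from: in a digraph, both in-isolated and out-isolated vertices obstruct strong connectivity. Re-adding Proposition~\ref{prop:connected}(i), as you suggest, has no justification. It would also give $2+2p(1-\beta_0-\delta)$ in the second regime, not the stated constant.

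\textbf{Second gap: Step 3 states a coupling but does not construct one.} Concentration of the number of retained edges per node does not by itself give the containment you need. The missing ingredient in the paper is a random-key sandwich, built in three steps.
\begin{itemize}
\item Let $\mathtt{D}^{\mathrm{fix}}(\beta_0)$ delete exactly $k_v=\lfloor\beta_0 d(v)\rfloor$ in-arcs at each $v$, uniformly without replacement and independently across $v$. Realize it with i.i.d.\ keys $U_{u,v}\sim U[0,1]$, deleting the $k_v$ smallest.
\item Let $\mathtt{D}^{\mathrm{th}}(\beta_0+\delta)$ delete $u\to v$ if and only if $U_{u,v}\le\beta_0+\delta$.
\item Suppose the $k_v$-th order statistic satisfies $T_v\le\beta_0+\delta$ for every $v$; a binomial Chernoff bound plus a union bound gives this, and that is the source of the $m\exp(-c_0\delta^2 d_{\min})$ term. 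On that event $\mathtt{D}^{\mathrm{th}}(\beta_0+\delta)\subseteq\mathtt{D}^{\mathrm{fix}}(\beta_0)$, and monotonicity of strong connectivity finishes.
\end{itemize}

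\textbf{The random-deletion model itself is an assumption.} Neither your exchangeability heuristic nor hypotheses (i)--(ii) justify it. Those hypotheses constrain deletions per receiver only. Moreover, $S_{i,j}$ and $S_{i',j}$ share node $j$'s gradient pair, so flags on a fixed sender are correlated across receivers. Concretely, suppose every neighbor of one normal node $j$ removes only the arc from $j$. This satisfies (i)--(ii), since it is one false discovery per node, yet it leaves $j$ with no out-arcs.

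You therefore have to impose, as an explicit assumption, that the falsely pruned in-arcs at each node form a uniformly random subset chosen independently across nodes. The paper does the same, by assertion, when it says $(\Gcal,\Ecal'|_{\Gcal})$ decomposes as a $\mathtt{D}^{\mathrm{fix}}(\beta_0)$ plus extra arcs.

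Your cut-based fallback fails for the same reason: take $S=\{j\}$ and its out-boundary.

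\textbf{A minor point, shared with the paper.} FDP control bounds false discoveries by $\alpha H_{i,n}\size{\Ncal_i}$, a fraction of all neighbors. Turning this into a fraction of the $\size{\Gcal_i}$ normal in-arcs requires $\size{\Bcal_i}\le(1-\alpha H_{i,n})\size{\Gcal_i}$.
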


Theorem~\ref{thm:connectivity} indicates that controlling the FDP at a low significance level $\alpha$, or ensuring that the false discovery number remains small, yields a larger value of $c(m,p,\delta)$.
This, in turn, drives the probability that $(\mathcal{G}, \mathcal{E}'|_{\mathcal{G}})$ forms a strongly connected component of the pruned graph $\mathtt{G}'$ toward one.

Combining Theorem~\ref{thm:fdp_control} and the fact $\varrho< \frac{1}{2}$, Theorem~\ref{thm:connectivity} further implies that Condition~\ref{asp:sure_detection2} holds  with high probability.
\revise{As a direct structural consequence of Condition~\ref{asp:sure_detection2}, we can rigorously characterize the mixing matrix corresponding to $(\mathcal{G}, \mathcal{E}'|_{\mathcal{G}})$.
Specifically, the overall row-stochastic mixing matrix ${\Wbf}$ becomes reducible, with its largest irreducible sub-block $\Abf \in \mathbb{R}^{m_g \times m_g}$ conforming exactly to the topology of $(\mathcal{G}, \mathcal{E}'|_{\mathcal{G}})$.
According to \citet[Perron-Frobenius Theorem]{pillai2005perron}, this property mathematically ensures that $\rho(\Abf) = 1$ and there exists a strictly positive left eigenvector ${\vbf}_1^\top$ (with
${\vbf}_1^\top \bm{1} = 1$) and constants $c > 0$, $\rho \in (|\lambda_2(\Abf)|, 1)$
such that $\|\Abf^k - \bm{1}{\vbf}_1^\top\|_2 \leq c\rho^k$. Without loss of generality, we relabel the nodes so that the first $m_g$ nodes belong to the largest strongly connected component $(\mathcal{G}, \mathcal{E}'|_{\mathcal{G}})$ associated with the irreducible sub-block $\Abf$.}

Therefore, under Condition~\ref{asp:sure_detection2}, applying the DRSGD algorithm on the entire pruned graph $\mathtt{G}'$ can be used to find the solution of the subproblem over the strongly connected sub-graph $(\mathcal{G}, \mathcal{E}'|_{\mathcal{G}})$.
Now, we establish the non-asymptotic convergence rate of DRSGD in solving problem \eqref{obj:dcentral_all} in the following theorem.

\begin{theorem}\label{thm:main0}
Suppose Condition~\ref{asp:loss} and Condition~\ref{asp:sure_detection2} hold.
Let
\begin{equation*}
\begin{aligned}
& \sigma := \max\{\sigma_i : i\in \Gcal\}, \quad \zeta := \frac{4\sigma_0^2}{N_0}, \quad D_1:= \frac{(6w^2c\zeta^2 + 8 m_g L^2 \sigma^2 w^2 c^2 + 72 m_g L^2 \zeta^2 w^2 c^2)^2}{(1-\rho)^2m_g(f(\tilde{\btheta}_{0}) -f^\ast)^2}, \\
& \quad D_2:= \biggl(\frac{54 L^2 \zeta^2 w^4 c^3+ 216L^4w^6c^5\sigma^2+ 6 L^2  \sigma^2 w^4 c^3 + 1944  L^4  \zeta^2 w^6 c^5}{\sqrt{m_g}(1-\rho)^3(f(\tilde{\btheta}_{0}) -f^\ast)}\biggr)^{2/3},\\
& D_3:= \frac{432 L^2w^4c^3}{m_g (1-\rho)(1-\rho^2)}, \quad D_4:= \frac{288 L^2 w^2 c^2}{(1-\rho)^2}.\\
\end{aligned}
\end{equation*}
Set $\Ybf_{0}:= [ \ybf_{1,0}^{\top}; \ldots; \ybf_{m_g,0}^{\top} ] = \Abf^{t_0}$, for some $t_0\geq \lceil \frac{\log(m_g /48c^2w^4\|{\vbf}_1\|^2)}{2\log(\rho)}  \rceil$.
When the number of iterations of DRSGD satisfies
\begin{equation*}
 K \geq  \max\{{m_g^{-1}}, 4m_g L^2, D_1, D_2, D_3, D_4\},
\end{equation*}
and step-size $\eta_k := \sqrt{\frac{1}{m_g K}}$ is used, then it follows that
\begin{equation*}
\frac{1}{K}\sum\limits_{k=0}^{K-1}  \mathbb{E}[\|\nabla f(\tilde{\btheta}_k)\|^2] \leq  \frac{20(f(\tilde{\bm \theta}_0) -f^\ast)+ 4 w^2 \|{\vbf}_1\|^2 \sigma^2  L}{\sqrt{Km_g}}.
\end{equation*}
where $w:= \sup_{k}\|{\mathrm{Diag}(\Abf^{k})}^{-1}\|_2 < +\infty$.
\end{theorem}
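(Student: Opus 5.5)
The plan is the standard descent-lemma analysis for row-stochastic (push-sum-free) DGD-RS, applied to the weighted average $\tilde{\btheta}_k$ over the irreducible block $\Abf$. By Condition~\ref{asp:sure_detection2}, the first $m_g$ rows of the pruned dynamics involve only $\Abf$ and are decoupled from the Byzantine nodes. We may therefore work entirely with $\Theta_k=[\btheta_{1,k},\ldots,\btheta_{m_g,k}]$, $\Ybf_k=\Abf^{t_0+k}$, and $\tilde{\btheta}_k=\Theta_k\vbf_1$. The recursion is
\[
\tilde{\btheta}_{k+1}=\tilde{\btheta}_k-\eta\sum_i [\vbf_1]_i\,\gbf_i(\btheta_{i,k};\xi_{i,k})/[\ybf_{i,k+1}]_i .
\]
We rewrite this as $\tilde{\btheta}_{k+1}=\tilde{\btheta}_k-\frac{\eta}{m_g}\sum_i \gbf_i(\btheta_{i,k};\xi_{i,k})+\eta\,\mathbf{e}_k$. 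The error term $\mathbf{e}_k$ collects the mismatch $[\vbf_1]_i/[\ybf_{i,k+1}]_i-1/m_g$. Here the normalization has to be checked, since $\vbf_1^\top\bm 1=1$ and $[\Abf^{k}]_{ii}\to[\vbf_1]_i$. The intended effective weights satisfy $[\vbf_1]_i/[\ybf_{i,k+1}]_i\approx 1$, so the step is effectively $\eta\sum_i\gbf_i$. I would rescale to match $f=\frac1{m_g}\sum f_i$. Using $\|\Abf^k-\bm 1\vbf_1^\top\|_2\le c\rho^k$ and $w=\sup_k\|\mathrm{Diag}(\Abf^k)^{-1}\|_2$, the mismatch is bounded by $w^2 c\rho^{t_0+k}$. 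The choice of $t_0$ makes this at most a fixed fraction, giving the constant $48c^2w^4\|\vbf_1\|^2$.

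\textbf{Step 1: one-step descent.} Apply $L$-smoothness of $f$ (Condition~\ref{asp:loss}(ii)) at $\tilde{\btheta}_k$ and take conditional expectation given $\mathcal F_k$ using unbiasedness. This gives
\[
\Ebb f(\tilde{\btheta}_{k+1})\le \Ebb f(\tilde{\btheta}_k)-\tfrac{\eta}{2}\Ebb\|\nabla f(\tilde{\btheta}_k)\|^2+\eta L^2\cdot\tfrac{1}{m_g}\Ebb\|\Theta_k-\tilde{\btheta}_k\bm 1^\top\|_F^2+\eta\cdot(\text{mismatch})\cdot(\|\nabla f\|^2+\zeta^2)+\tfrac{L\eta^2}{2}(\cdots).
\]
The variance part of the last term is of order $w^2\|\vbf_1\|^2\sigma^2$. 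The gradient heterogeneity enters through $\zeta$ via \eqref{eq:heter}.

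\textbf{Step 2: consensus error.} Write $\Theta_{k+1}-\tilde{\btheta}_{k+1}\bm 1^\top=(\Theta_k-\tilde{\btheta}_k\bm1^\top)(\Abf-\bm1\vbf_1^\top)^\top-\eta(\cdots)$ and unroll. Using the geometric bound $c\rho^k$, this gives
\[
\sum_k\Ebb\|\Theta_k-\tilde{\btheta}_k\bm1^\top\|_F^2\lesssim \frac{c^2w^2\eta^2}{(1-\rho)^2}\sum_k\bigl(m_g\sigma^2+m_g\zeta^2+L^2\|\Theta_k-\tilde{\btheta}_k\bm1^\top\|_F^2+m_g\|\nabla f(\tilde{\btheta}_k)\|^2\bigr)+\text{initial term}.
\]
The initial term is controlled through $1/(1-\rho^2)$. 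For $K\ge D_3,D_4$ the self-referential consensus term is absorbed into the left-hand side.

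\textbf{Step 3: telescope.} Sum Step 1 over $k$, substitute Step 2, and set $\eta=1/\sqrt{m_gK}$. The conditions $K\ge 4m_gL^2$, $D_3$, and $D_4$ ensure that the coefficient of $\|\nabla f\|^2$ stays at least $\eta/4$, say. Dividing by $K\eta$ gives the main terms $(f(\tilde{\btheta}_0)-f^*)/(K\eta)$ and $L\eta w^2\|\vbf_1\|^2\sigma^2$, both $O(1/\sqrt{m_gK})$. The remaining terms are of order $\eta^2/(1-\rho)^2$ and $\eta^3/(1-\rho)^3$, and the thresholds $D_1,D_2$ make them smaller than $(f(\tilde{\btheta}_0)-f^*)/\sqrt{m_gK}$. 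This yields the constant $20$.

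The main obstacle is Step 1's mismatch term. The time-varying weights $1/[\ybf_{i,k+1}]_i$ multiply noisy gradients, so the effective direction is a biased, non-uniform combination rather than $\nabla f$. I would control it by splitting off $\rho^{t_0+k}$-decaying deviations, which are summable and made small by $t_0$. The persistent part is handled by showing the weights reduce exactly to the uniform combination in the limit. Careful bookkeeping of the constants $w$, $c$, and $\|\vbf_1\|$ is what forces the specific forms of $D_1$ and $D_2$.
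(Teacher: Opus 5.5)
Your architecture is the paper's: a descent inequality for the $\vbf_1$-weighted average, with $\nabla f(\tilde\btheta_k)-\frac{1}{m_g}\vbf_1^\top\tilde\Ybf_k^{-1}\nabla\fbf$ split into two parts. One is an $L^2$-consensus part. The other is a mismatch part, controlled by $\norm{\Ybf_k-\Ybf_\infty}_2\le c\rho^{k+t_0}$ together with $\vbf_1^\top\mathrm{Diag}(\Ybf_\infty)^{-1}=\bm 1^\top$. The paper then chooses $t_0$ so that $3c^2w^4\norm{\vbf_1}^2\rho^{2t_0}\le m_g/16$, unrolls the consensus error geometrically, absorbs the self-referential terms via $D_3,D_4$, and dominates the leftovers via $D_1,D_2$. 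Your matrix form of Step 2, applying $\norm{\Abf^t-\bm 1\vbf_1^\top}_2$ to the whole block, is actually tighter by a factor $m_g$ than the paper's node-by-node bound on $Q_{i,k}$, which charges every node the full $\norm{\nabla\fbf}_{\mathrm F}^2$. The consequence is that the specific constants ($20$, $D_1$, $D_2$) would have to be recomputed along your route rather than inherited.

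There is, however, a concrete error in the $m_g$ bookkeeping of Steps 1 and 3, and it is exactly the factor that produces the linear speedup. The weights $[\vbf_1]_i/[\ybf_{i,k+1}]_i$ tend to $1$, not $1/m_g$; your first decomposition leaves a mismatch tending to $1-1/m_g$, which does not decay. So the update is $\tilde\btheta_{k+1}=\tilde\btheta_k-m_g\eta\cdot\frac{1}{m_g}\vbf_1^\top\tilde\Ybf_k^{-1}\Gbf$, i.e., the effective step on $\nabla f$ is $m_g\eta$. The descent inequality must therefore carry three terms at this scale:
\begin{itemize}
\item the descent term $-\frac{m_g\eta}{2}\Ebb\norm{\nabla f(\tilde\btheta_k)}^2$;
\item the consensus term $m_g\eta L^2M_k$, with $M_k$ the node-averaged squared deviation;
\item the second-order term $\frac{Lm_g^2\eta^2}{2}\Ebb\norm{\frac{1}{m_g}\vbf_1^\top\tilde\Ybf_k^{-1}\Gbf}^2$.
\end{itemize}
This is also why the relevant condition is $K\ge 4m_gL^2$, i.e., $m_g\eta L\le\frac12$. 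Dividing by $Km_g\eta$ then gives $(f(\tilde\btheta_0)-f^\ast)/\sqrt{m_gK}$. With your coefficient $\eta/2$ and division by $K\eta$, the main term is $(f(\tilde\btheta_0)-f^\ast)/(K\eta)=(f(\tilde\btheta_0)-f^\ast)\sqrt{m_g/K}$, which is not $O(1/\sqrt{m_gK})$. Your Step 3 assertion to the contrary is an arithmetic slip that happens to mask the Step 1 slip; you need to commit to the effective step $m_g\eta$ throughout.

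Second, the ``initial term'' you keep in Step 2 cannot be absorbed by $D_1,D_2$. It contributes a term of order $L^2c^2\norm{\Theta_0-\tilde\btheta_0\bm 1^\top}_{\mathrm F}^2/\bigl(m_g(1-\rho^2)K\bigr)$ to the final bound, while neither the target bound nor any $D_i$ involves the initial spread. The paper disposes of this term by setting ${\bm\Theta}_0=\bm 0$ ``without loss of generality''. That is really the assumption that all normal nodes start from a common point, since only then does $(\vbf_1^\top-\bm 1_{\{i\}}^\top\Abf^k){\bm\Theta}_0$ vanish for all $i,k$. To land on the statement as written you need that same assumption. Otherwise your route gives the stated bound plus an extra $O(1/K)$ term, or requires an additional threshold on $K$.
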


As a direct consequence of Theorem~\ref{thm:main0}, we obtain the following convergence bound about the consensus error.

\begin{coro}\label{coro:consensus}
Suppose Condition~\ref{asp:loss} and Condition~\ref{asp:sure_detection2} hold.
It holds that
\begin{equation*}
  \frac{1}{Km_g} \sum\limits_{k=0}^{K-1} \sum\limits_{i=1}^{m_g} \mathbb{E}[\|\tilde{\btheta}_{k}- {\btheta}_{i,k}\|^2] \leq \mathcal{O}\biggl(\frac{\sigma^2 + \zeta^2}{K} +\frac{1}{K^{3/2}}\biggr).
\end{equation*}

\end{coro}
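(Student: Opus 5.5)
We will prove the corollary by bounding the consensus error directly, reusing the recursion that underlies the proof of Theorem~\ref{thm:main0}. Stack the normal-node variables as $\bm\Theta_k=[\btheta_{1,k}^\top;\ldots;\btheta_{m_g,k}^\top]$. Under Condition~\ref{asp:sure_detection2} the rows of $\Wbf$ indexed by $\Gcal$ only involve $\Gcal$, so the update \eqref{eq:local_dsgdrs} restricted to normal nodes reads $\bm\Theta_{k+1}=\Abf\bm\Theta_k-\eta\,\mathbf{D}_{k+1}^{-1}\Gbf_k$. Here $\mathbf{D}_{k+1}=\mathrm{Diag}(\Abf^{t_0+k+1})$ and $\Gbf_k$ stacks the stochastic gradients. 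Since $\tilde{\btheta}_k=\bm\Theta_k^\top\vbf_1$, the consensus matrix is $(\Ibf-\bm 1\vbf_1^\top)\bm\Theta_k$. Unrolling the recursion gives
\begin{equation*}
(\Ibf-\bm 1\vbf_1^\top)\bm\Theta_k=(\Abf^k-\bm 1\vbf_1^\top)\bm\Theta_0-\eta\sum_{t=0}^{k-1}(\Abf^{k-1-t}-\bm 1\vbf_1^\top)\mathbf{D}_{t+1}^{-1}\Gbf_t .
\end{equation*}
This uses the identity $(\Ibf-\bm 1\vbf_1^\top)\Abf^s=\Abf^s-\bm 1\vbf_1^\top$. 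We then apply $\|\Abf^s-\bm 1\vbf_1^\top\|_2\le c\rho^s$ and $\|\mathbf{D}_t^{-1}\|_2\le w$.

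The key steps are as follows.

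\textbf{Step 1: split the gradients.} Write $\Gbf_t=(\Gbf_t-\nabla F(\bm\Theta_t))+(\nabla F(\bm\Theta_t)-\nabla F(\bm 1\tilde{\btheta}_t^\top))+\nabla F(\bm 1\tilde{\btheta}_t^\top)$. The three pieces are a martingale noise term, a Lipschitz term bounded by $L\|(\Ibf-\bm 1\vbf_1^\top)\bm\Theta_t\|_{\mathrm F}$, and the gradient at the consensus point.

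\textbf{Step 2: bound each piece.}
\begin{itemize}
\item The noise sum has orthogonal increments, so its second moment is at most $\eta^2 w^2c^2\sum_t\rho^{2(k-1-t)}m_g\sigma^2\le \eta^2w^2c^2m_g\sigma^2/(1-\rho^2)$.
\item The deterministic pieces are bounded via Cauchy--Schwarz with weights $\rho^{k-1-t}$, which gives a factor $1/(1-\rho)$. The consensus-point gradient is further split into a heterogeneity part, controlled by \eqref{eq:heter} as $m_g\zeta$ (i.e.\ $\zeta^2$-type terms), and the common part $\bm 1\nabla f(\tilde{\btheta}_t)^\top$.
\end{itemize}

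\textbf{Step 3: average over $k$ and absorb.} Averaging over $k$ swaps the double sum into a factor $1/(1-\rho)^2$ times $\sum_t(\cdot)$. The Lipschitz term contributes $\eta^2w^2c^2L^2\sum_t\mathbb{E}\|(\Ibf-\bm 1\vbf_1^\top)\bm\Theta_t\|_{\mathrm F}^2/(1-\rho)^2$. Because $K\ge D_4$ and $\eta^2=1/(m_gK)$, its coefficient is at most $1/2$, so it can be moved to the left-hand side. The initial term $\|(\Abf^k-\bm 1\vbf_1^\top)\bm\Theta_0\|^2\le c^2\rho^{2k}\|\cdot\|^2$ sums to $\mathcal{O}(1)$ and, divided by $Km_g$, yields $\mathcal{O}(1/K)$. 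Here we use the warm-up bound, which is treated as a constant.

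The delicate part is the common-gradient term $\bm 1\nabla f(\tilde{\btheta}_t)^\top$. It is not annihilated exactly, because $\mathbf{D}_{t+1}^{-1}$ is not a multiple of the identity, so it contributes $\eta^2w^2c^2m_g\sum_t\mathbb{E}\|\nabla f(\tilde{\btheta}_t)\|^2/(1-\rho)^2$. Here we invoke Theorem~\ref{thm:main0}: $\sum_t\mathbb{E}\|\nabla f(\tilde{\btheta}_t)\|^2\le \mathcal{O}(\sqrt{K/m_g})$. Multiplied by $\eta^2 m_g/(Km_g)=1/(m_gK^2)$ this gives $\mathcal{O}(K^{-3/2})$, which is the source of the $K^{-3/2}$ term.

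Collecting terms, the noise and heterogeneity contributions give $\eta^2\cdot\mathcal{O}(m_g(\sigma^2+\zeta^2))\cdot K/(Km_g)$. With $\eta^2=1/(m_gK)$ this equals $\mathcal{O}((\sigma^2+\zeta^2)/K)$ up to $m_g$-dependent constants. We expect the main obstacle to be bookkeeping the $m_g$ and $w,c,\rho$ factors so that the absorption in Step 3 is valid under the $K$-lower bounds of Theorem~\ref{thm:main0}. We also need to check that the choice of $t_0$ keeps $\mathbf{D}_t^{-1}$ uniformly bounded by $w$.
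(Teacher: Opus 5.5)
Your proposal is correct and follows essentially the same route as the paper. The paper derives the corollary by combining the consensus-error bound of Lemma~\ref{lem:sum_Mk} with Theorem~\ref{thm:main0}. Lemma~\ref{lem:sum_Mk} uses the same ingredients as you: the unrolled recursion with $\|\Abf^s-\bm 1\vbf_1^\top\|_2\le c\rho^s$ and $\|\mathrm{Diag}(\Ybf_k)^{-1}\|_2\le w$, the noise/Lipschitz/heterogeneity/common-gradient split of Lemma~\ref{lem:nabla_f}, and absorption of the $L^2$ term under the step-size condition. Theorem~\ref{thm:main0} is then applied to $\sum_k\mathbb{E}\|\nabla f(\tilde{\btheta}_k)\|^2$, which is exactly where the $K^{-3/2}$ term comes from. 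Your explicit handling of the initial term $(\Abf^k-\bm 1\vbf_1^\top)\bm\Theta_0$ through the warm-up bound is in fact more careful than the paper's ``without loss of generality $\bm\Theta_0=\bm 0$''.
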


\revise{In Theorem~\ref{thm:main0}, the convergence rate $\mathcal{O}(1/\sqrt{Km_g})$ of DRSGD matches the optimal convergence rate for
decentralized nonconvex stochastic first-order methods with doubly-stochastic mixing matrices in the Byzantine-free setting \citep{yuan2022revisiting}, up to universal constant factors.
Moreover, DRSGD achieves a linear speedup with respect to the number of normal agents $m_g$: compared to the centralized lower bound of $\mathcal{O}(1/\sqrt{K})$ for a single machine \citep{arjevani2023lower}, our decentralized network attains an effective rate of $\mathcal{O}(1/\sqrt{Km_g})$.
As a by-product, Corollary~\ref{coro:consensus} implies that as $K$ grows, local optimization variables ${\btheta}_{i,k}$ asymptotically reach consensus on the ${\vbf}_1$-weighted average $\tilde{\btheta}_{k}$ and hence converge to a stationary point of problem \eqref{obj:dcentral_homo} at a rate of $\mathcal{O}({1}/{\sqrt{Km_g}})$.}

\revise{In summary, the statistical guarantees of DRSGD-ByMI establish a rigorous link between Byzantine identification and exact convergence. With high probability, the ByMI procedure achieves sure-detection while controlling the false discovery proportion (FDP) at the prescribed significance level $\alpha$, or alternatively, keeping the number of falsely removed normal edges small. The parameter $\alpha$ plays a pivotal role in both the connectivity of the pruned network and the exact convergence of DRSGD. As $\alpha$ decreases from 1, the FDP can be controlled at a lower level, which leads to a larger connectivity parameter $c(m,p,\delta)$ and thereby increases the probability of preserving sufficient connectivity in the pruned network. However, by Theorem~\ref{thm:connectivity}, once $\alpha$ falls below a certain threshold, $c(m,p,\delta)$ no longer increases, whereas the probability of  FDP control continues to decrease.  Therefore, a sufficiently small  and empirically well-chosen $\alpha$ can effectively improve the probability that the pruned subgraph $(\mathcal{G}, \mathcal{E}'|_{\mathcal{G}})$ forms a strongly connected component among the normal nodes. Building upon this recovered connectivity, DRSGD on the pruned network achieves, with high probability, the order-optimal convergence rate of $\mathcal{O}(1/\sqrt{Km_g})$ in the nonconvex setting, ultimately matching the performance of Byzantine-free decentralized stochastic first-order methods.}

\section{Numerical Experiments}\label{sec:numeric}

In this section, we evaluate the numerical performance of DRSGD-ByMI on synthetic-data and real-data
applications.
We implement DRSGD-ByMI in conjunction with three warm-up algorithms, UBAR \citep{guo2021byzantine}, IOS \citep{wu-byzantine-resilient-2023} and BALANCE \citep{fang-byzantine-robust-2024}, and refer  to them as UBAR-DRSGD-ByMI, IOS-DRSGD-ByMI
and  BALANCE-DRSGD-ByMI.
The robust mean estimator is chosen as the Filtering estimator proposed in \citep{diakonikolas2017being, lai2016agnostic}.
In synthetic-data experiments, we choose $\Omega_i := \Ibf_d$, and test their performance under varying Byzantine ratios and contamination intensities.
In real-data experiments, we choose $\Omega_i$ as the projection matrix onto the principal component directions of neighbors' stochastic gradients that account for $95\%$ of the total variance, and compare DRSGD-ByMI variants with existing Byzantine-robust decentralized stochastic gradient algorithms in decentralized learning tasks, and further examine their numerical stability under different Byzantine ratios.

All algorithms are executed five times with varying random seeds.
The numerical experiments presented here are run on a platform with two Intel(R) Xeon(R) Gold 5317 CPUs ($@$ 3.00GHz and 512GB RAM) and NVIDIA GeForce RTX 4090 GPUs under Ubuntu 20.04.
We implement all algorithms with Python 3.8 and PyTorch 1.13.1.

\textbf{Decentralized network.}
The topology of the initial decentralized network is configured as an undirected Erdős–Rényi \citep{erdos1959graph} random graph, where each pair of nodes is connected by an edge with a probability of $p = 0.5$, and the mixing matrix is chosen as the Metropolis weight matrix \citep{xiao2006distributed}.

\textbf{Performance measures.}
We employ the averaged FDP and $\mathrm{P}_a$ as our performance measures for the detection phase, i.e.
\begin{equation}
\text{FDP} = \frac{1}{m_g} \sum\limits_{i\in \mathcal{G}} \text{FDP}(\widehat{\mathcal{B}}_i), \text{ and } \mathrm{P}_a
 = \frac{1}{m_g}\sum\limits_{i\in \mathcal{G}}\mathrm{P}_a(\widehat{\mathcal{B}}_i).
\end{equation}
In the optimization phase, we use the optimality gap $f(\bar{\btheta}_{K})- f(\btheta^\ast)$ as the performance measure in synthetic-data experiments, where $\tilde{\btheta}_K$ denotes the weighted average over the largest strongly connected component of the pruned graph. For the real-data experiments, since the ground truth $\btheta^\ast$ is unknown, we use (i) the norm of the global gradient evaluated at the averaged model over the normal nodes  $\norm{\nabla f(\bar{\btheta}_{K})}$ (or nodes in the largest strongly connected component of the pruned graph) and (ii) the test accuracy to characterize the numerical performance in real-data experiments.
The test accuracy can be divided into three types:
\begin{itemize}
\item ``Acc (all)'': test accuracy of the averaged model over all nodes;
\item ``Acc (normal)'': test accuracy of the averaged model over all normal nodes;
\item ``Acc (scc)'': test accuracy of the averaged model over the largest strongly connected component of the pruned graph.
\end{itemize}
\revise{For our proposed DRSGD-ByMI, we report ``Acc (scc)'', which represents the test accuracy of the averaged model over the largest strongly connected component of the pruned graph.
This metric naturally reflects our framework's true output, as potentially  malicious nodes are explicitly isolated from the optimization process after the detection phase.
In contrast, existing Byzantine-robust baselines typically lack a rigorous statistical pruning procedure and continue to aggregate information from suspicious nodes.
To ensure a fair and comprehensive comparison, we evaluate these baselines using both ``Acc (all)'' (averaged over all nodes) and ``Acc (normal)'' (averaged over the ground-truth normal nodes).}

\subsection{Synthetic-Data Experiments}

\textbf{Data generating process.}
We consider  the linear model,
\begin{equation*}
y_i = \mathbf{x}_i^\top \boldsymbol{\theta} + \varepsilon_i, \quad  i\in [m],
\end{equation*}
    where the input $\mathbf{x}_i \sim \Ncal_d(0, I_d)$, the noise $\varepsilon_i \sim \Ncal(0,1)$ and the ground-truth parameter is given by
    $\boldsymbol{\theta}^\ast := ({\bm 1}_s, 0, \cdots, 0)^\top$ with $s = \lfloor 0.1d \rfloor$.
    Moreover, the network contains $m=150$ nodes, where each node maintains a  local dataset $\{(\xbf_{i,j}, y_{i,j})\}_{j=1}^{N}$ with size $N_i \equiv N=200$ (for $d=80$) or $N_i \equiv N=100$ (for $d=30$), and holds a local risk function
    \begin{equation*}
        f_{i}(\btheta) = \mathbb{E}_{\xbf_{i}\sim \Ncal_d(0, I_d)} \frac{1}{2}(y_{i}-\mathbf{x}_i^\top \boldsymbol{\theta})^2  = \frac{1}{2N}\sum\limits_{j=1}^{N} (y_{i}-\mathbf{x}_i^\top \boldsymbol{\theta})^2.
    \end{equation*}
\revise{To balance the probability of FDP control and the connectivity
parameter $c(m, p, \delta)$, we set the target significance
level $\alpha = 0.2$ in our experiments.} We configure the total number of iterations to be $K = 3000$, and set the detection time as $k_0 = 0.1 K$.
The size of the identification set is $n = \frac{N}{2}$.
The default Byzantine ratio $\varrho$ is $0.2$.
In the following, we consider two kinds of Byzantine attacks:
    \begin{itemize}
        \item Scenario A (Parameter attack): the model is corrupted so that the parameter at Byzantine machines is
        $\boldsymbol{\theta}_c = (\mu_c{\bm 1}_{s}, 0, \cdots, 0)^\top$ with $s = \lfloor s_r d \rfloor$, $\mu_c = 5$  by default.
     \item Scenario B (Data attack): the data are contaminated so that the covariates on Byzantine machines are replaced by
        $\tilde{\mathbf{x}}_i = 0.8\mathbf{x}_i + 3\mathbf{v}_d$, where
$\mathbf{v}_d \in \mathbb{R}^d$ is a normalized vector with $d$ independent entries drawn from $U(0,1)$, and the responses $y_i$ are shifted by a constant bias $c = 1$.
    \end{itemize}

\textbf{Numerical performance under different contamination intensities and Byzantine ratios.} Figure~\ref{fig:sr-dim30-80} reports box-and-whisker plots  of FDP, $\mathrm{P}_a$ and optimality gap against $s_r$ under Scenario A.
On the axis-$s_r$, the farther $s_r$ is from $0.1$, the higher the  contamination intensity.
It can be observed that all methods achieve  FDPs less than $40\%$ in all intensities.
When $s_r$
 is close to $0.1$, the contaminated data approximately resemble the uncontaminated data, leading to lower $\mathrm{P}_a$ and higher optimality gap, which aligns with theoretical expectations since the attack signal is weak.
As $s_r$ moves further away from $0.1$, the signal of Byzantine attacks gradually strengthens. Figures~\ref{fig:sr-dim30-80}(b), (c), (e), and (f) show that $\mathrm{P}_a$ is approximately $100\%$ and the optimality gap is less than $10^{-4}$, in accordance with Theorem~\ref{thm:fdp_control}.

\begin{figure}[htbp]
  \centering
    \begin{subfigure}[b]{0.32\textwidth}
    \centering
    \includegraphics[width=\textwidth]{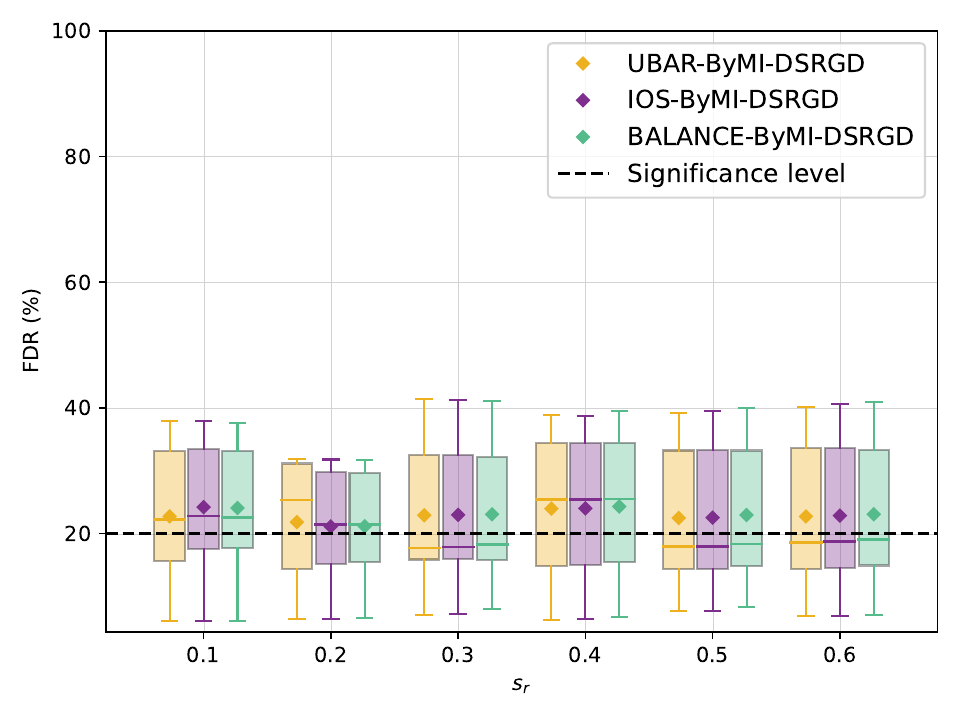}
    \caption{FDP ($d=30$)}
  \end{subfigure}
  \begin{subfigure}[b]{0.32\textwidth}
    \centering
    \includegraphics[width=\textwidth]{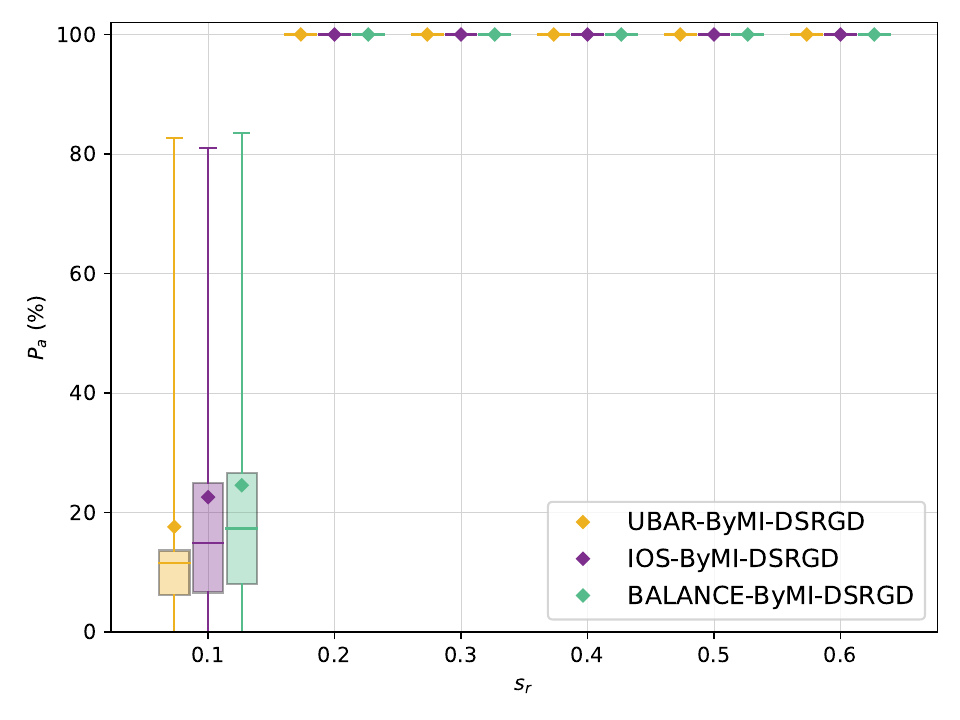}
    \caption{$\mathrm{P}_a$ ($d=30$)}
  \end{subfigure}
  \begin{subfigure}[b]{0.32\textwidth}
    \centering
    \includegraphics[width=\textwidth]{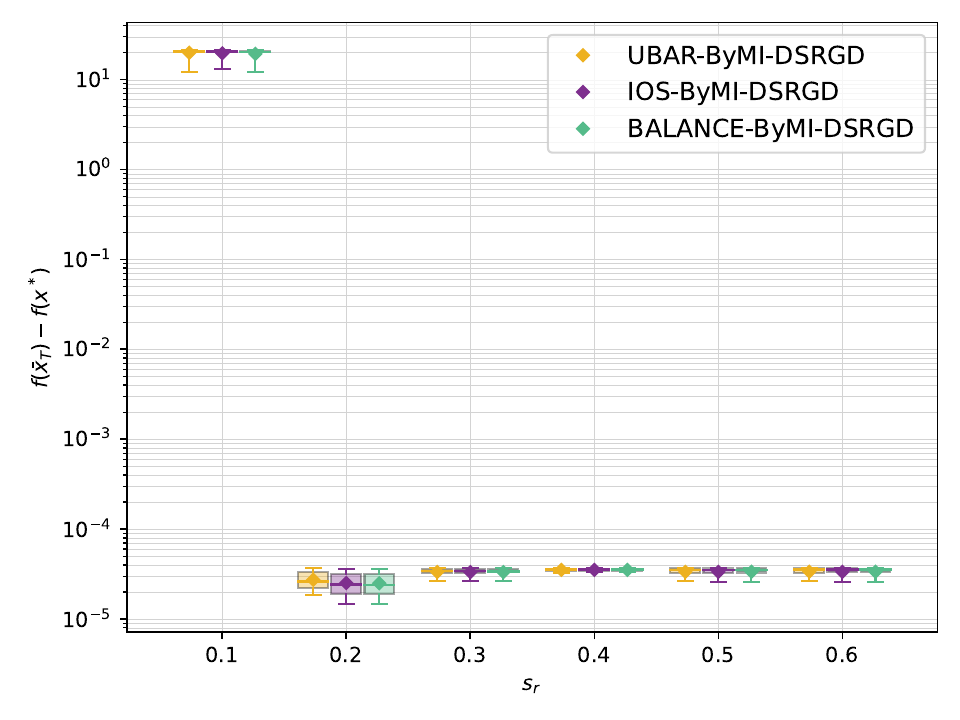}
    \caption{Optimality gap ($d=30$)}
  \end{subfigure}

  \vspace{0.5em}

    \begin{subfigure}[b]{0.32\textwidth}
    \centering
    \includegraphics[width=\textwidth]{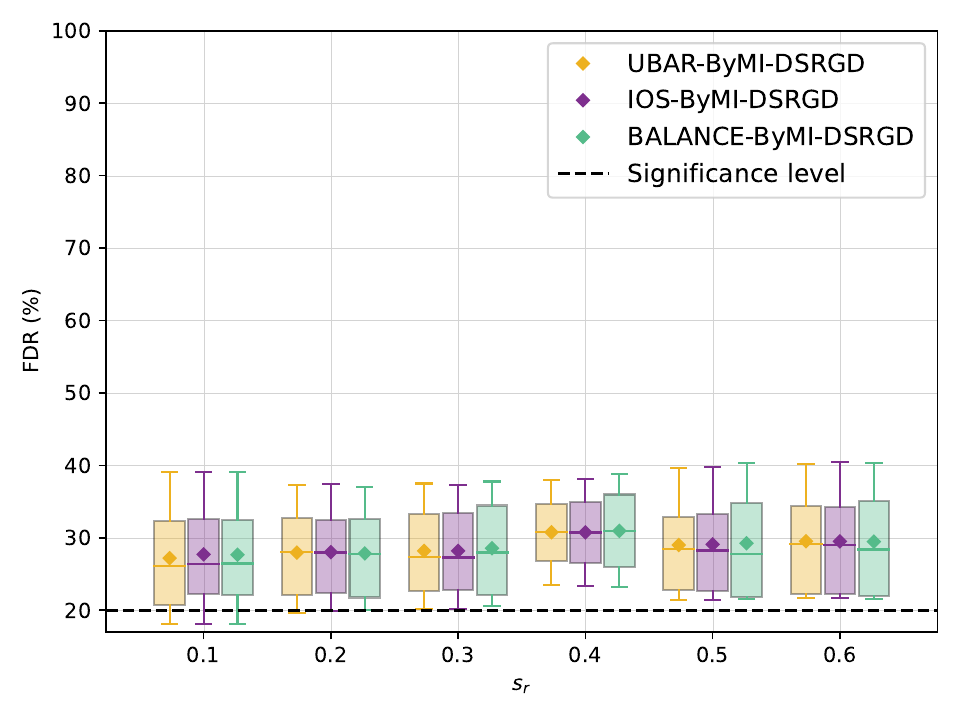}
    \caption{FDP ($d=80$)}
  \end{subfigure}
  \begin{subfigure}[b]{0.32\textwidth}
    \centering
    \includegraphics[width=\textwidth]{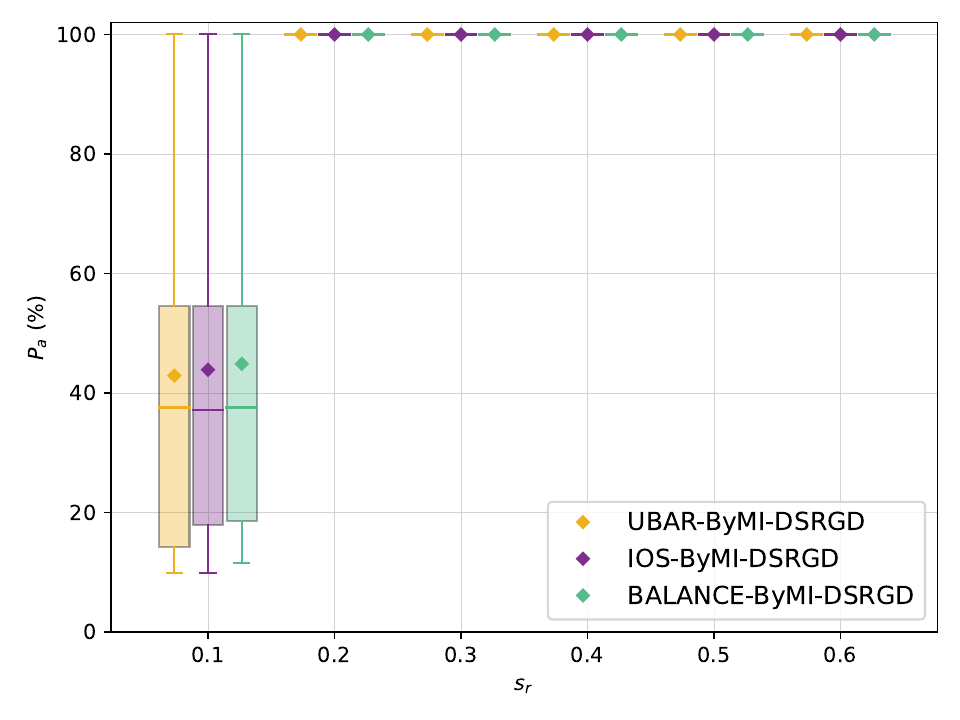}
    \caption{$\mathrm{P}_a$ ($d=80$)}
  \end{subfigure}
  \begin{subfigure}[b]{0.32\textwidth}
    \centering
    \includegraphics[width=\textwidth]{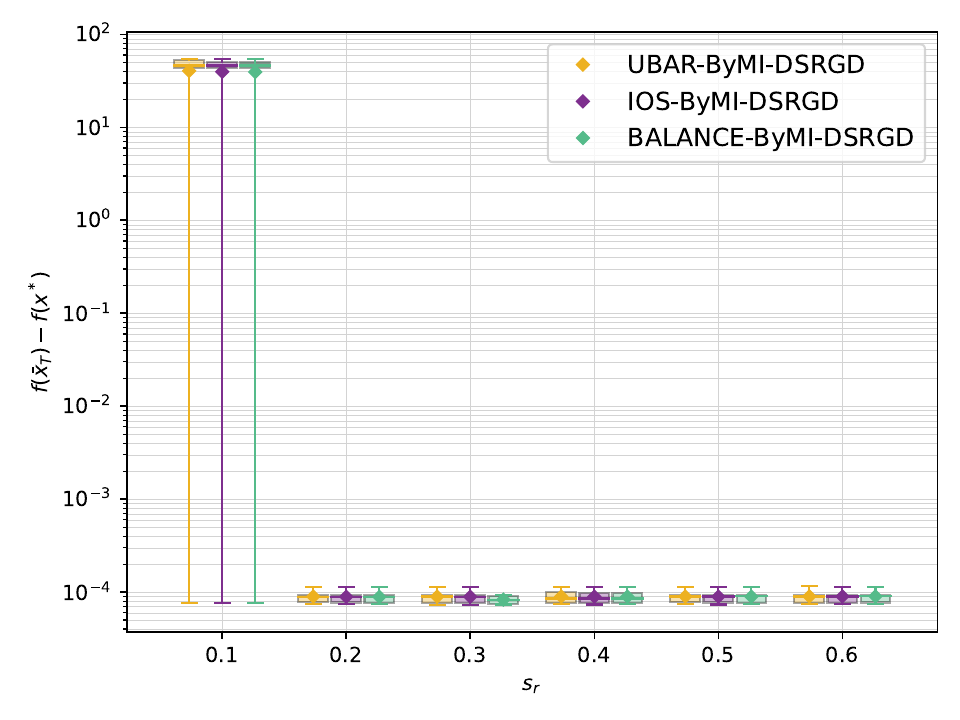}
    \caption{Optimality gap ($d=80$)}
  \end{subfigure}

  \caption{FDP, $\mathrm{P}_a$ and optimality gap of ByMI-type methods over $s_r$ when $\varrho = 0.2$ under Scenario A, with $d = 30$ (top row) and $d = 80$ (bottom row).}
  \label{fig:sr-dim30-80}
\end{figure}

Figure~\ref{fig:varrho-dim30-80} reports box-and-whisker plots  against the Byzantine ratio $\varrho$ under Scenario B.
When $\varrho < 0.3$, nearly all methods  achieve satisfactory FDPs for $d=30$, while slightly higher FDPs are observed for $d=80$.
Meanwhile, the performance of $\mathrm{P}_a$ and the optimality gap remains favorable, indicating that $(\Gcal, \Ecal'|_{\Gcal})$ is strongly connected.
When $\varrho \geq 0.3$, it can be observed that the FDP gradually increases, and $\mathrm{P}_a$ is below $100\%$ in some seeds and optimality gaps become more oscillatory.
One possible explanation is that, as the Byzantine ratio increases, the number of Byzantine neighbors surrounding certain nodes may exceed that of their normal neighbors.

Overall, DRSGD-ByMI methods are able to maintain low FDPs and achieve reliable $\mathrm{P}_a$  under relatively lower Byzantine ratios ($\varrho< 0.3$) and higher contamination levels ($s_r>0.1$), leading to a strongly connected graph and robust optimization performance.

\begin{figure}[htbp]
  \centering
    \begin{subfigure}[b]{0.32\textwidth}
    \centering
    \includegraphics[width=\textwidth]{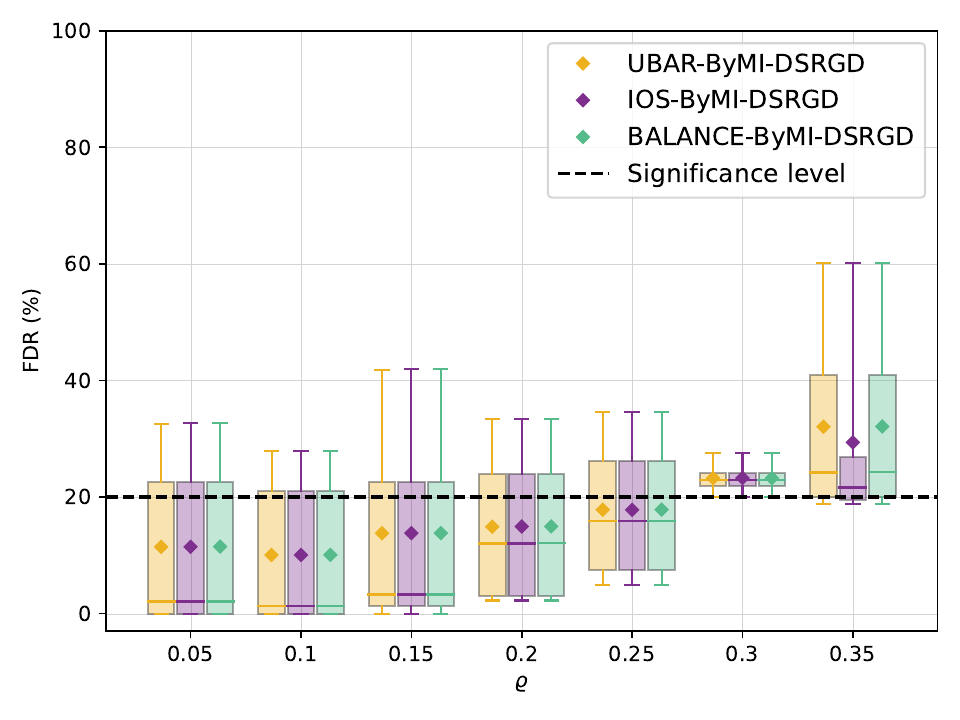}
    \caption{FDP ($d=30$)}
  \end{subfigure}
  \begin{subfigure}[b]{0.32\textwidth}
    \centering
    \includegraphics[width=\textwidth]{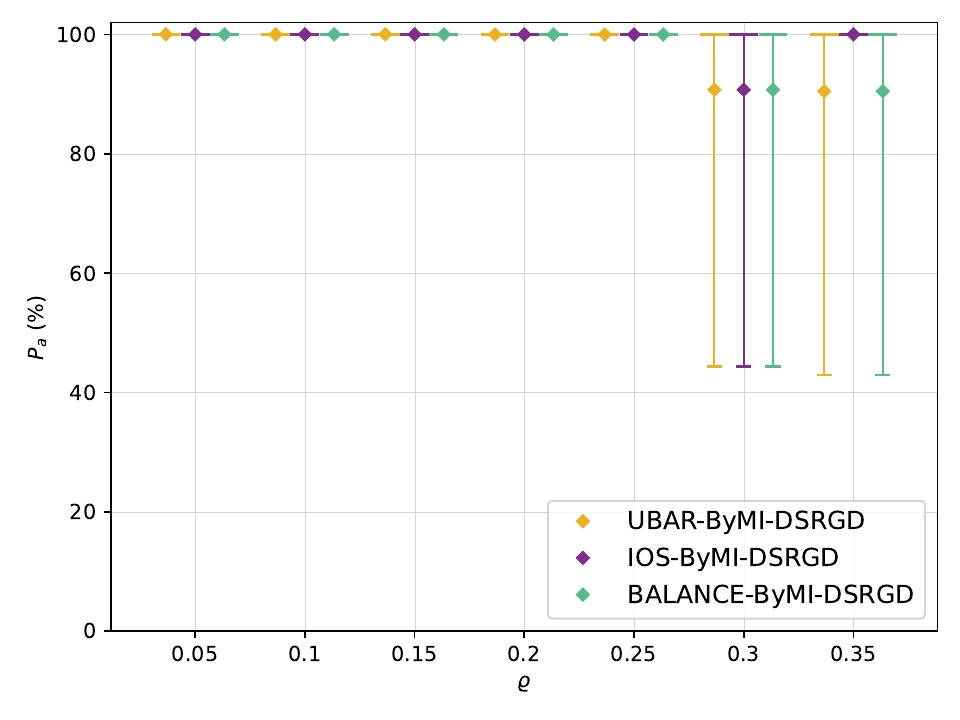}
    \caption{$\mathrm{P}_a$ ($d=30$)}
  \end{subfigure}
  \begin{subfigure}[b]{0.32\textwidth}
    \centering
    \includegraphics[width=\textwidth]{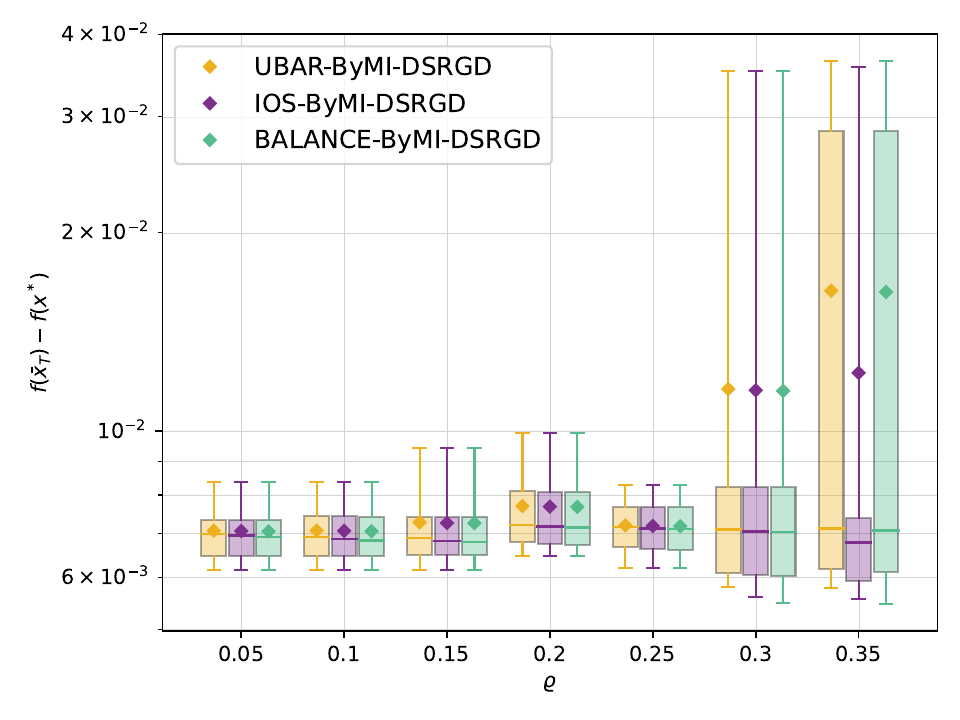}
    \caption{Optimality gap ($d=30$)}
  \end{subfigure}

  \vspace{0.5em}

    \begin{subfigure}[b]{0.32\textwidth}
    \centering
    \includegraphics[width=\textwidth]{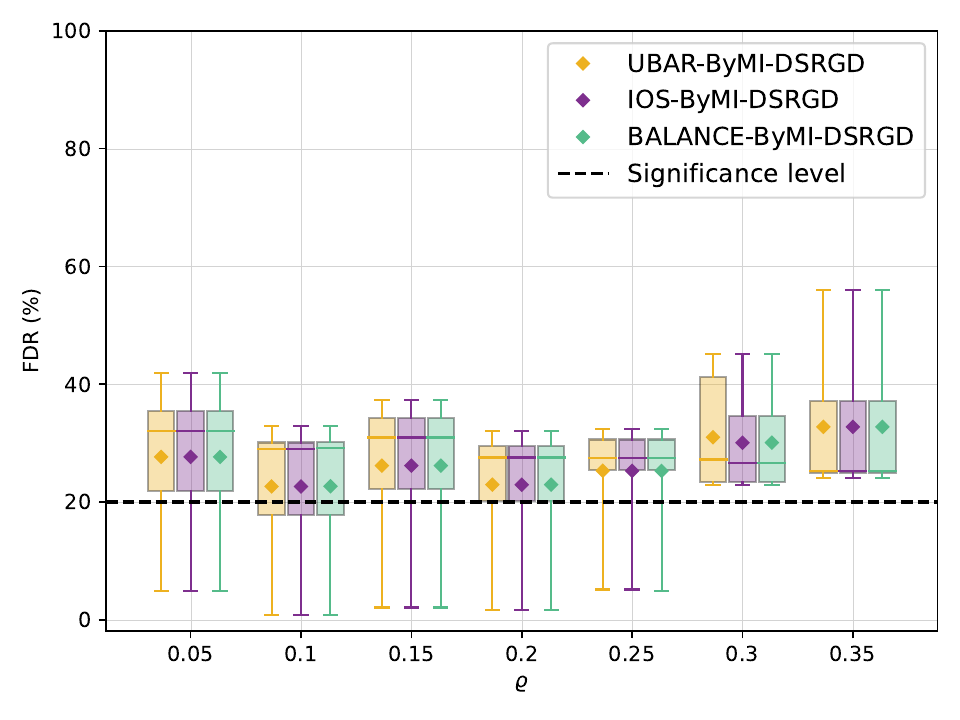}
    \caption{FDP ($d=80$)}
  \end{subfigure}
  \begin{subfigure}[b]{0.32\textwidth}
    \centering
    \includegraphics[width=\textwidth]{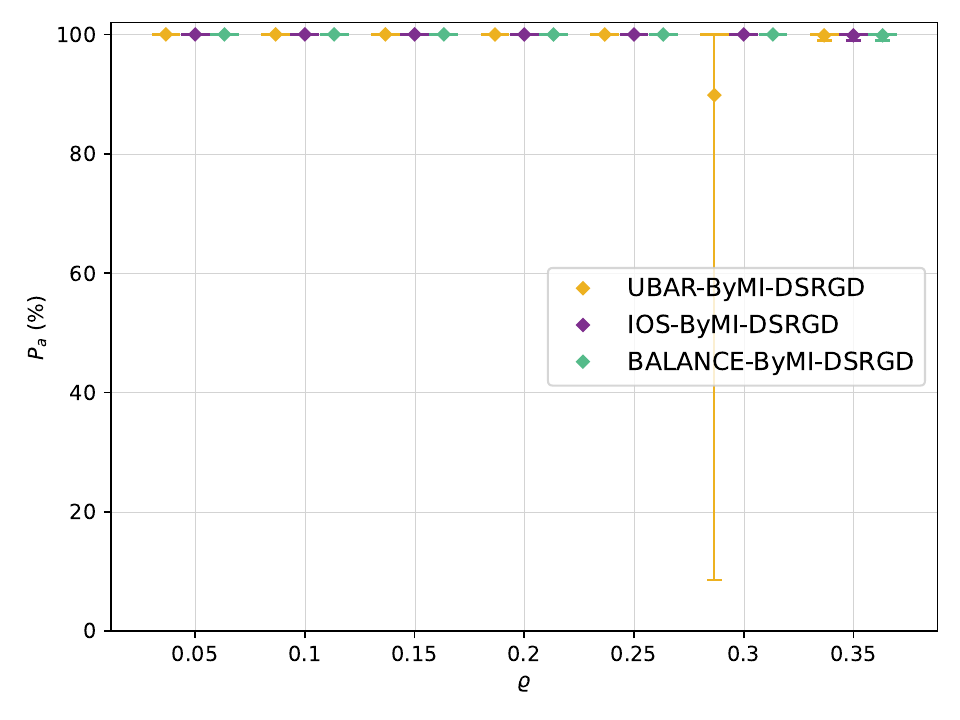}
    \caption{$\mathrm{P}_a$ ($d=80$)}
  \end{subfigure}
  \begin{subfigure}[b]{0.32\textwidth}
    \centering
    \includegraphics[width=\textwidth]{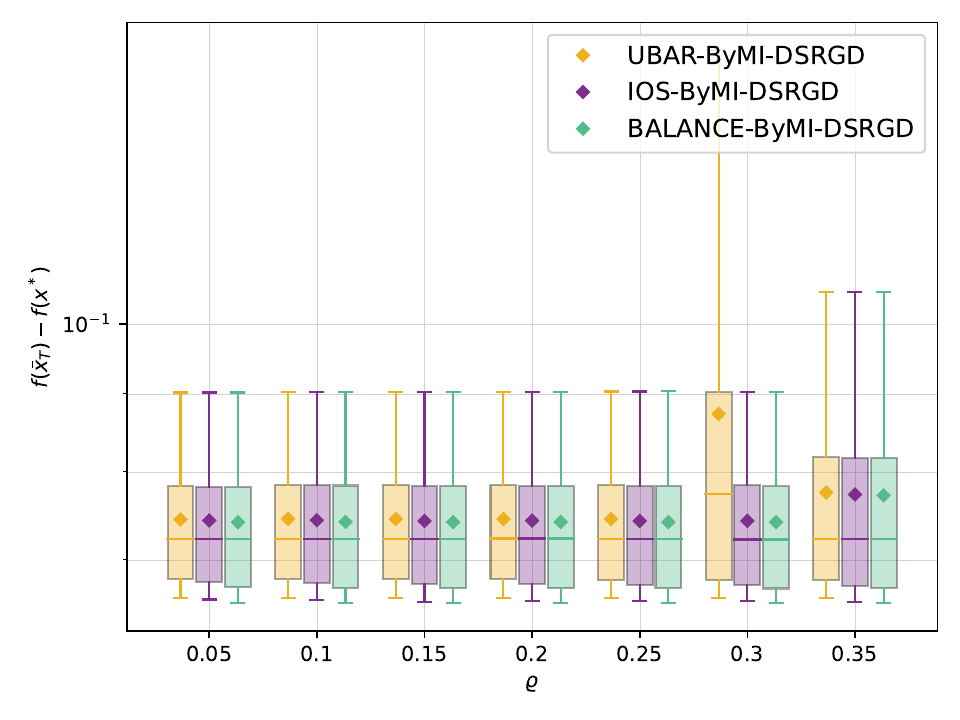}
    \caption{Optimality gap ($d=80$)}
  \end{subfigure}

  \caption{FDP, $\mathrm{P}_a$ and optimality gap of ByMI-type methods over $\varrho$ under Scenario B, with $d = 30$ (top row) and $d = 80$ (bottom row).}
  \label{fig:varrho-dim30-80}
\end{figure}

\subsection{Real-Data Experiments}

\textbf{Distributed image classification tasks.}
We conduct distributed image classification tasks on  MNIST \citep{lecun2002gradient} dataset and Fashion-MNIST \citep{xiao2017fashionmnist} dataset.
Both MNIST and Fashion-MNIST contain 60,000 training images and 10,000 test images of size $28 \times 28$ pixels.
While MNIST provides labels from 0 to 9 for handwritten digits, Fashion-MNIST assigns labels to 10 categories of clothing items.
Moreover, we adopt the LeNet neural network with the cross-entropy loss as our training model.
The weights of LeNet are initialized using Kaiming initialization, and the biases are initialized using a uniform distribution.
The decentralized network contains $m = 150$ nodes, and each node is randomly assigned a local dataset of equal size, and a local copy of LeNet neural network.
The default Byzantine ratio is set to $0.2$.
The size of the identification set is set to $n=200$. \revise{Similarly, we set the target significance
level $\alpha = 0.2$ empirically.}

To satisfy the Huber contamination model \eqref{eq:huber}, we require that the label distribution of each local dataset is identical across all nodes, and introduce three types of Byzantine attacks,
\begin{itemize}
\item Out-of-distribution (OOD) attack \citep{fort2021exploring}: we replace the sample ${ \bm s}_{i,l}$ on the Byzantine node by $\tilde{ \bm s}_{i,l}:= 0.3{ \bm s}_{i,l} + 0.7 {\bm \epsilon}_{d}$, where ${\bm \epsilon}_{d} \sim \Ncal_d({\bm \nu}_d,  I_d)$ with ${\bm \nu}_d\in \mathbb{R}^d$ randomly sampled from $\Ncal_d({\bm 0}, 20^2 I_d)$.
\item Gradient attack: each Byzantine gradient is replaced by  $\gbf_{i}:= 0.5 \bar\gbf^{\text{clean}} +  {\bm \epsilon}_{d}$, where $\bar\gbf^{\text{clean}}:= \frac{1}{m_g}\sum_{j\in \mathcal{G}} \gbf_{i}(\btheta) $,  ${\bm \epsilon}_{d} \sim \Ncal_d({\bm \nu}_d, \mathrm{std}^2(\{\gbf_{i}\}_{i \in \mathcal{G}}) I_d)$, and ${\bm \nu}_d\in \mathbb{R}^d$ is randomly sampled from the distribution $\Ncal_d({\bm 0}, 20^2 \mathrm{std}^2(\{\gbf_{i}\}_{i \in \mathcal{G}}) I_d)$.

\item Inner-product-manipulation (IPM) attack \citep{xie2020fall}: each Byzantine gradient is set to $-a \bar\Gbf^{\text{clean}}$, where $a >0$ with $a=1.0$ as default value.
\end{itemize}

\textbf{Methods for comparison.}
We conduct pairwise comparisons of our proposed methods with their corresponding decentralized Byzantine-robust algorithms in the warm-up phase.
\begin{itemize}
\item UBAR-DRSGD-ByMI versus UBAR;
\item IOS-DRSGD-ByMI versus IOS;
\item BALANCE-DRSGD-ByMI versus BALANCE.
\end{itemize}
All methods are run for $K=500$ iterations, and DRSGD-ByMI terminates the warm-up phase at $k_0 = 30$ for IPM attacks or $k_0 = 100$ for the other two attacks.
All decentralized Byzantine-robust algorithms adopt robust aggregation within decentralized SGD updates.

\textbf{Numerical comparisons.}
Tables \ref{tab:attack-results1} and \ref{tab:attack-results-fmnist} illustrate the comparison results under three different Byzantine attacks on the MNIST and Fashion-MNIST datasets, respectively.
Here, the aggregation rule represents the rule used in the corresponding Byzantine-robust algorithm or in the warm-up phase of DRSGD-ByMI. ``No robust rule'' means that the Byzantine-robust algorithm reduces to vanilla DSGD, whose results are reported to illustrate the severity of Byzantine attacks on unprotected decentralized systems.
We can observe that the DRSGD-ByMI methods achieve FDPs below $20\%$ and almost $100\%$ $\mathrm{P}_a$.
For all three attacks, the DRSGD-ByMI methods achieve  higher test accuracy  than the corresponding decentralized Byzantine-robust algorithms in terms of both ``Acc (all)'' and ``Acc (normal)'' on the MNIST dataset.

\begin{table}[htbp]
\footnotesize
    \centering
    \renewcommand{\arraystretch}{1.2}      {\fontsize{8pt}{10pt}\selectfont
    \begin{tabular}{lcccccc}
    \toprule
    \multirow{2}{*}{Attack} & \multirow{2}{*}{\makecell[c]{Aggregation  \\ rule}}
    & \multicolumn{2}{c}{Byzantine-robust algorithm}
    & \multicolumn{3}{c}{DRSGD-ByMI}  \\
    \cmidrule(lr){3-7}
    & & Acc (all) & Acc  (normal) & Acc (scc)  & FDP & $\mathrm{P}_a$ \\
    \midrule
            \multirow{5}{*}{OOD Attack}
    & No robust rule  & 57.8  & 65.0    &  &   &   \\
        & UBAR            & 92.4 & 93.3  & 97.2  & 3.6 &  100.0 \\
    & IOS & 90.2 & 95.2 & 97.4   & 39.9 &  100.0 \\
  & BALANCE      & 86.0 &  86.5 & 97.1  & 5.4  & 100.0  \\
         \midrule

    \multirow{5}{*}{Gradient Attack}
    & No robust rule     & 37.2 & 37.3   &  &  &   \\
        & UBAR      & 40.0 & 93.3  & 97.4   & 16.8 & 100.0   \\
    & IOS & 34.2 & 76.3   & 96.9   & 18.7 & 100.0  \\
    & BALANCE           & 57.8 &  97.3 & 97.3   & 16.6  & 100.0 \\
      \midrule
    \multirow{3}{*}{IPM Attack}
    &  No robust rule     &  10.4 & 10.4  &  &  &   \\
                & UBAR           & 25.1 & 93.0  & 97.6 & 21.7 & 100.0  \\
    & IOS  & 9.8 & 10.9 & 97.4  & 11.3 & 100.0 \\
 & BALANCE         & 8.5 & 97.2 & 97.3 & 16.8 & 100.0   \\
       \bottomrule
    \end{tabular}}
    \caption{Comparison of DRSGD-ByMI methods with three corresponding decentralized Byzantine-robust stochastic algorithms under different Byzantine attacks on the MNIST dataset.}
    \label{tab:attack-results1}
    \end{table}

\begin{table}[htbp]
\footnotesize
    \centering
    \renewcommand{\arraystretch}{1.2}
    {\fontsize{8pt}{10pt}\selectfont
    \begin{tabular}{lcccccc}
    \toprule
    \multirow{2}{*}{Attack} & \multirow{2}{*}{\makecell[c]{Aggregation \\ rule}}
    & \multicolumn{2}{c}{Byzantine-robust algorithm}
    & \multicolumn{3}{c}{DRSGD-ByMI}  \\
    \cmidrule(lr){3-7}
    & & Acc (all) & Acc (normal) & Acc (scc) & FDP & $\mathrm{P}_a$ \\
    \midrule

    \multirow{4}{*}{OOD Attack}
    & No robust rule    & 54.5 & 57.6 &  &  &\\
    & UBAR            & 78.0 & 80.2 & 82.1 & 27.0 & 99.3 \\
    & IOS             & 74.7 & 79.5 & 82.7 & 15.8 & 100.0 \\
    & BALANCE             & 79.1 & 83.5 & 84.0 & 12.5 & 100.0 \\
    \midrule

    \multirow{4}{*}{Gradient Attack}
    & No robust rule  & 37.8 & 38.2 &  &  &  \\
    & UBAR            & 13.3 & 80.4 & 85.1 & 23.7 & 100.0 \\
    & IOS             & 29.4 & 64.5 & 83.3 & 14.0 & 100.0 \\
    & BALANCE            & 30.8 & 83.8 & 84.1 & 14.6  & 100.0 \\
    \midrule

    \multirow{4}{*}{IPM Attack}
    & No robust rule  & 10.0 & 10.0 &  &  &  \\
    & UBAR            & 18.6 & 80.0 & 85.6 & 15.1 & 100.0 \\
    & IOS             & 8.9  & 13.2  & 84.8 & 24.3 & 100.0 \\
    & BALANCE             & 14.3  & 84.4  & 85.5 & 15.8 & 100.0 \\
    \bottomrule
    \end{tabular}}
    \caption{Comparison of DRSGD-ByMI methods under different Byzantine attacks on the Fashion-MNIST dataset.}
    \label{tab:attack-results-fmnist}
\end{table}

Figures \ref{fig:comparison1}--\ref{fig:comparison3} present the curves of the norm of the global gradient and the test accuracy over training iterations under three Byzantine attacks on the MNIST dataset. For the DRSGD-ByMI methods, both the global gradient norm and the test accuracy are evaluated on the largest strongly connected component of the pruned graph, whereas for the corresponding decentralized Byzantine-robust algorithms, they are evaluated over the normal nodes.
 As shown in all figures, the DRSGD-ByMI methods consistently achieve high ``Acc (scc)'', and their final performance is numerically better than the baselines ``Acc (all)'' and also better than, or competitive with, their ``Acc (normal)'' across all three attacks.

Moreover, the norm of the global gradient for the three DRSGD-ByMI methods exhibits asymptotic convergence and decreases to the range of $10^{-1}\sim 10^{-2}$ after 500 iterations, which is consistent with the theoretical result of high-probability exact convergence. In contrast, their Byzantine-robust counterparts remain at the scale of $10^0\sim 10^1$ or even diverge. This phenomenon can be attributed to the fact that the convergence rate of the norm of the global gradient evaluated at the averaged model over the normal nodes involves a non-vanishing steady-state error term, as pointed out in \citep{wu-byzantine-resilient-2023, fang-byzantine-robust-2024}.
\revise{Furthermore, the inferior empirical performance of IOS and UBAR compared to BALANCE can be attributed to their strict reliance on an exact prior estimate of the Byzantine neighbor count.
Any mis-specification of this parameter can cause the accumulated expected consensus error among normal nodes to become unbounded, subsequently inflating the norm of the global gradient.}

\begin{figure}[tb]
  \centering
    \begin{subfigure}[b]{0.45\textwidth}
    \centering
    \includegraphics[width=\textwidth]{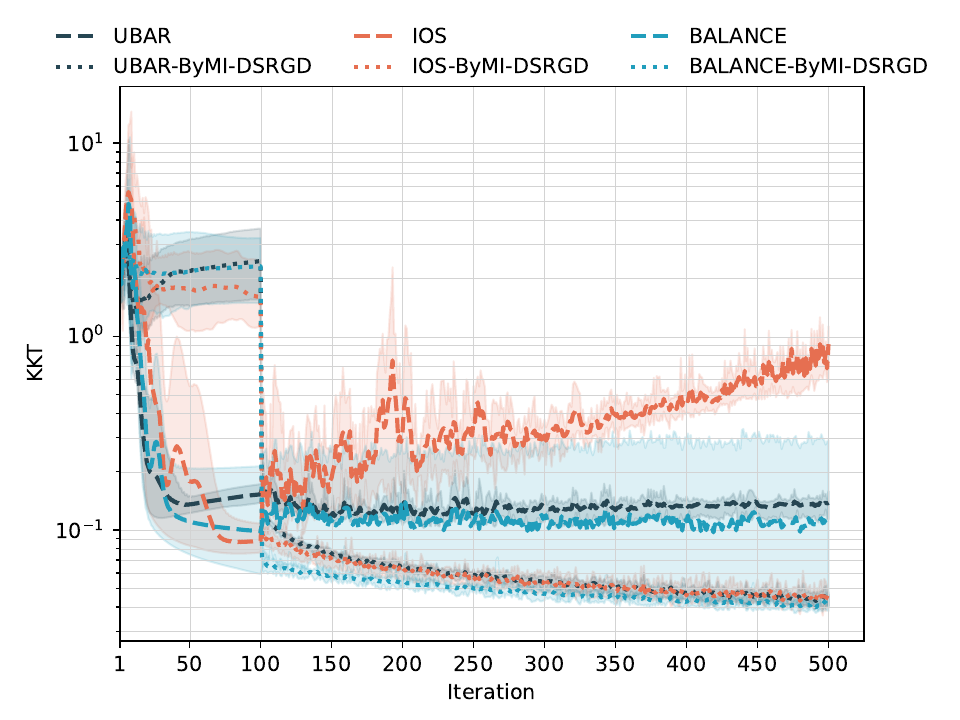}
    \caption{Norm of global gradient}
  \end{subfigure}
    \begin{subfigure}[b]{0.45\textwidth}
    \centering
    \includegraphics[width=\textwidth]{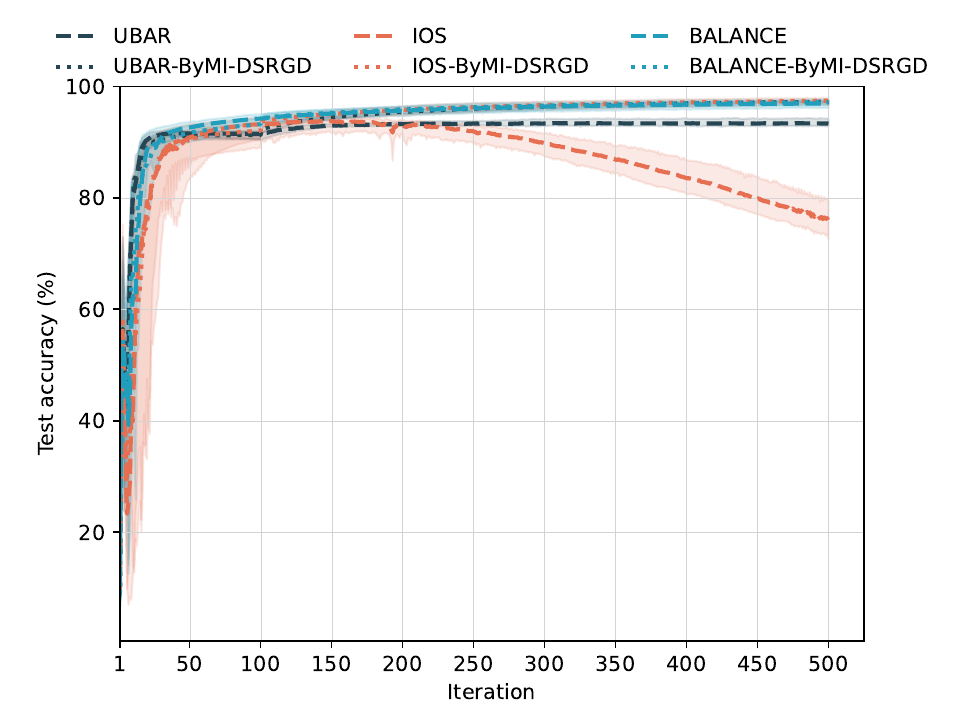}
    \caption{Test accuracy}
  \end{subfigure}
  \caption{Comparison of DRSGD-ByMI with corresponding decentralized Byzantine-robust stochastic algorithms under OOD attack on the MNIST dataset.}
  \label{fig:comparison1}
\end{figure}

\begin{figure}[htbp]
  \centering
    \begin{subfigure}[b]{0.45\textwidth}
    \centering
    \includegraphics[width=\textwidth]{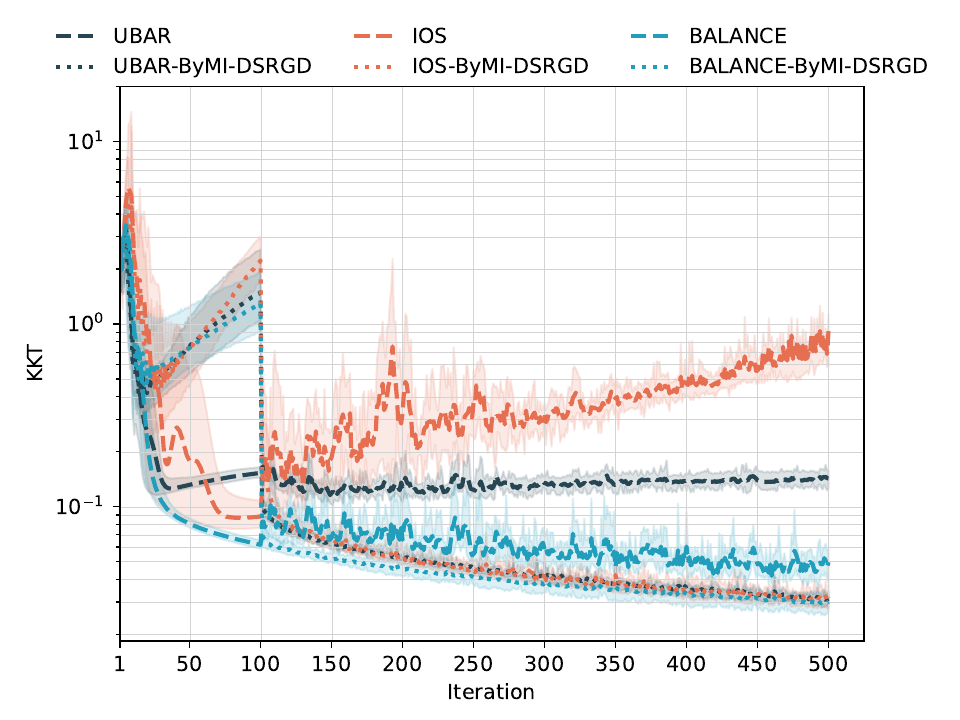}
    \caption{Norm of global gradient}
  \end{subfigure}
    \begin{subfigure}[b]{0.45\textwidth}
    \centering
    \includegraphics[width=\textwidth]{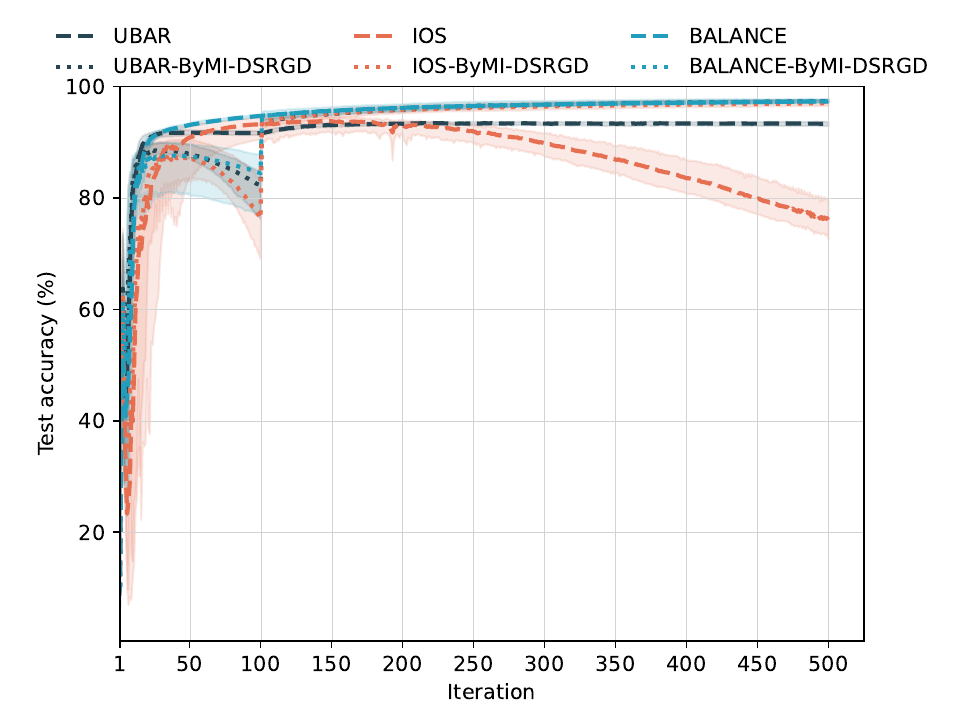}
    \caption{Test accuracy}
  \end{subfigure}

  \caption{Comparison of DRSGD-ByMI with corresponding decentralized Byzantine-robust stochastic algorithms under gradient attack on the MNIST dataset.}
  \label{fig:comparison2}
\end{figure}

\begin{figure}[htbp]
  \centering
    \begin{subfigure}[b]{0.45\textwidth}
    \centering
    \includegraphics[width=\textwidth]{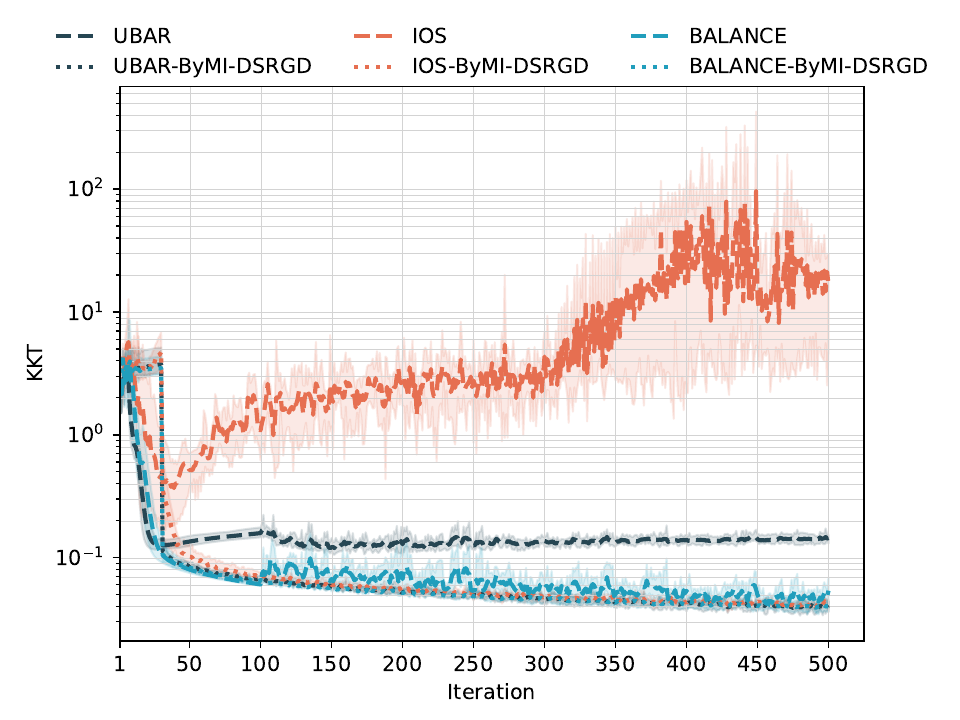}
    \caption{Norm of global gradient}
  \end{subfigure}
    \begin{subfigure}[b]{0.45\textwidth}
    \centering
    \includegraphics[width=\textwidth]{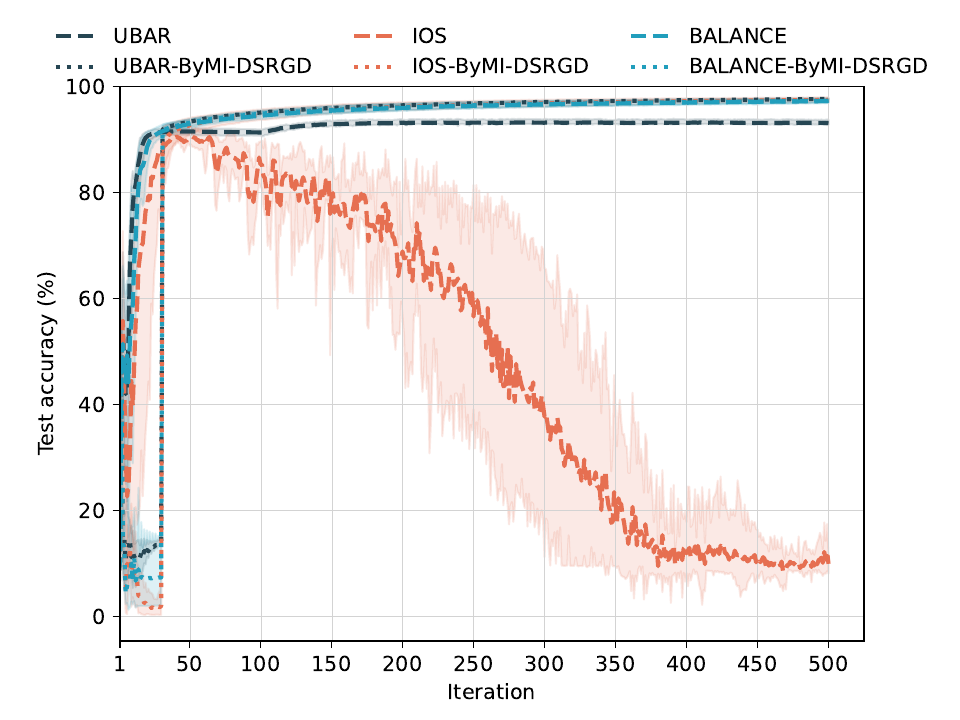}
    \caption{Test accuracy}
  \end{subfigure}

  \caption{Comparison of DRSGD-ByMI with corresponding decentralized Byzantine-robust stochastic algorithms under IPM attack on the MNIST dataset.}
  \label{fig:comparison3}
\end{figure}

\textbf{Numerical performance under different Byzantine ratios.}
Table~\ref{tab:attack-results2} and \ref{tab:attack-results3} present the FDP,  $\mathrm{P}_a$ and Acc (scc)  of our DRSGD-ByMI methods under different Byzantine attacks and Byzantine ratios on the MNIST dataset and Fashion-MNIST dataset.
It can be seen that $\mathrm{P}_a$ approaches $100\%$ in most cases and Acc (scc) shows limited sensitivity to the Byzantine ratios.
Under certain attack types and Byzantine ratios that cause $\mathrm{P}_a$  to fall below $100\%$, Acc (scc) exhibits a substantial degradation, indicating that the largest strongly connected component may contain Byzantine nodes.
Nevertheless, FDPs are controlled below $30\%$ in almost all cases.
Overall, DRSGD-ByMI methods not only enhance robustness and interpretability in decentralized learning tasks under Byzantine attacks, but also efficiently identify Byzantine machines across different Byzantine ratios.

\begin{table}[htbp]
\centering
{\fontsize{8pt}{10pt}\selectfont
\begin{tabular}{lcccccccccc}
\toprule
 \multirow{2}{*}{Attack} &  \multirow{2}{*}{Method}  & \multicolumn{3}{c}{$\varrho=0.1$} & \multicolumn{3}{c}{$\varrho=0.2$} & \multicolumn{3}{c}{$\varrho=0.3$} \\
\cmidrule(lr){3-5} \cmidrule(lr){6-8} \cmidrule(lr){9-11}
 &  & FDP & $\mathrm{P}_a$ & Acc (scc) & FDP & $\mathrm{P}_a$ & Acc (scc) & FDP & $\mathrm{P}_a$ & Acc (scc) \\
\midrule
 \multirow{3}{*}{\makecell[c]{OOD Attack}}
  & {U-B-D} & 12.7 & 100.0 & 97.2 & 3.6 & 100.0 & 97.2 & 6.5 & 100.0 & 97.3 \\
  & {I-B-D} & 28.0 & 100.0 & 97.0 & 39.9 & 100.0 & 97.4 & 42.8 & 100.0 & 97.0 \\
  & {B-B-D} & 17.4 & 100.0 & 97.3 & 5.4 & 100.0 & 97.1 & 6.8 & 100.0 & 97.2 \\
\midrule
 \multirow{3}{*}{\makecell[c]{Gradient Attack}}
  & {U-B-D}  & 23.6 & 100.0 & 97.4 & 16.8 & 100.0 & 97.4 & 16.5 & 100.0 & 97.3 \\
  & {I-B-D}   & 27.5 & 100.0 & 96.7 & 18.7 & 100.0 & 96.9 & 17.1 & 100.0 & 96.9 \\
  & {B-B-D}   & 25.2 & 100.0 & 97.2 & 16.6 & 100.0 & 97.3 & 16.6 & 100.0 & 97.3 \\
\midrule
\multirow{3}{*}{\makecell[c]{IPM Attack}}
  & {U-B-D} & 24.0 & 99.7 & 78.9 & 21.7 & 100.0 & 97.6 & 15.2 & 100.0 & 97.6 \\
  & {I-B-D} & 19.6 & 99.9 & 78.8 & 11.3 & 100.0 & 97.4 & 11.8 & 100.0 & 97.7 \\
  & {B-B-D} & 16.0 & 100.0 & 97.4 & 16.8 & 100.0 & 97.3 & 11.4 & 100.0 & 97.3 \\
\bottomrule
\end{tabular}}
\caption{Comparison of FDP, $\mathrm{P}_a$ and Acc (scc) against Byzantine ratio $\varrho$ under different Byzantine attack scenarios on the MNIST dataset.
Here, U-B-D, I-B-D, and B-B-D denote UBAR-DRSGD-ByMI, IOS-DRSGD-ByMI, and BALANCE-DRSGD-ByMI, respectively.}
\label{tab:attack-results2}
\end{table}

\begin{table}[htbp]
\centering
{\fontsize{8pt}{10pt}\selectfont
\begin{tabular}{lcccccccccc}
\toprule
 \multirow{2}{*}{Attack} &  \multirow{2}{*}{Method}
 & \multicolumn{3}{c}{$\varrho=0.1$}
 & \multicolumn{3}{c}{$\varrho=0.2$}
 & \multicolumn{3}{c}{$\varrho=0.3$} \\
\cmidrule(lr){3-5} \cmidrule(lr){6-8} \cmidrule(lr){9-11}
 &  & FDP &  $\mathrm{P}_a$ & Acc (scc)
 & FDP  & $\mathrm{P}_a$  & Acc (scc)
 & FDP &  $\mathrm{P}_a$  & Acc (scc)  \\
\midrule
 \multirow{3}{*}{\makecell[c]{OOD Attack}}
  & {U-B-D} & 38.4 & 99.1 & 83.7 & 27.0  & 99.3 & 82.1 & 17.4 &  94.3& 69.0 \\
  & {I-B-D} & 24.3  & 99.6  & 82.4  & 15.8 & 100.0 & 82.7 & 12.2 & 99.6 &  79.8\\
  & {B-B-D} & 18.9 & 99.8 & 83.2 & 12.5 & 100.0 & 84.0 & 17.8 & 89.6 & 73.6 \\
\midrule
 \multirow{3}{*}{\makecell[c]{Gradient Attack}}
  & {U-B-D}  & 35.5 & 99.0 & 79.4 & 23.7 & 100.0 & 85.1 & 14.5 & 100.0 & 85.1 \\
  & {I-B-D}  & 19.9 & 100.0 & 82.8 & 14.0 & 100.0 & 83.3 & 9.8 & 100.0 & 83.5 \\
  & {B-B-D}  & 18.8 & 100.0 & 84.1 & 14.6 & 100.0 & 84.1 & 11.5 & 100.0 & 84.4 \\
\midrule
\multirow{3}{*}{\makecell[c]{IPM Attack}}
  & {U-B-D} & 15.9 & 100.0 & 85.5 & 15.1 & 100.0 & 85.6 & 12.7 & 100.0 & 85.6 \\
  & {I-B-D} & 20.4 & 100.0 & 84.3 & 24.3 & 100.0 & 84.8 & 16.5 & 100.0 & 85.1 \\
  & {B-B-D} & 27.9 & 100.0 & 85.2 & 15.8 & 100.0 & 85.5 & 13.4 & 100.0 & 85.3 \\
\bottomrule
\end{tabular}}
\caption{Comparison of FDP, $\mathrm{P}_a$ and Acc (scc) against Byzantine ratio $\varrho$ under different Byzantine attack scenarios on Fashion-MNIST dataset.
Here, U-B-D, I-B-D, and B-B-D denote UBAR-DRSGD-ByMI, IOS-DRSGD-ByMI, and BALANCE-DRSGD-ByMI, respectively.}
\label{tab:attack-results3}
\end{table}

\section{Conclusions}

Decentralized learning systems under Byzantine attacks have received increasing attention.
However, the existing Byzantine-robust methods are less than satisfactory in terms of exact convergence.
From a detect-then-optimize perspective, this paper proposes a data-driven and p-value-free method, DRSGD-ByMI. The proposed detection procedure achieves finite-sample FDP (or false discovery number) control and sure-detection with high probability via reliable robust estimators, which guarantee that the pruned graph has a strongly connected component over normal nodes.
We introduce a rescaled stochastic gradient descent algorithm for the nonconvex setting with row-stochastic mixing matrix.
The algorithm attains a non-asymptotic convergence rate of
$\mathcal{O}(1/\sqrt{K m_g})$, and exhibits linear speedup in the asymptotic phase of the iterations as the number of nodes increases.
Therefore, it serves as a useful tool for solving robust distributed learning problems.

There remain many promising directions for future research.
First, it is worth investigating whether the detection procedure can be further improved and statistically characterized under heterogeneous (i.e., non-i.i.d.) data distributions.
Second, in dynamic Byzantine environments, it is an open question whether multiple rounds of ByMI, in conjunction with DRSGD, can effectively handle time-varying Byzantine behaviors.

\newpage
\appendix

\section{Certifying Proposition~\ref{prop:robust_est}}\label{appendix:Prop4.4}
We bound the estimation error $\norm{\widehat{\gbf}_i - \gbf_i^\ast}$ by constructing a proxy estimator and bounding the bias. Here, $\gbf_i^\ast = \mathbb{E}_{\sbf \sim \Pcal}[\nabla \ell(\btheta_{i, 0}; \sbf)]$, the robust estimator $\widehat{\gbf}_i$ is computed using $\{ \gbf_{j} (\btheta_{j, 0}; \xi_{j, 0}^{(1)})\}_{j \in \Ncal_i}$.

\noindent\textbf{Step 1: Constructing the Proxy Estimator.}
Let us define a proxy set of gradients $\mathcal{S} := \{ \gbf_{j} (\btheta_{i, 0}; \xi_{j, 0}^{(1)})\}_{j \in \Ncal_i}$.
In this proxy set, all gradients are computed at the \textit{same} parameter $\btheta_{i, 0}$ belonging to node $i$.
Let $\widehat{\gbf}_i'$ be the output of the robust mean estimator when applied to the proxy set $\mathcal{S}' := \{ \gbf_{j} (\btheta_{i, 0}; \xi_{j, 0}^{(1)})\}_{j \in \Ncal_i}$.

For any normal neighbor $j \in \Gcal_i$, the element $\gbf_{j} (\btheta_{i, 0}; \xi_{j, 0}^{(1)})$ is an unbiased estimator of $\gbf_i^\ast$, i.e.,
$$\mathbb{E}_{\xi_{j, 0}^{(1)}}[\gbf_{j} (\btheta_{i, 0}; \xi_{j, 0}^{(1)})] = \mathbb{E}_{\xi_{i, 0}^{(1)}}[\gbf_{i} (\btheta_{i, 0}; \xi_{i, 0}^{(1)})] = \gbf_i^\ast,
$$
which is due to the distributional homogeneity among normal nodes.
Furthermore, since the datasets $\xi_{j ,0}^{(1)}$ are independent across nodes, the elements in $\mathcal{S}'_i$ corresponding to normal nodes are independent and identically distributed (i.i.d.) with mean $\gbf_i^\ast$ and bounded variance (scaled by sample size $n/2$).

Standard results for robust mean estimators (e.g., coordinate-wise median, geometric median) guarantee that when a sufficient fraction of inputs are i.i.d. around a true mean $\gbf_i^\ast$, the estimation error is bounded by the statistical noise.
Thus, with high probability:
\begin{equation}\label{eq:proxy_error}
\norm{\widehat{\gbf}_i' - \gbf_i^\ast} \le \textrm{err}_{\textrm{rob}}(\epsilon) = o(1),
\end{equation}
where for the coordinate-wise robust methods like median or trimmed mean, we have $\textrm{err}_{\textrm{rob}}(\epsilon) = \mathcal{O}(\sigma_0 \sqrt{\epsilon d/n})$ and for strongly robust methods like Filtering, we have $\textrm{err}_{\textrm{rob}}(\epsilon) = \mathcal{O}(\sigma_0 \sqrt{\epsilon/n})$.
Please refer to \citep{zhu2023byzantine}.

\noindent\textbf{Step 2: Bounding the Input Perturbation (Bias).}
Now we compare the actual inputs $\{ \gbf_{j} (\btheta_{i, 0}; \xi_{j, 0}^{(1)})\}_{j \in \Ncal_i}$ with the proxy inputs $\{ \gbf_{j} (\btheta_{i, 0}; \xi_{j, 0}^{(1)})\}_{j \in \Ncal_i}$.
For any neighbor $j$, the difference is:
$$ \boldsymbol{\delta}_{g, j} = \gbf_{j} (\btheta_{j, 0}; \xi_{j, 0}^{(1)}) - \gbf_{j} (\btheta_{i, 0}; \xi_{j, 0}^{(1)}) = \frac{1}{|\xi_{j ,0}^{(1)}|} \sum\limits_{s \in \xi_{j ,0}^{(1)}} \left( \nabla \ell(\btheta_{j, 0}; \sbf) - \nabla \ell(\btheta_{i, 0}; \sbf) \right). $$
Using the $L$-smoothness of the empirical loss function (Condition~\ref{asp:loss}), we have for any sample $s$:
$$ \norm{\nabla \ell(\btheta_{j, 0}; \sbf) - \nabla \ell(\btheta_{i, 0}; \sbf)} \le L \norm{\btheta_{j, 0} - \btheta_{i, 0}}. $$
Consequently, the perturbation is bounded deterministically by:
$$ \norm{\boldsymbol{\delta}_{g, j}} \le L \norm{\btheta_{j, 0} - \btheta_{i, 0}}. $$
From the Warm-up condition (Condition~\ref{asp:warm_up}), we know that $\norm{\btheta_{j, 0} - \bar{\btheta}_{0}} = \mathcal{O}(1/\sqrt{k_0^{1-\delta}})$.
By the triangle inequality:
$$ \norm{\btheta_{j, 0} - \btheta_{i, 0}} \le \norm{\btheta_{j, 0} - \bar{\btheta}_{0}} + \norm{\bar{\btheta}_{0} - \btheta_{i, 0}} = \mathcal{O}\biggl(\frac{L}{\sqrt{k_0^{1-\delta}}}\biggr). $$
Thus, the input perturbation bias \revise{$b := \max_{j \in  \Gcal_i} \norm{\boldsymbol{\delta}_{g, j}}$ satisfies the following equality
\begin{equation}\label{eq:bias_bound}
b  = \mathcal{O}\biggl(\frac{L}{\sqrt{k_0^{1-\delta}}}\biggr) = o(1).
\end{equation}}

\noindent\textbf{Step 3: Stability of the Robust Estimator.}
The above justification shows that for every $j \in \Ncal_i$, the corresponding gradients in the two input sets $\{ \gbf_{j} (\btheta_{j, 0}; \xi_{j, 0}^{(1)})\}_{j \in \Ncal_i}$ and $\{\gbf_{j} (\btheta_{i, 0}; \xi_{j, 0}^{(1)})\}_{j \in \Ncal_i}$ differ by at most $\norm{\boldsymbol{\delta}_{g, j}}$ for each coordinate, and $\max_{j} \norm{\boldsymbol{\delta}_{g, j}} \le b$.

Robust mean estimators like the geometric median, coordinate-wise median or Filtering are stable with respect to small perturbations of the input data.
Specifically, they are Lipschitz continuous with respect to the input set (in the appropriate metric) provided the fraction of outliers is below the breakdown point.
This stability property means that if $\norm{\widehat{\gbf}_i' - {\gbf}_i^\ast} \le \delta_{\gbf}$ then
\begin{equation}\label{eq:stability}
\norm{\widehat{\gbf}_i - {\gbf}_i^\ast} \le \delta_{\gbf} + C_{\text{stab}} \cdot b = \delta_{\gbf} + \mathcal{O}\biggl(\frac{L}{\sqrt{k_0^{1-\delta}}}\biggr),
\end{equation}
for some suitable factor $C_{\text{stab}}$.

For coordinate-wise estimators like the simple mean, the geometric median and the coordinate-wise median, the above claim is easy to verify.
Here, we discuss the stability of the more advanced robust estimator, Filtering, which achieves a dimension-agnostic bias $O_P(\sqrt{\epsilon/n})$ against the corruption from the Byzantine machines.
Given samples $\set{\xbf_j}_{j \in [m]}$ and contamination level $\epsilon \in (0, 1/2]$, the Filtering estimate
$$\widehat{\gbf}_i = \sum_{j \in \Ncal_i} \widehat{w}_j \gbf_{j} (\btheta_{j, 0}; \xi_{j, 0}^{(1)})$$ is a weighted average of the samples  with the weight $\widehat{\wbf}$ defined by
\begin{equation}
    \widehat{\wbf} := \argmin_{\wbf \in \Delta_{\Ncal_i, \epsilon}} \norm{\Sigma_{\wbf}}_{\mathrm{op}},
\end{equation}
where the feasible set is
$$\Delta_{\Ncal_i, \epsilon} := \Bigset{\wbf = (w_j)_{j\in \Ncal_i}:  w_j \in [0, (1 - \epsilon)^{-1}], \sum_{j \in \Ncal_i} w_j = 1},$$
and the weighted covariance is
$$\Sigma_{\wbf} := \size{\Ncal_i}^{-1} \sum_{j \in \Ncal_i} w_j (\gbf_{j} (\btheta_{j, 0}; \xi_{j, 0}^{(1)}) - \mu_{\wbf}) (\gbf_{j} (\btheta_{j, 0}; \xi_{j, 0}^{(1)}) - \mu_{\wbf})^\top,$$
with $\mu_{\wbf} := \sum_{j \in \Ncal_i} w_j \gbf_{j} (\btheta_{j, 0}; \xi_{j, 0}^{(1)})$.

The estimation error rate of the Filtering estimate can be guaranteed by the following concept called $(\epsilon, \delta)$-stability \citep{diakonikolas2020stability}.
\begin{definition}[$(\epsilon, \delta)$-stability]\label{def:stable}
    Let $\Scal = \set{\xbf_j}_{j \in \Ncal_i}$ be $\size{\Ncal_i}$ samples.
    For $\epsilon \in [0, 1/2]$ and $\delta \ge \epsilon$, we say that $\Scal$ is $(\epsilon, \delta)$-stable w.r.t. $\mu$ and $\sigma_0^2$ if for any subset $\Scal' \subset \Scal$ with $\size{\Scal'} \ge (1 - \epsilon) \size{\Scal}$, we have (1) $\norm{\mu_{\Scal'} - \gbf_i^\ast} \le \sigma_0 \delta$, (2) $\norm{\Sigma_{\Scal'} - \sigma_0^2 \Ibf}_{\mathrm{op}} \le \sigma_0^2 \delta^2 / \epsilon$, where $\Ibf$ is the identity matrix, $\mu_{\Scal'} = \size{\Scal'}^{-1} \sum_{j \in \Scal'} \xbf_j$ and $\Sigma_{\Scal'} = \size{\Scal'}^{-1} \sum_{j \in \Scal'} (\xbf_j - \mu) (\xbf_j - \mu)^\top$.
\end{definition}

By Theorems 1.4 in \citet{diakonikolas2020stability}, we have with probability at least $1 - \tau$, $\set{\gbf_{j} (\btheta_{i, 0}; \xi_{j, 0}^{(1)})}_{j \in \Ncal_i}$ differs by at most $\epsilon \size{\Ncal_i}$ points from an $(\epsilon, \delta)$-stable set
$\set{\xbf_j'}_{j \in \Ncal_i}$ with $\delta = \mathcal{O}(\sigma_0\sqrt{(\log(\tau) + d)/(n \size{\Ncal_i})} + \sigma_0\sqrt{\epsilon / n}$, $\mu = \gbf_i^\ast$ and $\sigma_0 = \norm{\Sigma_i(\btheta_{j, 0})}_{\mathrm{op}}$.
Then by Theorem 1.3 in \citet{diakonikolas2020stability}, we have $\norm{\widehat{\gbf}_i' - \gbf_i^\ast} \le \mathcal{O}(\delta) = o(1)$.

To show the stable property of $\widehat{\gbf}_i$, we consider a perturbed set $\set{\xbf_j}$ of $\set{\xbf_j'}$ such that $\max_{j \in \Ncal_i}\|\xbf_j' - \xbf_j\| \le b$.
Note that the two stability conditions in Definition~\ref{def:stable} also hold for $\set{\xbf_j}$ with only the term $\delta$ for $\set{\xbf_j'}$ changes to $\delta + b$ for $\set{\xbf_j}$.
Since $\set{\gbf_{j} (\btheta_{i, 0}; \xi_{j, 0}^{(1)})}_{j \in \Ncal_i}$ differs from $\set{\xbf_j'}_{j \in \Ncal_i}$ by at most $\epsilon \size{\Ncal_i}$ points, we can construct the set $\set{\xbf_j}$, which differs from $\set{\gbf_{j} (\btheta_{j, 0}; \xi_{j, 0}^{(1)})}_{j \in \Ncal_i}$ by at most $\epsilon \size{\Ncal_i}$ points.
Therefore, it holds that $\norm{\widehat{\gbf}_i - \gbf_i^\ast} \le \mathcal{O}(\delta + b)$.

\noindent\textbf{Finally.}
Combining the results from Eq. \eqref{eq:proxy_error} and Eq. \eqref{eq:stability} via the triangle inequality:
\begin{align*}
\norm{\widehat{\gbf}_i - \gbf_i^\ast} &\le \norm{\widehat{\gbf}_i - \widehat{\gbf}_i'} + \norm{\widehat{\gbf}_i' - \gbf_i^\ast} \\
&\le \mathcal{O}\biggl(\frac{L}{\sqrt{k_0^{1-\delta}}}\biggr) + \mathcal{O}\left(\sqrt{\frac{\sigma_0^2}{n \cdot |\Gcal_i|}} + \textrm{err}_{\textrm{rob}}(\epsilon)\right).
\end{align*}
As $n \to \infty$ and $k_0 \to \infty$, both terms vanish.
Thus, $\norm{\widehat{\gbf}_i - \gbf_i^\ast} \le o(1)$, which satisfies Condition~\ref{cond:robust_mean_est} with $\delta_{\gbf} = o(1)$.

\section{Proof of Theorem~\ref{thm:fdp_control}}

We begin by stating some fundamental probability inequalities that will be used throughout the proof.

\begin{lemma}[Bennett's inequality]\label{lem:bennett}
  Let $\{X_i\}_{i=1}^n$ be independent random variables.
  Assume that $X_i - \Ebb X_i \le K$ almost surely for every $i$.
  Then, for any $c > 0$, we have
  \begin{equation*}
    \Pbb\biggl( \sum\limits_{i=1}^n (X_i - \Ebb X_i) \ge c \biggr) \le \exp\biggl(-\frac{\sigma_0^2}{K^2} h_{\text{b}}\biggl(\frac{Kc}{\sigma_0^2}\biggr) \biggr),
  \end{equation*}
  where $\sigma_0^2 = \sum_{i=1}^n \Var(X_i)$ and $h_{\text{b}}(u) = (1 + u) \log(1 + u) - u$.
  Additionally, if $\abs{X_i - \Ebb X_i} \le K$ almost surely for every $i$, then
  \begin{equation*}
    \Pbb\biggl( \biggabs{\sum\limits_{i=1}^n (X_i - \Ebb X_i)} \ge c \biggr) \le 2 \exp\biggl(-\frac{\sigma_0^2}{K^2} h_{\text{b}}\biggl(\frac{Kc}{\sigma_0^2}\biggr) \biggr).
  \end{equation*}
\end{lemma}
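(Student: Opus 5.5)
Bennett's inequality is classical, and the plan is to prove it by the Chernoff (exponential moment) method, treating the one-sided bound first and then deducing the two-sided bound by symmetry.

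Write $Y_i := X_i - \Ebb X_i$, so that $\Ebb Y_i = 0$, $Y_i \le K$ almost surely, and $\sum_i \Ebb Y_i^2 = \sigma_0^2$.

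\textbf{Step 1 (Markov on the exponential).} For any $\lambda > 0$, Markov's inequality applied to $e^{\lambda \sum_i Y_i}$, together with independence, gives
\[
\Pbb\Bigl(\sum_i Y_i \ge c\Bigr) \le e^{-\lambda c}\prod_{i} \Ebb e^{\lambda Y_i}.
\]

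\textbf{Step 2 (the key moment-generating bound).} Let
\[
\phi(u) := \frac{e^{u} - 1 - u}{u^2}, \qquad \phi(0) := \tfrac12 .
\]
The function $\phi$ is nondecreasing on $\mathbb{R}$; this is the one nontrivial calculus fact in the proof, and it can be checked by writing $\phi(u) = \int_0^1 (1-t) e^{tu}\,dt$. Since $\lambda Y_i \le \lambda K$, monotonicity gives
\[
e^{\lambda Y_i} \le 1 + \lambda Y_i + \lambda^2 Y_i^2 \,\phi(\lambda K).
\]
Taking expectations and using $1 + x \le e^x$,
\[
\Ebb e^{\lambda Y_i} \le \exp\Bigl(\Ebb Y_i^2 \,\frac{e^{\lambda K} - 1 - \lambda K}{K^2}\Bigr).
\]
Multiplying over $i$, the product is at most $\exp\bigl(\sigma_0^2 (e^{\lambda K} - 1 - \lambda K)/K^2\bigr)$.

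\textbf{Step 3 (optimize $\lambda$).} Minimize the exponent
\[
-\lambda c + \frac{\sigma_0^2}{K^2}\bigl(e^{\lambda K} - 1 - \lambda K\bigr)
\]
over $\lambda > 0$. The minimizer is $\lambda^\ast = K^{-1}\log(1+u)$ with $u = Kc/\sigma_0^2$. Substituting it back, the exponent becomes
\[
-\frac{\sigma_0^2}{K^2}\bigl[(1+u)\log(1+u) - u\bigr] = -\frac{\sigma_0^2}{K^2}\, h_{\text{b}}(u),
\]
which is the stated one-sided bound. This is routine algebra.

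\textbf{Step 4 (two-sided bound).} Under $\abs{Y_i} \le K$, the variables $-Y_i$ also satisfy $-Y_i \le K$ and have the same variances. Applying Steps 1--3 to $\{-Y_i\}$ bounds $\Pbb(\sum_i Y_i \le -c)$ by the same quantity, and a union bound gives the factor $2$.

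The only step needing care is the monotonicity of $\phi$ in Step 2; the integral representation settles it, because its integrand $(1-t)e^{tu}$ is increasing in $u$ for each $t \in [0,1]$.
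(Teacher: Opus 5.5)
Your proof is correct and complete. It is the standard Chernoff-method proof of Bennett's inequality, and every step checks out. The representation $\phi(u)=\int_0^1(1-t)e^{tu}\,dt$ gives the monotonicity of $\phi$. The minimizer $\lambda^\ast=K^{-1}\log(1+Kc/\sigma_0^2)$ is positive because $c>0$, and substituting it back yields exactly $-\frac{\sigma_0^2}{K^2}h_{\text{b}}(u)$. The symmetrization under $|Y_i|\le K$ gives the factor $2$.

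The paper does not prove this lemma at all; it quotes it as a classical tool, so there is no alternative argument to compare against.

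Two small remarks:
\begin{itemize}
\item The statement implicitly requires $K>0$ and $0<\sigma_0^2<\infty$. Outside that range the bound is either trivial or must be read as a limit.
\item Your Step~2 bound $\exp\bigl(\sigma_0^2(e^{\lambda K}-1-\lambda K)/K^2\bigr)$ is nondecreasing in $\sigma_0^2$. So the same argument gives the inequality with $\sigma_0^2$ replaced by any upper bound $V\ge\sum_i\Var(X_i)$, and it also goes through conditionally on $\xi^{(1)}_0$. This variance-proxy, conditional form is what the paper actually invokes, for example in the proof of Lemma~\ref{lem:empirical_err}, where only the upper bound $|\Gcal_i|F_{S,+}(t)$ on the conditional variance is available.
\end{itemize}
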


\begin{lemma}[Berry-Esseen Inequality \citep{petrov2002probabilities}]\label{lem:Berry-Esseen}
  Suppose that $\{X_i\}_{i=1}^n$ are independent random variables with mean zero, satisfying $\Ebb[\abs{X_j}^{q}] < \infty$ for some $q > 2$.
  Denote $\kappa = \min(1, q - 2)$.
  Let $B_n = \sum_{i=1}^n \Ebb X_i^2$ and
  \[
    L_n = B_n^{-1 - \frac{\kappa}{2}} \sum\limits_{i = 1}^n \Ebb \abs{X_i}^{2+\kappa}.
  \]
  There exists a universal constant $A > 0$ such that
  \begin{equation}
    \max_{-\infty < x < \infty} \left\lvert F_n(x) - \Phi(x) \right\rvert \le A L_n,
  \end{equation}
  where $\Phi(\cdot)$ is the distribution function of the standard Gaussian distribution and $F_n(x)$ is the distribution function of the normalized summation, i.e., $F_n(x) \triangleq \Pbb[B_n^{-1/2} \sum_{i=1}^n X_i \le x]$.
  When $\{X_i\}_{i=1}^n$ are identically distributed with $\Ebb X_1^2 = \sigma_0^2$ and $\Ebb\abs{X_1}^{2 + \kappa} = \gamma^{2 + \kappa}$, we have $L_n = \frac{\gamma^{2+\kappa}}{\sigma_0^{2+\kappa} n^{\kappa / 2}}$.
\end{lemma}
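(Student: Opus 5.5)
The general inequality is the classical non-uniform-moment Berry--Esseen bound, so my plan is mostly to reduce it to the standard Fourier-analytic argument and then check the i.i.d.\ specialization by direct computation. Throughout I would normalize: set $Y_i := B_n^{-1/2} X_i$. Then $\sum_i \Ebb Y_i^2 = 1$ and $L_n = \sum_i \Ebb\abs{Y_i}^{2+\kappa}$, and $F_n$ is the distribution function of $T_n := \sum_i Y_i$. Since $\kappa = \min(1, q-2) \le q-2$, Lyapunov's inequality gives $\Ebb\abs{X_i}^{2+\kappa} \le (\Ebb\abs{X_i}^q)^{(2+\kappa)/q} < \infty$, so $L_n$ is finite. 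Also, if $L_n \ge 1/(2A')$ for a suitable absolute constant $A'$, the claim is trivial because $\abs{F_n - \Phi} \le 1$. So I may assume $L_n$ is small.

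\textbf{Smoothing.} The first step is Esseen's smoothing inequality: for any $T>0$,
\[
\sup_x \abs{F_n(x) - \Phi(x)} \le \frac{1}{\pi}\int_{-T}^{T} \frac{\abs{f_n(t) - e^{-t^2/2}}}{\abs{t}}\,dt + \frac{24}{\pi\sqrt{2\pi}\,T},
\]
where $f_n(t) = \prod_i \Ebb e^{\mathrm{i} t Y_i}$. I would take $T = c_1 L_n^{-1/\kappa}$. This makes the second term of order $L_n^{1/\kappa}$, which is at most a constant times $L_n$ because $\kappa \le 1$ and $L_n < 1$.

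\textbf{Characteristic-function estimate (main obstacle).} The key step is to show that for $\abs{t} \le T$,
\[
\abs{f_n(t) - e^{-t^2/2}} \le C L_n \abs{t}^{2+\kappa} e^{-t^2/4}.
\]
Once this holds, integrating against $dt/\abs{t}$ gives $O(L_n)$, which finishes the general bound.

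To prove the estimate, I would first expand each factor. Using $\abs{e^{\mathrm{i}u} - 1 - \mathrm{i}u + u^2/2} \le \min(u^2, \abs{u}^3/6) \le \abs{u}^{2+\kappa}$ (valid for $\kappa \in (0,1]$), I get
\[
\Ebb e^{\mathrm{i}tY_i} = 1 - \frac{t^2 \Ebb Y_i^2}{2} + r_i, \qquad \abs{r_i} \le \abs{t}^{2+\kappa}\,\Ebb\abs{Y_i}^{2+\kappa}.
\]
For the range $\abs{t} \le c_2 L_n^{-1/(2+\kappa)}$, I would take logarithms and compare $\log f_n(t)$ with $-t^2/2$ term by term, using $\Ebb Y_i^2 \le (\Ebb\abs{Y_i}^{2+\kappa})^{2/(2+\kappa)} \le L_n^{2/(2+\kappa)}$ to keep each factor close to $1$. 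For the remaining range $c_2 L_n^{-1/(2+\kappa)} \le \abs{t} \le T$, I would use the symmetrization trick: bound $\abs{f_n(t)}^2$ via the symmetrized variables $Y_i - Y_i'$, which gives $\abs{f_n(t)} \le e^{-t^2/3}$ there, and then bound both $\abs{f_n(t)}$ and $e^{-t^2/2}$ crudely by $e^{-t^2/4}$ times $L_n\abs{t}^{2+\kappa}$, which is $\ge c_2^{2+\kappa}$ in this range. Because the paper only uses this lemma as a tool, I could alternatively simply invoke Petrov (Ch.~V, Thm.~5), as the statement does; I would present the sketch above and cite Petrov for the constants.

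\textbf{I.i.d.\ specialization.} This is immediate. With $B_n = n\sigma_0^2$ and $\sum_i \Ebb\abs{X_i}^{2+\kappa} = n\gamma^{2+\kappa}$,
\[
L_n = \frac{n\gamma^{2+\kappa}}{(n\sigma_0^2)^{1+\kappa/2}} = \frac{\gamma^{2+\kappa}}{\sigma_0^{2+\kappa}\, n^{\kappa/2}}.
\]
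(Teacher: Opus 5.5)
Your proposal is correct and consistent with the paper, which gives no proof of this lemma and simply cites Petrov. The Fourier-analytic argument you sketch (Esseen smoothing with $T \asymp L_n^{-1/\kappa}$, a logarithmic expansion for small $\abs{t}$, symmetrization on the remaining range) is the standard proof behind that citation, and your i.i.d.\ computation of $L_n$ is right.

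One bookkeeping point concerns the universality of $A$. For the symmetrized bound $\abs{f_n(t)} \le e^{-t^2/3}$ to hold on all of $\abs{t} \le T$ uniformly in $\kappa \in (0,1]$, you must take $c_1 = c^{1/\kappa}$ for a small absolute constant $c$. The smoothing remainder is then of order $(L_n/c)^{1/\kappa}$. This is $\mathcal{O}(L_n)$ with an absolute constant because you have already reduced to the case $L_n \le c$, not merely because $L_n < 1$.
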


\begin{lemma}[Moderate deviation \citep{petrov2002probabilities}]\label{lem:mod_dev}
  Under the conditions in Lemma~\ref{lem:Berry-Esseen}, for any constant $0 < a < 1$ and $0 \le x \le a (2 \log\frac{1}{L_n})^{1/2}$,
  \begin{equation}
    \biggabs{\frac{1 - F_n(x)}{1 - \Phi(x)} - 1} \le C L_n^{1-a^2} \Bigl(\log\frac{1}{L_n}\Bigr)^{\frac{1}{2}},
  \end{equation}
  and
  \begin{equation}
    \biggabs{\frac{F_n(-x)}{\Phi(-x)} - 1} \le C L_n^{1-a^2} \Bigl(\log\frac{1}{L_n}\Bigr)^{\frac{1}{2}},
  \end{equation}
  where $C = 4 \sqrt{\pi} a A$.
\end{lemma}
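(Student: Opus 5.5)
The plan is to derive the lemma directly from the uniform Berry--Esseen bound of Lemma~\ref{lem:Berry-Esseen}, combined with an elementary lower bound on the Gaussian tail (Mills' ratio) over the stated range of $x$. The point is that on $0 \le x \le a(2\log\frac{1}{L_n})^{1/2}$ the Gaussian tail is still much larger than the additive error $AL_n$. The key inequality to aim for is
\begin{equation*}
  \biggabs{\frac{1 - F_n(x)}{1 - \Phi(x)} - 1} = \frac{\abs{F_n(x) - \Phi(x)}}{1 - \Phi(x)} \le \frac{A L_n}{1 - \Phi(x)},
\end{equation*}
which holds for every $x$ by Lemma~\ref{lem:Berry-Esseen}. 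The whole task is therefore to lower bound $1-\Phi(x)$ by a constant multiple of $L_n^{a^2}(\log\frac{1}{L_n})^{-1/2}$ on the allowed range.

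For this step I will use the standard Mills-ratio bound $1 - \Phi(x) \ge \frac{x}{1+x^2}\phi(x)$ for $x>0$, where $\phi$ is the standard normal density. The constraint $x^2/2 \le a^2\log\frac{1}{L_n}$ gives
\begin{equation*}
  \phi(x) = (2\pi)^{-1/2}e^{-x^2/2} \ge (2\pi)^{-1/2}L_n^{a^2}.
\end{equation*}
For $x \ge 1$ we also have $\frac{x}{1+x^2} \ge \frac{1}{2x} \ge \bigl(2a(2\log\frac{1}{L_n})^{1/2}\bigr)^{-1}$. Multiplying these bounds gives
\begin{equation*}
  \frac{1}{1-\Phi(x)} \le 4\sqrt{\pi}\, a\, L_n^{-a^2}\Bigl(\log\frac{1}{L_n}\Bigr)^{1/2}.
\end{equation*}
Plugging this into the key inequality yields exactly $C L_n^{1-a^2}(\log\frac{1}{L_n})^{1/2}$ with $C = 4\sqrt{\pi}aA$, which explains the constant in the statement.

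The remaining range $0 \le x < 1$ is the only slightly awkward point, and I expect it to be the main (minor) obstacle if one insists on the exact constant. There $1-\Phi(x) \ge 1-\Phi(1)$ is bounded below by an absolute constant, so the ratio error is $\mathcal{O}(L_n)$. This is dominated by $L_n^{1-a^2}(\log\frac{1}{L_n})^{1/2}$ once $L_n$ is small, which is the only regime where the lemma has content. If $L_n$ is not small, the claimed right-hand side is of order one and the statement is essentially trivial after enlarging the constant. If the exact constant $4\sqrt{\pi}aA$ must be kept uniformly, I would first try to absorb this case by restricting to $L_n \le e^{-1/(2a^2)}$, for which the bound is only needed when $x \ge 1$ is attainable, and otherwise accept a universal constant.

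The lower-tail statement follows by the same argument. Use $\Phi(-x) = 1-\Phi(x)$ together with $\abs{F_n(-x) - \Phi(-x)} \le AL_n$ from Lemma~\ref{lem:Berry-Esseen}, so the same lower bound on $1-\Phi(x)$ gives the identical estimate for $\bigl|F_n(-x)/\Phi(-x) - 1\bigr|$.
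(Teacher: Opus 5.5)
Your derivation is correct. It supplies something the paper does not: the paper gives no argument for this lemma beyond the citation to Petrov, while you obtain it directly from Lemma~\ref{lem:Berry-Esseen} and the Mills-ratio bound $1-\Phi(x)\ge \frac{x}{1+x^2}\phi(x)$. Your computation reproduces exactly $C=4\sqrt{\pi}aA$, from $\phi(x)\ge(2\pi)^{-1/2}L_n^{a^2}$ and $\frac{1}{2x}\ge\{2a(2\log\frac{1}{L_n})^{1/2}\}^{-1}$. That strongly suggests this is precisely the derivation behind the stated constant, so your route makes the lemma self-contained rather than a competing proof.

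The restriction you suggest, $L_n\le e^{-1/(2a^2)}$, is exactly the right one. Write $x_{\max}:=a(2\log\frac{1}{L_n})^{1/2}$; your condition is equivalent to $x_{\max}\ge 1$. In that regime you do not need to split at $x=1$. Since $1-\Phi$ is decreasing, every $0\le x\le x_{\max}$ satisfies $1-\Phi(x)\ge 1-\Phi(x_{\max})\ge \phi(x_{\max})/(2x_{\max})=L_n^{a^2}/\{4\sqrt{\pi}a(\log\frac{1}{L_n})^{1/2}\}$. This gives the exact constant on the whole range at once.

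Your fallback for the complementary regime $x_{\max}<1$ is wrong, though. The right-hand side there is not ``of order one'': it tends to $0$ as $a\to 0$ (through $C$) and as $L_n\uparrow 1$ (through $(\log\frac{1}{L_n})^{1/2}$). Meanwhile, at the always-admissible point $x=0$ the left-hand side equals $2|F_n(0)-\frac12|$, which does not depend on $a$. For Rademacher summands with $n$ even it equals $\mathbb{P}(\sum_{i=1}^n X_i=0)\asymp n^{-1/2}\asymp L_n>0$.

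So in that regime the inequality, as literally stated for every $a\in(0,1)$, can fail, and no argument can rescue it. The lemma should be read under $L_n\le e^{-1/(2a^2)}$. This is harmless for the paper, since the lemma is only applied with $a$ fixed and $n\to\infty$ in Lemma~\ref{lem:population_symmetric}.
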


Equipped with these lemmas, we proceed to the proof of Theorem~\ref{thm:fdp_control}. We focus our analysis on a fixed normal machine $\mathcal{M}_i$ ($i \in \mathcal{G}$) and its neighbors.
Define $\xi^{(h)}_{0}:= \cup_{i \in \Vcal} \{\xi_{i, 0}^{(h)}\}$ for $h=1, 2$ as the aggregate set of samples across all nodes for each split.
For notational convenience, we define the conditional tail probabilities as:
\begin{align*}
  F_{S, +}(t) &= \Ebb\biggl[\frac{1}{|\Gcal_i|} \sum\limits_{j \in \Gcal_i} \id\{S_{i, j} > t\} \biggm| \xi^{(1)}_{0}\biggr], \\
  F_{S, -}(t) &= \Ebb\biggl[\frac{1}{|\Gcal_i|} \sum\limits_{j \in \Gcal_i} \id\{S_{i, j} < -t\} \biggm| \xi^{(1)}_{0}\biggr].
\end{align*}
For a fixed $\btheta$, define $\bm{\Sigma}_{j}(\btheta) = \mathrm{Cov}_{\sbf_j \sim \mathcal{P}_j}\{\nabla \ell(\btheta; \sbf_j)\}$. For brevity, we denote $\bm{\Sigma}_{j} = \bm{\Sigma}_{j}(\btheta_{j, 0})$.
Furthermore, let $\ubf_j = {\bm \Omega} (\gbf_{j} (\btheta_{j, 0}; \xi_{j, 0}^{(1)}) - \widehat{\gbf}_j)$ and $\ubf_j^\ast = {\bm \Omega} (\gbf_{j} (\btheta_{j, 0}; \xi_{j, 0}^{(1)}) - \gbf_j^\ast)$.
We observe that for $j \in \Gcal$,
\begin{align*}
  \Ebb\bigl(\id\{S_{i,j} \le - t\} \bigm\vert \xi^{(1)}_{0}\bigr)
  &= \Ebb\bigl(\id\{\ubf_j^\top (\gbf_{j} (\btheta_{j, 0}; \xi_{j, 0}^{(2)}) - \gbf_{j}^\ast) \le - t - \ubf_j^\top (\gbf_j^\ast - \widehat{\gbf}_{i})\} \bigm\vert \xi^{(1)}_{0}\bigr) \\
  &\le C_{\kappa} n^{-\frac{\kappa}{2}} + \Phi\biggl(- \frac{\sqrt{n} \{t + \ubf_j^\top (\gbf_j^\ast - \widehat{\gbf}_{i})\}}{(\ubf_j^\top \bm{\Sigma}_j \ubf_j)^{\frac{1}{2}}}\biggr).
\end{align*}

\noindent\textbf{Part (i)}

Under Condition~\ref{cond:robust_mean_est}, we choose the threshold $t_1$ and the auxiliary term $\tilde{t}_1$ as follows:
\begin{align*}
  t_1 &= \frac{\tilde{t}_1}{\sqrt{n}} + \sup_{j \in \Gcal} \abs{\ubf_j^\top (\gbf_{j}^\ast - \widehat{\gbf}_{i})}, \\
  \tilde{t}_1 &= \sup_{j \in \Gcal} \bigl\{(\ubf_j^{\ast\top} \bm{\Sigma}_j \ubf_j^{\ast})^{\frac{1}{2}} + \delta_{\gbf}\bigr\} \{2 \kappa \log(n)\}^{\frac{1}{2}} \ge \sup_{j \in \Gcal} \{2 \kappa \ubf_j^\top \bm{\Sigma}_j \ubf_j \times \log(n)\}^{\frac{1}{2}}.
\end{align*}
With probability at least $1 - \tau_{\gbf}$, for any $j \in \Gcal$,
\begin{equation}\label{equ:single_good_lower}
  \Ebb\bigl(\id\set{S_{i,j} \le - t_1} \mid \xi^{(1)}_{0}\bigr) \le C_{\kappa} n^{-\frac{\kappa}{2}} + \exp\biggl(- \frac{\tilde{t}_1^2}{2(\ubf_j^\top \bm{\Sigma}_j \ubf_j)}\biggr) \le (C_{\kappa} + 1) n^{-\frac{\kappa}{2}},
\end{equation}
and similarly,
\begin{equation}\label{equ:single_good_upper}
  \Ebb\bigl(\id\set{S_{i,j} \ge t_1} \mid \xi^{(1)}_{0}\bigr) \le (C_{\kappa} + 1) n^{-\frac{\kappa}{2}}.
\end{equation}

Next, we analyze the properties of the Byzantine machines in $\Bcal$.
For any $j \in \Bcal$,
\begin{align*}
  \Ebb\bigl(\id\set{S_{i,j} < t_1} \mid \xi^{(1)}_{0}\bigr)
  &= \Pbb\bigl(\ubf_j^\top (\gbf_{j} (\btheta_{j, 0}; \xi_{j, 0}^{(2)}) - \gbf_{j}^\ast) < t_1 - \ubf_j^\top (\gbf_{j}^\ast - \widehat{\gbf}_{i}) \mid \xi^{(1)}_{0}\bigr) \\
  &\le C_{\kappa} n^{-\frac{\kappa}{2}} + \Phi\biggl(\frac{\sqrt{n}\{t_1 - \ubf_j^\top (\gbf_j^\ast - \widehat{\gbf}_{i})\}}{(\ubf_j^\top \bm{\Sigma}_j \ubf_j)^{\frac{1}{2}}}\biggr).
\end{align*}
By the choice of $t_1$ and Conditions~\ref{cond:robust_mean_est}--\ref{cond:signals}, with probability at least $1 - 2\tau_{\gbf}$, for any $j \in \Bcal_i$, we have $t_1 - \ubf_j^\top (\gbf_j^\ast - \widehat{\gbf}_{i}) \le - t_1$ so that:
\begin{equation}\label{equ:detectProb}
  \Ebb\bigl(\id\set{S_{i,j} < t_1} \mid \xi^{(1)}_{0}\bigr) \le (C_{\kappa} + 1) n^{-\frac{\kappa}{2}}.
\end{equation}

By summing over $j \in \Bcal_i$, with probability at least $1 - 2\tau_{\gbf} - (C_{\kappa} + 1) \size{\Bcal_i} n^{-\frac{\kappa}{2}}$,
\begin{equation*}
  \sum\limits_{j \in \Ncal_i} \id\set{S_{i,j}\ge t_1} \ge \sum\limits_{j \in \Bcal_i} \id\set{S_{i,j}\ge t_1} = \size{\Bcal_i}.
\end{equation*}

For the negative part, by Eq.~\eqref{equ:single_good_lower}, with probability at least $1 - 2\tau_{\gbf} - (C_{\kappa} + 1) \size{\Bcal_i} n^{-\frac{\kappa}{2}}$,
\begin{equation*}
  \Ebb\Bigl(\sum\limits_{j \in \Ncal_i} \id\set{S_{i,j}\le -t_1} \mid \xi^{(1)}_{0}\Bigr) = \Ebb\Bigl(\sum\limits_{j \in \Gcal_i} \id\set{S_{i,j}\le -t_1} \mid \xi^{(1)}_{0}\Bigr) \le \size{\Gcal_i} (C_{\kappa} + 1) n^{-\frac{\kappa}{2}}.
\end{equation*}
We discuss the following two cases.
If $n$ is sufficiently large so that $(1 + C_{\kappa})\size{\Gcal_i} n^{-\frac{\kappa}{2}} = n^{-c}$, we have
\begin{equation*}
  \Pbb\Bigl(\sum\limits_{j \in \Gcal_i} \id\set{S_{i,j} \le - t_1} \ge 1\Bigr) \le (1 + C_{\kappa}) \size{\Gcal_i} n^{-\frac{\kappa}{2}} \le n^{-c}.
\end{equation*}
Therefore, with probability at least $1 - 2\tau_{\gbf} - (C_{\kappa} + 1) \size{\Bcal_i} n^{-\frac{\kappa}{2}} - n^{-c}$,
\begin{equation*}
  \sum\limits_{j \in \Gcal_i} \id\set{S_{i,j} \le - t_1} = 0,
\end{equation*}
and
\begin{equation*}
  \frac{\sum\limits_{j \in \Gcal_i} \id\set{S_{i,j} \le - t_1}}{1 \vee \sum\limits_{j \in \Gcal_i} \id\set{S_{i,j} \ge t_1}} = 0 \le \alpha.
\end{equation*}
Otherwise, denote $s = \max\{1, \size{\Gcal_i} (C_{\kappa} + 1) n^{-\frac{\kappa}{2}}\}$ and $\phi_m = s \log n$.
By Lemma~\ref{lem:bennett},
\begin{equation}\label{equ:lower_tail}
  \Pbb\Bigl(\sum\limits_{j \in \Gcal_i} \id\set{S_{i,j} \le - t_1} \ge \size{\Gcal_i} (C_{\kappa} + 1) n^{-\frac{\kappa}{2}} + \phi_m \bigm\vert \xi^{(1)}_{0}\Bigr) \le \exp(- s h_{\text{b}}(\frac{\phi_m}{s})) \le n^{-c}.
\end{equation}
We need a lower bound on the number of Byzantine machines.
By assuming that $\size{\Bcal_i} \ge 2 \alpha^{-1} s \log(n)$, with probability at least $1 - 2\tau_{\gbf} - (C_{\kappa} + 1) \size{\Bcal_i} n^{-\frac{\kappa}{2}} - n^{-c}$,
\begin{equation*}
  \sum\limits_{j \in \Gcal_i} \id\set{S_{i,j} \le - t_1} \le \alpha \size{\Bcal_i},
\end{equation*}
and
\begin{equation*}
  \frac{\sum\limits_{j \in \Gcal_i} \id\set{S_{i,j} \le - t_1}}{1 \vee \sum\limits_{j \in \Gcal_i} \id\set{S_{i,j} \ge t_1}} \le \alpha.
\end{equation*}

In summary, with probability at least $1 - 2\tau_{\gbf} - (C_{\kappa} + 1) \size{\Bcal_i} n^{-\frac{\kappa}{2}} - n^{-c}$, we have the sure-detection property,
\begin{equation*}
  \Bcal_i \subseteq \widehat{\Bcal}_i.
\end{equation*}

\noindent\textbf{Part (ii)}

Next, we consider the FDP control.

By Eq.~\eqref{equ:single_good_lower},
\begin{equation}\label{equ:lower_tail_expect}
        \sum\limits_{j \in \Gcal_i} \Ebb \bigl(\id\{S_{i,j} \le - t_1\} \bigm\vert \xi^{(1)}_{0}\bigr) \le (C_{\kappa} + 1) \size{\Gcal_i} n^{-\frac{\kappa}{2}}.
\end{equation}

By Eq.~\eqref{equ:detectProb},
\begin{equation}\label{equ:lower_tail_byz}
  \sum\limits_{j \in \Bcal_i} \Ebb\bigl(\id\set{S_{i,j} < t_1} \mid \xi^{(1)}_{0}\bigr) \le (C_{\kappa} + 1) \size{\Bcal_i} n^{-\frac{\kappa}{2}}.
\end{equation}

Here we set $\phi_m = c_{b} \size{\Bcal_i}$ for some small constant $c_{b} > 0$.
By Lemma~\ref{lem:bennett}, with probability at least $1 - 2\tau_{\gbf} - \exp(- (C_{\kappa} + 1) \size{\Bcal_i} n^{-\frac{\kappa}{2}} h_{\text{b}}(\frac{\phi_m}{(C_{\kappa} + 1) \size{\Bcal_i} n^{-\frac{\kappa}{2}}}))
= 1 - 2\tau_{\gbf} - \exp(- c \log n)$ for some constant $c > 0$.
\begin{equation*}
  \sum\limits_{j \in \Bcal_i} \id\set{S_{i,j} < t_1} \le (C_{\kappa} + 1) \size{\Bcal_i} n^{-\frac{\kappa}{2}} + \phi_m = (c_{b} + (C_{\kappa} + 1) n^{-\frac{\kappa}{2}}) \size{\Bcal_i} \le 2 c_{b} \size{\Bcal_i}.
\end{equation*}

We define $t_2$ as:
\begin{equation}\label{equ:lower_tail_t2_expect}
    t_2 = F_{S,-}^{-1}\biggl(\frac{\alpha \size{\Bcal_i} - 2(1+\alpha)c_{b} \size{\Bcal_i}}{2\size{\Gcal_i}}\biggr),
\end{equation}
which means that $\Ebb\{\sum_{j \in \Gcal}\id\{S_{i,j} \le - t_2\} \mid \xi^{(1)}_{0}\} = \{\alpha \size{\Bcal_i} - 2(1+\alpha)c_{b} \size{\Bcal_i}\} / 2$.
Consequently, with probability at least $1 - \exp(- 0.193 [\alpha \size{\Bcal_i} - 2(1+\alpha)c_{b} \size{\Bcal_i}]) = 1 - \exp(-C \size{\Bcal_i})$,
\begin{equation}\label{equ:lower_tail_t2}
  \sum\limits_{j \in \Gcal} \id\{S_{i,j} \le -t_2\} \le \alpha \size{\Bcal_i} - 2(1+\alpha)c_{b} \size{\Bcal_i}.
\end{equation}
Combining Eq.~\eqref{equ:lower_tail_expect} with Eq.~\eqref{equ:lower_tail_t2_expect}, with the choice that $\size{\Gcal_i} (C_{\kappa} + 1) n^{-\frac{\kappa}{2}} \le \{\alpha \size{\Bcal_i} - 2(1+\alpha)c_{b} \size{\Bcal_i} \} / 2$ (which implies that $\size{\Bcal_i} \gtrsim \size{\Gcal_i} n^{-\frac{\kappa}{2}}$), we have $t_2 \le t_1$.
In summary, with probability at least $1 - 2\tau_{\gbf} - \exp(- c \log n)$,
\begin{equation*}
  \sum\limits_{j \in \Ncal} \id\{S_{i,j} \ge t_2\} \ge \sum\limits_{j \in \Ncal} \id\{S_{i,j} \ge t_1\} \ge (1 - 2 c_{b}) \size{\Bcal_i}.
\end{equation*}
By Eq.~\eqref{equ:lower_tail_t2}, we also have with probability at least $1 - 2\tau_{\gbf} - \exp(- c \log n) - \exp(-C\size{\Bcal_i})$,
\begin{equation*}
  \sum\limits_{j \in \Ncal} \id\{S_{i,j} \le -t_2\} \le (\alpha + 2 c_{b}) \size{\Bcal_i} - 2(1+\alpha)c_{b} \size{\Bcal_i}.
\end{equation*}
Combining the above two inequalities, we have
\begin{equation*}
  \frac{\sum\limits_{j \in \Ncal} \id\{S_{i,j} \le -t_2\}}{1 \vee \sum\limits_{j \in \Ncal} \id\{S_{i,j} \ge t_2\}} \le \alpha.
\end{equation*}
It means that $R_i \le t_2$ with probability at least $1 - 2\tau_{\gbf} - \exp(- c \log n) - \exp(-C\size{\Bcal_i})$.
Given the upper bound of $R_i$, the symmetry of the test statistics $\{S_{i,j}\}$ can be analyzed using the following two lemmas.

\begin{lemma}\label{lem:population_symmetric}
    Let $r_n = n^{-\frac{(1 - a^2) \kappa}{2}} \sqrt{\log n}$ where $0 < a < 1$.
    With probability at least $1 - \tau_{\gbf}$, uniformly for $0 \le t \le F_{S,-}^{-1}(\size{\Gcal_i}^{-1})$,
    \begin{equation*}
      \biggabs{\frac{F_{S,+}(t)}{F_{S,-}(t)} - 1} \lesssim r_n + \size{\Gcal_i} n^{-\frac{a^2 \kappa}{2}} + \delta_{\gbf}.
    \end{equation*}
    Similarly, with probability at least $1 - \tau_{\gbf}$, uniformly for $0 \le t \le F_{S,+}^{-1}(\size{\Gcal_i}^{-1})$,
    \begin{equation*}
      \biggabs{\frac{F_{S,-}(t)}{F_{S,+}(t)} - 1} \lesssim r_n + \size{\Gcal_i} n^{-\frac{a^2 \kappa}{2}} + \delta_{\gbf}.
    \end{equation*}
\end{lemma}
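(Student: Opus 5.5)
Fix the normal machine $i$ and condition throughout on $\xi^{(1)}_{0}$, working on the event of Condition~\ref{cond:robust_mean_est}, which has probability at least $1-\tau_{\gbf}$. On this event $\widehat{\gbf}_i$, $\bm{\Omega}_i$ and every $\ubf_j$ are fixed. For $j\in\Gcal_i$ the statistic $S_{i,j}$ is then an affine function of the average of $n/2$ i.i.d.\ terms,
\[
\ubf_j^\top\bigl(\gbf_j(\btheta_{j,0};\xi^{(2)}_{j,0})-\gbf_j^\ast\bigr) \;=\; \frac{2}{n}\sum_{\sbf\in\xi^{(2)}_{j,0}} \ubf_j^\top\bigl(\nabla\ell(\btheta_{j,0};\sbf)-\gbf_j^\ast\bigr),
\]
shifted by the deterministic offset $b_j:=\ubf_j^\top(\gbf_j^\ast-\widehat{\gbf}_i)$. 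Write $s_j:=(\ubf_j^\top\bm{\Sigma}_j\ubf_j)^{1/2}$ and $F_{n,j}$ for the distribution function of the normalized sum. Then
\[
\Pbb(S_{i,j}>t\mid\xi^{(1)}_0)=1-F_{n,j}\bigl(\sqrt{n}(t-b_j)/s_j\bigr),\qquad \Pbb(S_{i,j}<-t\mid\xi^{(1)}_0)=F_{n,j}\bigl(-\sqrt{n}(t+b_j)/s_j\bigr).
\]
Condition~\ref{cond:grad_tail} bounds the Lyapunov ratio $L_n$ of Lemma~\ref{lem:Berry-Esseen} uniformly in $j$ and in the direction $\ubf_j$ by $\gamma_q^{2+\kappa}n^{-\kappa/2}$, up to constants. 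This is where the $L_q$--$L_2$ norm equivalence is used, since it makes the bound independent of the random projection.

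The first main step applies Lemma~\ref{lem:mod_dev} in the moderate-deviation range $0\le x\le a(2\log(1/L_n))^{1/2}$, i.e.\ $x\lesssim a\sqrt{\kappa\log n}$. In this range each tail equals the Gaussian tail $1-\Phi(x_\pm)$ times $1+\mathcal{O}(L_n^{1-a^2}\sqrt{\log(1/L_n)})=1+\mathcal{O}(r_n)$. Two facts then compare the positive and negative Gaussian tails.

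First, $|b_j|\le\|\ubf_j\|\,\delta_{\gbf}$ on the event of Condition~\ref{cond:robust_mean_est}, because $\|\gbf_j^\ast-\widehat{\gbf}_i\|$ is controlled by $\delta_{\gbf}$ plus the warm-up consensus term from Condition~\ref{asp:warm_up}. Hence the two arguments $x_\pm=\sqrt{n}(t\mp b_j)/s_j$ differ by $2\sqrt{n}|b_j|/s_j$.

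Second, the Mills ratio gives $\bigl(1-\Phi(x+\epsilon)\bigr)/\bigl(1-\Phi(x)\bigr)=\exp\{-\mathcal{O}(\epsilon(x+1))\}$. Taking the normalization inside $\delta_{\gbf}$ as in Part (i), this yields a relative discrepancy of order $\delta_{\gbf}$. Averaging over $j\in\Gcal_i$ preserves a uniform ratio bound, so
\[
\frac{F_{S,+}(t)}{F_{S,-}(t)}=1+\mathcal{O}(r_n+\delta_{\gbf})
\]
for every $t$ in this moderate range.

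The second step covers the remaining $t$ up to $F_{S,-}^{-1}(|\Gcal_i|^{-1})$, where the normalized argument may exceed the moderate-deviation range. There I would fall back on the additive Berry--Esseen bound of Lemma~\ref{lem:Berry-Esseen}, which gives an error of $C n^{-\kappa/2}$ per summand. Dividing by $F_{S,-}(t)\ge|\Gcal_i|^{-1}$ turns this into a relative error $\lesssim |\Gcal_i|\, n^{-\kappa/2}\cdot(\text{Gaussian-tail loss at the cutoff } x=a\sqrt{2\log(1/L_n)})$. Since $1-\Phi(x)\approx L_n^{a^2}=n^{-a^2\kappa/2}$ there, the cutoff contributes exactly the term $|\Gcal_i|\,n^{-a^2\kappa/2}$.

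Concretely, I would split at $t^\ast$ defined by $\max_j x_-(t^\ast)=a\sqrt{2\log(1/L_n)}$. Below $t^\ast$ the first step applies. Above $t^\ast$ both $F_{S,+}$ and $F_{S,-}$ are at most about $n^{-a^2\kappa/2}$ plus the Berry--Esseen error, while $F_{S,-}(t)\ge|\Gcal_i|^{-1}$. Their difference is therefore a relative error of order $|\Gcal_i|\,n^{-a^2\kappa/2}$.

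I expect the main obstacle to be this stitching. The Gaussian tails must be compared uniformly over heterogeneous scales $s_j$, which are random through $\ubf_j$; I would lower bound them by $\min_j s_j$ via Condition~\ref{cond:robust_mean_est}. At the same time the shift $b_j$ must not blow up the Mills-ratio exponent near the cutoff, which requires $\sqrt{n}\,|b_j|\sqrt{\log n}/s_j\lesssim\delta_{\gbf}$. The second statement follows by the symmetric argument with the roles of $F_{S,+}$ and $F_{S,-}$ exchanged, again on an event of probability at least $1-\tau_{\gbf}$.
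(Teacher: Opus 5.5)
Your ingredients match the paper's: Lemma~\ref{lem:mod_dev} in the moderate range for the multiplicative $1+\mathcal{O}(r_n)$, a Gaussian-tail comparison for the shift $b_j$, and Lemma~\ref{lem:Berry-Esseen} beyond the cutoff combined with $\size{\Gcal_i}F_{S,-}(t)\ge 1$ to get $\size{\Gcal_i}n^{-a^2\kappa/2}$. The gap is in the stitching.

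\emph{The stitching step.} A single global cutoff $t^\ast$ cannot work when the scales $s_j$ differ. Suppose $t^\ast$ is defined through $\max_j x_-(t^\ast)$. Just above $t^\ast$ only the index with the smallest $s_j$ has left the moderate range. An index with, say, $s_{j'}=10\,s_j$ still has normalized argument about $a\sqrt{2\log(1/L_n)}/10$, so its tail is of order $n^{-a^2\kappa/200}\gg n^{-a^2\kappa/2}$. Your claim that $F_{S,\pm}(t)\lesssim n^{-a^2\kappa/2}$ above $t^\ast$ is therefore false. Defining $t^\ast$ through $\min_j$ instead fails below $t^\ast$, where some indices are already outside the range of Lemma~\ref{lem:mod_dev}. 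Lower-bounding by $\min_j s_j$ does not help.

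The paper avoids this by splitting \emph{indices}, not $t$. For each fixed $t$, put $j$ into $\Gcal_1(t)$ if $t\le a\{2n^{-1}\log(L_n^{-1})s_j^2\}^{1/2}+\abs{b_j}$, and into $\Gcal_2(t)$ otherwise.
\begin{itemize}
\item On $\Gcal_1(t)$ each summand satisfies $\Pbb(S_{i,j}\ge t\mid\xi^{(1)}_0)=\Pbb(S_{i,j}\le -t\mid\xi^{(1)}_0)\,(1+\mathcal{O}(r_n+\text{shift}))$. Summing and dividing by $\sum_{j}\Pbb(S_{i,j}\le -t\mid\xi^{(1)}_0)$ gives a contribution $\mathcal{O}(r_n+\text{shift})$.
\item On $\Gcal_2(t)$ both tails are $\lesssim n^{-a^2\kappa/2}$ by Berry--Esseen. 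Their contribution is at most $\size{\Gcal_2(t)}\,n^{-a^2\kappa/2}/(\size{\Gcal_i}F_{S,-}(t))\le \size{\Gcal_i}\,n^{-a^2\kappa/2}$.
\end{itemize}
With this partition the heterogeneity of the $s_j$ plays no role.

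\emph{The shift term.} Your step ``taking the normalization inside $\delta_{\gbf}$'' is not licensed by Condition~\ref{cond:robust_mean_est}. That condition bounds $\norm{\widehat{\gbf}_i-\gbf_i^\ast}$ on the raw scale, and $\norm{\gbf_j^\ast-\gbf_i^\ast}$ needs the warm-up term on top, with its own failure probability $\tau_w$. The Mills-ratio comparison, by contrast, involves the normalized shift $\sqrt{n}\,\abs{b_j}/s_j$ multiplied by $x\lesssim\sqrt{\log n}$. That is a relative error of order $\sqrt{n\log n}\,\delta_{\gbf}\norm{\ubf_j}/s_j$. Bounding $\norm{\ubf_j}/s_j$ further requires a lower eigenvalue bound on $\bm{\Sigma}_j$, which no stated condition provides.

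The paper's own proof records this term as $\sqrt{n}\,\delta_{\gbf}$; its final display reads $r_n+\sqrt{n}\,\delta_{\gbf}+\size{\Gcal}n^{-a^2\kappa/2}$. So it does not deliver the bare $\delta_{\gbf}$ of the statement either, and your bookkeeping correctly exposes an extra $\sqrt{\log n}$ it drops. You should do one of two things:
\begin{itemize}
\item carry $\sqrt{n\log n}\,\delta_{\gbf}$ explicitly, under a stated eigenvalue assumption; or
\item make $\sqrt{n\log n}\,\abs{b_j}/s_j\lesssim\delta_{\gbf}$ a hypothesis, rather than presenting it as a consequence of Condition~\ref{cond:robust_mean_est}.
\end{itemize}
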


\begin{lemma}\label{lem:empirical_err}
    Let $1 < v < m$ be sufficiently large.
    We have with probability at least $1 - 3 \exp(- C_1 v^{\frac{1}{3}})$,
    \begin{equation*}
      \sup_{0 \le t \le F_{S,+}^{-1}(v / \size{\Gcal_i})} \biggabs{\frac{\sum\limits_{j \in \Gcal} \id\{S_{i,j} \ge t\}}{\size{\Gcal_i} F_{S,+}(t)} - 1} \le C_2 v^{-\frac{1}{3}},
    \end{equation*}
    \begin{equation*}
      \sup_{0 \le t \le F_{S,-}^{-1}(v / \size{\Gcal_i})} \biggabs{\frac{\sum\limits_{j \in \Gcal} \id\{S_{i,j} \le -t\}}{\size{\Gcal_i} F_{S,-}(t)} - 1} \le C_2 v^{-\frac{1}{3}}.
    \end{equation*}
\end{lemma}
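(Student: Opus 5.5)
The plan is to work conditionally on $\xi^{(1)}_0$ throughout. Once $\xi^{(1)}_0$ is fixed, $\widehat{\gbf}_i$, ${\bm\Omega}_i$ and hence every $\ubf_j$ are deterministic. For $j\in\Gcal_i$ the score $S_{i,j}$ depends on the remaining randomness only through $\gbf_j(\btheta_{j,0};\xi^{(2)}_{j,0})$, and these are independent across $j$ because the splits are independent across nodes. So, conditionally, $\{\id\{S_{i,j}\ge t\}\}_{j\in\Gcal_i}$ are independent Bernoulli variables, and their mean is $F_{S,+}(t)$ (up to the $>$ versus $\ge$ convention, which I handle by working with the left-continuous version of the tail function, or by assuming no atoms). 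I prove the first display; the second is the same argument applied to $-S_{i,j}$ and $F_{S,-}$. Once the conditional probability bound is uniform in $\xi^{(1)}_0$, integrating it out gives the unconditional statement.

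\textbf{Step 1 (grid).} Set $\varepsilon:=v^{-1/3}$ and $\mu(t):=\size{\Gcal_i}F_{S,+}(t)$, which is nonincreasing in $t$ and ranges over $[v,\size{\Gcal_i}]$ on the interval $[0,T]$, where $T:=F_{S,+}^{-1}(v/\size{\Gcal_i})$. Choose grid points $0=t_M<\dots<t_1<t_0=T$ with $\mu(t_k)=v(1+\varepsilon)^k$, capped at $\size{\Gcal_i}$. Then $M\lesssim \varepsilon^{-1}\log(\size{\Gcal_i}/v)$, and consecutive grid values of $\mu$ differ by the factor $1+\varepsilon$.

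\textbf{Step 2 (pointwise concentration).} At each grid point, apply Bennett's inequality (Lemma~\ref{lem:bennett}) to the sum of $\size{\Gcal_i}$ independent Bernoulli indicators. The variance is at most $\mu_k:=\mu(t_k)$ and the deviation is $\varepsilon\mu_k$, so using $h_{\text{b}}(u)\ge u^2/(2(1+u/3))$ we get
\[
\Pbb\Bigl(\Bigabs{\textstyle\sum_{j}\id\{S_{i,j}\ge t_k\}-\mu_k}\ge \varepsilon\mu_k \Bigm| \xi^{(1)}_0\Bigr)\le 2\exp(-c\,\varepsilon^2\mu_k)=2\exp\bigl(-c\,v^{1/3}(1+\varepsilon)^k\bigr).
\]
Summing over $k$ gives a geometric-type series. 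The terms with $k\le \varepsilon^{-1}$ contribute at most $2\varepsilon^{-1}e^{-cv^{1/3}}$. For larger $k$, $(1+\varepsilon)^k$ grows exponentially in $k\varepsilon$, so those terms contribute a smaller term of the same type. Since $\varepsilon^{-1}=v^{1/3}$, the total is at most $3\exp(-C_1v^{1/3})$ once $v$ is large, with $C_1<c$ absorbing the polynomial prefactor.

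\textbf{Step 3 (interpolation).} Take $t\in[t_{k+1},t_k]$. By monotonicity,
\[
\textstyle\sum_j\id\{S_{i,j}\ge t_k\}\le\sum_j\id\{S_{i,j}\ge t\}\le\sum_j\id\{S_{i,j}\ge t_{k+1}\},
\qquad \mu_k\le\mu(t)\le\mu_{k+1}=(1+\varepsilon)\mu_k .
\]
On the good event of Step 2 the ratio therefore lies in $[(1-\varepsilon)/(1+\varepsilon),\,(1+\varepsilon)^2]$, i.e. it is within $C_2\varepsilon=C_2v^{-1/3}$ of $1$, which is the claim.

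The main obstacle is Step 1, namely the existence of grid points hitting the prescribed values of $\mu$ exactly. This fails if $F_{S,+}$ has atoms, since conditionally $S_{i,j}$ is a univariate projection of a sample mean that may be discrete. My first remedy is to pick $t_k$ as generalized inverses, so that $\mu(t_k)\in[v(1+\varepsilon)^k,\,v(1+\varepsilon)^{k+1}]$ with respect to the left and right limits. Any jump larger than a factor $(1+\varepsilon)$ would then be controlled by applying Step 2 to both one-sided limits at that point. Failing that, I would assume the conditional law is continuous, as the paper implicitly does by writing $F^{-1}$.
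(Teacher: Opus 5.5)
Your proposal is correct and follows the paper's proof essentially step for step. Conditional on $\xi^{(1)}_0$, both apply Bennett's inequality at grid points where $\size{\Gcal_i}F_{S,+}$ grows geometrically with ratio $1+Cv^{-1/3}$, take a union bound over the grid, and sandwich monotonically between consecutive grid points. The differences are cosmetic: the paper sums the union bound using $(1+\xi)^k\ge 1+k\xi$, which gives a geometric series with ratio at most $1/3$ because $v\xi h_{\text{b}}(u)$ is a large constant when $u=\xi=Cv^{-1/3}$, whereas you count the roughly $v^{1/3}$ dominant terms and absorb that prefactor into $C_1$; and your handling of atoms via generalized inverses is extra care the paper skips by tacitly assuming $F_{S,+}$ is continuous.
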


By applying these two lemmas with $v = \frac{\alpha \size{\Bcal_i} - 2(1+\alpha)c_{b} \size{\Bcal_i}}{2} \ge \frac{\alpha \size{\Bcal_i}}{4}$ where $c_{b}$ is sufficiently small, with probability at least $1 - 2\tau_{\gbf} - \exp(- c \log n) - \exp(-C\size{\Bcal_i}) - 3 \exp(- C (\alpha \size{\Bcal_i})^{\frac{1}{3}})$, we have:
\begin{equation*}
  H_{m,n} =  \frac{\sum\limits_{j \in \Gcal_i} \id(S_{i,j} \ge R_i)}{\sum\limits_{j \in \Gcal_i} \id(S_{i,j} \le -R_i)} = 1 + \mathcal{O}((\alpha \size{\Bcal_i})^{-\frac{1}{3}} + r_n + m n^{-\frac{a^2 \kappa}{2}} + \delta_{\gbf}).
\end{equation*}
Therefore, with the same probability,
\begin{equation*}
  \text{FDP} \le \alpha \bigl(1 + \mathcal{O}((\alpha \size{\Bcal_i})^{-\frac{1}{3}}  + r_n + m n^{-\frac{a^2 \kappa}{2}} + \delta_{\gbf}) \bigr).
\end{equation*}

Note that the above FDP control result requires $\size{\Bcal_i}$ to be large.

\noindent\textbf{Part (iii)}

Here we provide a high probability control of the number of false discoveries when $\size{\Bcal_i} \le b m_f$ with some constant $0< b < \alpha^{-1} - 1$ (for example, $b = (\alpha^{-1} - 1) /2$) and the parameter $m_f$ be sufficiently large.
We will show that with high probability, the number of false discovery nodes is controlled by $\mathcal{O}(m_f)$.

If $R_i \ge F_{S, +}^{-1}(m_f / \size{\Gcal_i}) := t_3$, by Lemma~\ref{lem:empirical_err}, with probability at least $1 - 3 \exp(-C m_f^{\frac{1}{3}})$,
\begin{equation*}
\sum\limits_{j \in \Gcal_i} \id\{S_{i, j} \ge R_i\} \le \sum\limits_{j \in \Gcal_i} \id\{S_{i, j} \ge t_3\} \le m_f + m_f^{\frac{2}{3}} \le 2 m_f.
\end{equation*}

Otherwise $R_i < F_{S, +}^{-1}(m_f / \size{\Gcal_i})$ and $\size{\Gcal_i} F_{S, +}(R_i) > m_f$.
By Lemma~\ref{lem:population_symmetric}, $F_{S, -}(R_i) \ge (1 - r_n - \size{\Gcal_i} n^{-\frac{a^2 \kappa}{2}} - \delta_{\gbf}) F_{S, +}(R_i) := (1 - r_n') F_{S, +}(R_i)$ where $r_n' = r_n + \size{\Gcal_i} n^{-\frac{a^2 \kappa}{2}} + \delta_{\gbf} = o(1)$.
We apply Lemma~\ref{lem:empirical_err} again, with probability at least $1 - 3 \exp(-C_1 m_f^{\frac{1}{3}})$,
\begin{equation*}
  \sum\limits_{j \in \Ncal_i} \id\{S_{i, j} \ge R_i\} \le \size{\Bcal_i} + (m_f + C_2 m_f^{\frac{2}{3}}) \le (b + 1 + C_2 m_f^{-\frac{1}{3}}) m_f,
\end{equation*}
and
\begin{align*}
  \sum\limits_{j \in \Ncal_i} \id\{S_{i, j} \le - R_i\} \ge (1 + C_2 m_f^{-\frac{1}{3}}) \size{\Gcal_i} F_{S, -}(R_i) \ge (1 - C_2 m_f^{-\frac{1}{3}}) (1 - r_n') m_f
\end{align*}
Therefore,
\begin{equation*}
  \frac{\sum\limits_{j \in \Ncal_i} \id\{S_{i, j} \le - R_i\}}{\sum\limits_{j \in \Ncal_i} \id\{S_{i, j} \ge R_i\}} \ge \frac{(1 - C_2 m_f^{-\frac{1}{3}}) (1 - r_n')}{b + 1 + C_2 m_f^{-\frac{1}{3}}} > \alpha,
\end{equation*}
with sufficiently large $m_f$ and $n$.
It contradicts against the definition of $R_i$.

In summary, with probability at least $1 - \tau_{\gbf} - 3 \exp(-C_1 m_f^{\frac{1}{3}})$, the false discovery number $\sum_{j \in \Gcal_i} \id\{S_{i, j} \ge R_i\} \le 2 m_f$.

\begin{proof}[Proof of Lemma~\ref{lem:population_symmetric}]
  We consider the tail behavior of the statistics $\{S_{i,j}\}$ via a case-by-case analysis.

  (1) First, we apply Lemma~\ref{lem:mod_dev}.
  For $0 \le t \le a (2 n^{-1} \log(L_n^{-1}) \ubf_j^\top \Sigma_j \ubf_j)^{1/2} + \ubf_j^\top (\gbf_j^\ast - \widehat{\gbf}_{i})$, we have:
  \begin{align*}
    \Pbb(S_{i,j} \ge t \mid \xi^{(1)}_{0})
    &= \Pbb\bigl(\ubf_j^\top(\gbf_{j} (\btheta_{j, 0}; \xi_{j, 0}^{(2)}) - \gbf_j^\ast) \ge t - \ubf_j^\top (\gbf_j^\ast - \widehat{\gbf}_{i}) \mid \xi^{(1)}_{0}\bigr) \\
    &= \Pbb\Biggl(\frac{\sqrt{n}\ubf_j^\top (\gbf_{j} (\btheta_{j, 0}; \xi_{j, 0}^{(2)}) - \gbf_j^\ast)}{(\ubf_j^\top \Sigma_j \ubf_j)^{\frac{1}{2}}} \ge \frac{\sqrt{n} \{t - \ubf_j^\top (\gbf_j^\ast - \widehat{\gbf}_{i})\}}{(\ubf_j^\top \Sigma_j \ubf_j)^{\frac{1}{2}}} \Biggm\vert \xi^{(1)}_{0}\Biggr) \\
    &= \overline{\Phi}\Biggl(\frac{\sqrt{n} \{t - \ubf_j^\top (\gbf_j^\ast - \widehat{\gbf}_{i})\}}{(\ubf_j^\top \Sigma_j \ubf_j)^{\frac{1}{2}}}\Biggr) (1 + \mathcal{O}(r_n)) \\
    &= \overline{\Phi}\Biggl(\frac{\sqrt{n} t}{(\ubf_j^\top \Sigma_j \ubf_j)^{\frac{1}{2}}}\Biggr) (1 + \mathcal{O}(r_n + \sqrt{n} \delta_{\gbf})).
  \end{align*}
  Similarly, for $0 \le t \le a \{2 n^{-1} \log(L_n^{-1}) \ubf_j^\top \Sigma_j \ubf_j\}^{1/2} - \ubf_j^\top (\gbf_j^\ast - \widehat{\gbf}_{i})$, we obtain:
  \[
  \Pbb(S_{i,j} \le - t \mid \xi^{(1)}_{0}) = \Phi\Biggl(-\frac{\sqrt{n} t}{(\ubf_j^\top \Sigma_j \ubf_j)^{\frac{1}{2}}}\Biggr) (1 + \mathcal{O}(r_n + \sqrt{n} \delta_{\gbf})).
  \]

  (2) Next, for $t \ge a \{2 n^{-1} \log(L_n^{-1}) \ubf_j^\top \Sigma_j \ubf_j\}^{1/2} + \abs{\ubf_j^\top (\gbf_j^\ast - \widehat{\gbf}_{i})}$, Lemma~\ref{lem:Berry-Esseen} implies:
  \[
  \max\bigl\{\Pbb(S_{i,j} \ge t \mid \xi^{(1)}_{0}),\Pbb(S_{i,j} \le -t \mid \xi^{(1)}_{0})\bigr\}  \lesssim n^{-\frac{a^2 \kappa}{2}}.
  \]

  Now, we divide the set $\Gcal$ into two subsets for a fixed $t > 0$.
  Let
  \[
  \Gcal_1 = \set{j \in \Gcal: t \le a \{2 n^{-1} \log(L_n^{-1}) \ubf_j^\top \Sigma_j \ubf_j\}^{1/2} + \abs{\ubf_j^\top (\gbf_j^\ast - \widehat{\gbf}_{i})}}
  \]
  and $\Gcal_2 = \Gcal \setminus \Gcal_1$.
  We have for $0 \le t \le F_{S,-}^{-1}(\size{\Gcal}^{-1})$:
  \begin{align*}
    \biggabs{\frac{F_{S,+}(t)}{F_{S,-}(t)} - 1}
    &= \frac{\sum\limits_{j \in \Gcal} \{\Pbb(S_{i,j} \ge t \mid \xi^{(1)}_{0}) - \Pbb(S_{i,j} \le -t \mid \xi^{(1)}_{0})\}}{\sum\limits_{j \in \Gcal} \Pbb(S_{i,j} \le -t \mid \xi^{(1)}_{0})} \\
    &\le \frac{\sum\limits_{j \in \Gcal_1} \{\Pbb(S_{i,j} \ge t \mid \xi^{(1)}_{0}) - \Pbb(S_{i,j} \le -t \mid \xi^{(1)}_{0})\}}{\sum\limits_{j \in \Gcal} \Pbb(S_{i,j} \le -t \mid \xi^{(1)}_{0})} \\
    &\quad + \frac{\sum\limits_{j \in \Gcal_2} \{\Pbb(S_{i,j} \ge t \mid \xi^{(1)}_{0}) - \Pbb(S_{i,j} \le -t \mid \xi^{(1)}_{0})\}}{\sum\limits_{j \in \Gcal} \Pbb(S_{i,j} \le -t \mid \xi^{(1)}_{0})} \\
    &\lesssim r_n + \sqrt{n} \delta_{\gbf} + \size{\Gcal} n^{-\frac{a^2 \kappa}{2}} \\
    &\le r_n + \sqrt{n} \delta_{\gbf} + m n^{-\frac{a^2 \kappa}{2}},
  \end{align*}
  where in the last inequality, we used the property that $\sum_{j \in \Gcal} \Pbb(S_{i,j} \le -t \mid \xi^{(1)}_{0}) = \size{\Gcal} F_{S,-}(t) \ge 1$ for $0 \le t \le F_{S,-}^{-1}(\size{\Gcal}^{-1})$.
\end{proof}

\begin{proof}[Proof of Lemma~\ref{lem:empirical_err}]
To derive the convergence of the supremum, we first consider a fixed $0 \le t \le F_{S, +}^{-1}(v / \size{\Gcal_i})$.
By the definition of $F_{S, +}$, we have the conditional mean $\Ebb[\sum_{j \in \Gcal_i} \id\{S_{i, j} \ge t\} \vert \xi^{(1)}_{0}] = \size{\Gcal_i} F_{S, +}(t) \ge v$ and the conditional variance $\Var[\sum_{j \in \Gcal_i} \id\{S_{i, j} \ge t\} \vert \xi^{(1)}_{0}] \le \size{\Gcal_i} F_{S, +}(t)$.
By Lemma~\ref{lem:bennett},
\begin{equation}\label{equ:empirical_err_fixed_t}
  \Pbb\Bigl( \Bigabs{ \sum\limits_{j \in \Gcal_i} \id\{S_{i, j} \ge t\} - \size{\Gcal_i} F_{S, +}(t) } \ge u \size{\Gcal_i} F_{S, +}(t) \,\Big|\, \xi^{(1)}_{0}\Bigr) \le 2 \exp(- \size{\Gcal_i} F_{S, +}(t) h_{\text{b}}(u)).
\end{equation}

Choose $\{t_i = F_{S, +}^{-1}(\frac{v (1 + \xi)^i}{\size{\Gcal_i}})\}_{i=0}^{i_{\max}}$ with $i_{\max} = \lfloor \frac{\log(\size{\Gcal_i} / v)}{\log(1 + \xi)} \rfloor$ and some sufficiently small constant $\xi > 0$ which will be determined later.
By Eq.~(\ref{equ:empirical_err_fixed_t}), we have
\begin{equation}\label{equ:empirical_err_unibound}
  \begin{aligned}
  & \Pbb\biggl(\sup_{0 \le i \le i_{\max}} \bigl\{\size{\Gcal_i} F_{S, +}(t)\bigr\}^{-1} \biggabs{\sum\limits_{j \in \Gcal_i} \id\{S_{i, j} \ge t\} - \size{\Gcal_i} F_{S, +}(t)} \ge u \,\Big\vert\, \xi^{(1)}_{0}\biggr) \\
  \le & 2 \sum\limits_{i=0}^{i_{\max}} \exp(- v (1 + \xi)^i h_{\text{b}}(u)) \le 2 \sum\limits_{i=0}^{i_{\max}} \exp(- v (1 + \xi i) h_{\text{b}}(u)) \\
  \le & 2 \exp(- v h_{\text{b}}(u)) \times \{1 - \exp(- v \xi h_{\text{b}}(u))\}^{-1} \le 3 \exp(- v u^2 / 2).
\end{aligned}
\end{equation}
The last inequality holds because $h_{\text{b}}(u) \ge \frac{u^2}{2(1 + u / 3)}$ for $u \ge 0$, and $v \xi h_{\text{b}}(u)$ is sufficiently large so that $\exp(- v \xi h_{\text{b}}(u)) \le 1/3$.

Note that $F_{S, +}(t_i) / F_{S, +}(t_{i + 1}) = 1 / (1 + \xi) = 1 - \xi/(1+\xi)$, by choosing $u = \xi = C v^{-1/3}$, Eq.~(\ref{equ:empirical_err_unibound}) implies that with probability at least $1 - 3 \exp(- C v^{1/3} / 2)$,
\begin{equation*}
      \sup_{0 \le t \le F_{S, +}^{-1}(v / \size{\Gcal_i})} \Bigabs{\{\size{\Gcal_i} F_{S, +}(t)\}^{-1} \sum\limits_{j \in \Gcal_i} \id\{S_{i, j} \ge t\} - 1} \lesssim v^{-1/3}.
\end{equation*}
The second inequality of Lemma~\ref{lem:empirical_err} holds similarly.
\end{proof}

\section{Proof of Proposition \ref{prop:connected} and Theorem~\ref{thm:connectivity}}\label{sec:app2}

\begin{lemma}[Theorem 4.1 of \citet{frieze2015introduction}]
Let $\mathtt{G}(m,p)$ be an undirected Erdős–Rényi random graph with $m$ nodes and probability $p$, and $c(m, p):= mp-\log(m)$.
If $\lim_{m\to \infty} c(m, p) =c$, then
\begin{equation*}
\lim\limits_{m \to \infty}\mathbb{P}(\mathtt{G}(m,p) \text{ is connected.}) = \exp(-\exp(-c)).
\end{equation*}
In particular, when $c= +\infty$,  \begin{equation*}
\lim\limits_{m \to \infty}\mathbb{P}(\mathtt{G}(m,p) \text{ is connected.}) = 1.
\end{equation*}
\end{lemma}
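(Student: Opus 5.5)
We will follow the classical two-part argument: (a) with high probability the only obstruction to connectivity is an isolated node, and (b) the number $X_1$ of isolated nodes converges in distribution to a Poisson law with mean $e^{-c}$. Together these give $\mathbb{P}(\mathtt{G}(m,p)\text{ connected}) = \mathbb{P}(X_1=0)+o(1) \to \exp(-e^{-c})$. Throughout we write $p = (\log m + c_m)/m$ with $c_m := c(m,p) \to c$. We first treat finite $c$, and then obtain the case $c=+\infty$ by monotonicity of connectivity in $p$.

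\textbf{Step 1 (isolated nodes are Poisson).} We compute factorial moments. For fixed $r$,
\[
\mathbb{E}[(X_1)_r] = (m)_r (1-p)^{r(m-r)+\binom{r}{2}}.
\]
Using $(1-p)^m = \exp(-mp - O(mp^2)) = e^{-\log m - c_m + o(1)}$, this equals $(1+o(1))\,(m e^{-\log m - c})^r = (1+o(1))e^{-rc}$. All factorial moments therefore converge to those of $\mathrm{Poisson}(e^{-c})$, and the method of moments gives $\mathbb{P}(X_1=0)\to\exp(-e^{-c})$. The $r=1$ computation, kept with explicit error terms, is also what yields the expectation bound in Proposition~\ref{prop:connected}(i).

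\textbf{Step 2 (no small non-trivial components).} Next we show that, with probability $1-o(1)$, there is no connected component of size $k$ with $2\le k\le m/2$. A component of size $k$ contains a spanning tree and has no edges to the other $m-k$ nodes. Since there are $k^{k-2}$ labelled trees on $k$ nodes, a union bound gives
\[
\mathbb{P}(\exists\,\text{such component}) \le \sum_{k=2}^{m/2}\binom{m}{k}k^{k-2}p^{k-1}(1-p)^{k(m-k)}.
\]
We split this sum at $k\approx m/\log m$ or a large constant $k_0$ and bound the two ranges separately:
\begin{itemize}
\item For small $k$, use $(1-p)^{k(m-k)}\le e^{-kmp}e^{k^2p}$ and $\binom{m}{k}\le (em/k)^k$. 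The $k$-th term is at most $\frac{m}{\log m}\bigl(e\log m\cdot e^{-c_m} e^{kp}/m\bigr)^{k}\cdot O(1)$, roughly $(\log m)^{k}m^{1-k}$, which is $o(1)$ summed over $k\ge 2$.
\item For $k$ up to $m/2$, use $k(m-k)\ge km/2$, which makes each term at most $(C\log m/\sqrt m)^k\cdot m$. This is again $o(1)$.
\end{itemize}
Since the graph is connected iff $X_1=0$ and no component of size $2..m/2$ exists, we obtain $|\mathbb{P}(\text{connected})-\mathbb{P}(X_1=0)|=o(1)$, which finishes the finite-$c$ case.

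\textbf{Step 3 (the case $c=+\infty$).} Connectivity is a monotone increasing property, so we can couple $\mathtt{G}(m,p)$ with $\mathtt{G}(m,p')$ where $p'=(\log m+C)/m\le p$. This gives $\liminf \mathbb{P}\ge \exp(-e^{-C})$ for every $C$. Letting $C\to\infty$ yields the limit $1$. Alternatively, Markov's inequality on $X_1$ together with Step 2, whose bound only improves as $p$ grows, gives the same conclusion directly.

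\textbf{Main obstacle.} The delicate step is the union bound in Step 2 near $k=2,3$ and in the middle range. There the factor $k^{k-2}p^{k-1}$ must be balanced carefully against $(1-p)^{k(m-k)}$, and the split point must be chosen so that both estimates still sum to $o(1)$ uniformly in $c_m$. We would first try the split at a fixed large $k_0$ together with the crude bound $k(m-k)\ge km/2$ beyond it. If that fails in the middle range, we would refine the split to $k\le m/\log m$.
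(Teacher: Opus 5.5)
Your proposal is correct and follows the standard argument behind the cited result: the Poisson limit for isolated nodes plus the union bound $\sum_{k=2}^{m/2}\binom{m}{k}k^{k-2}p^{k-1}(1-p)^{k(m-k)}$. The paper reuses that same union bound, split at $k=8$, in its proof of Proposition~\ref{prop:connected}; there it is paired only with a first-moment bound on $X_1$, whereas your factorial-moment step is what gives the exact limit $\exp(-e^{-c})$. The ``main obstacle'' you flag is not one: the crude bound $k(m-k)\ge km/2$ gives terms of size $m(C\log m/\sqrt{m})^k$, which sum to $O((\log m)^3/\sqrt{m})=o(1)$ over $k\ge 3$, so only the $k=2$ term needs your finer estimate.
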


\begin{lemma}[Theorem 11.9 of \citet{frieze2015introduction}]\label{lem:di_connected}
Let $\mathtt{D}(m,p)$ be a directed Erdős–Rényi random graph with $m$ nodes and probability $p$, and $c(m, p):= mp-\log(m)$.
If $\lim_{m\to \infty} c(m, p) =c$, then
\begin{equation*}
\lim\limits_{m \to \infty}\mathbb{P}(\mathtt{D}(m,p) \text{ is connected}) = \exp(-2\exp(-c)).
\end{equation*}
In particular, when $c= +\infty$,  \begin{equation*}
\lim\limits_{m \to \infty}\mathbb{P}(\mathtt{D}(m,p) \text{ is connected}) = 1.
\end{equation*}
\end{lemma}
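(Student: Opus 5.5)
The plan is to reduce strong connectivity of $\mathtt{D}(m,p)$ to the absence of two local obstructions: vertices with in-degree $0$ and vertices with out-degree $0$. Then show that the number of such vertices is asymptotically Poisson with mean $2e^{-c}$. Here ``connected'' is read as strongly connected. The digraph fails to be strongly connected iff there is a nonempty proper set $S\subset[m]$ with no arc from $S$ to $S^c$. Replacing $S$ by $S^c$ if $\size{S}>m/2$ changes ``no arcs from $S$ to $S^c$'' into ``no arcs from $S^c$ to $S$'', i.e.\ ``no arcs into $S^c$''. So every failure is witnessed by some $S$ with $1\le \size{S}\le m/2$ that has either no out-arcs or no in-arcs. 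For $\size{S}=1$ this is exactly a vertex of out-degree $0$ or in-degree $0$.

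\textbf{Step 1 (no medium-sized cuts).} I would bound the probability that some $S$ with $2\le k:=\size{S}\le m/2$ is closed (no out-arcs, or no in-arcs) by
\[
2\sum_{k=2}^{m/2}\binom{m}{k}(1-p)^{k(m-k)}\le 2\sum_{k=2}^{m/2}\Bigl(\tfrac{em}{k}e^{-p(m-k)}\Bigr)^{k}.
\]
With $mp=\log m+c+o(1)$, I would split the range of $k$. For $k\le \log^2 m$, use $e^{-p(m-k)}\le (e^{-c}/m)(1+o(1))$. Each term is then $O((e^{1-c}/k)^k)\cdot m^{o(1)\cdot k/\log m}$, and the sum over $k\ge2$ is $O(m^{-1+o(1)})$ after keeping the $m^{-1}$ slack from $k(m-k)$ versus $k(m-1)$. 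For $\log^2 m<k\le m/2$, use $p(m-k)\ge mp/2$, so each term is at most $(2em^{1/2}e^{-c/2}/k)^k$. The base tends to $0$, so these terms are superpolynomially small. Hence $\mathbb{P}(\text{not strongly connected})=\mathbb{P}(X>0)+o(1)$, where $X:=X_{\mathrm{in}}+X_{\mathrm{out}}$ counts vertices of in-degree $0$ plus vertices of out-degree $0$.

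\textbf{Step 2 (Poisson limit for $X$).} First, $\mathbb{E}X_{\mathrm{in}}=\mathbb{E}X_{\mathrm{out}}=m(1-p)^{m-1}\to e^{-c}$. I would then use the method of factorial moments. For fixed $r$, $\mathbb{E}[(X)_r]$ sums over ordered $r$-tuples of distinct (vertex, type) pairs. If some vertex appears with both types, that forces $2(m-1)-1$ absent arcs for a single vertex. This costs an extra factor $m\,(1-p)^{m}=O(1/m)\cdot m^{o(1)}$, which is negligible. Otherwise the $r$ vertices are distinct, and the required absent arcs number $r(m-1)-O(r^2)$, since arcs among the chosen vertices may be double-counted. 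The probability is therefore $(1-p)^{r(m-1)}(1+o(1))$. Counting $(m)_r\,2^r$ choices gives $\mathbb{E}[(X)_r]\to(2e^{-c})^r$. So $X\Rightarrow\mathrm{Poisson}(2e^{-c})$ and $\mathbb{P}(X=0)\to\exp(-2e^{-c})$. Combined with Step 1, this gives the lemma.

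\textbf{Step 3 (the case $c=+\infty$).} Strong connectivity is a monotone increasing property. For any fixed $c'$, for large $m$ couple $\mathtt{D}(m,p)$ above $\mathtt{D}(m,p')$ with $mp'=\log m+c'$. The limit is then at least $\exp(-2e^{-c'})$, and letting $c'\to\infty$ gives $1$.

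I expect Step 1 to be the main obstacle, specifically the bookkeeping at small $k$. There the bound must show the sum is $o(1)$ uniformly while only the leading $e^{-c}/m$ per vertex is available. So the $k^2$ correction in the exponent $k(m-k)$ and the $(em/k)^k$ factor must be balanced carefully. The factorial-moment computation is routine once the double-type and overlapping-arc corrections are shown to be lower order.
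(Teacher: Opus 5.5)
\textbf{Gap in Step~1: the union bound over closed sets does not tend to zero.} For fixed $k\ge2$, since $mp=\log m+c+o(1)$,
\[
\binom{m}{k}(1-p)^{k(m-k)}=\frac{m^k}{k!}\cdot\frac{e^{-ck}}{m^k}\,(1+o(1))\longrightarrow\frac{e^{-ck}}{k!}.
\]
In particular, the $k=2$ term $2\binom{m}{2}(1-p)^{2(m-2)}$ converges to $e^{-2c}>0$. So your bound only gives $\mathbb{P}(\text{not strongly connected})\le\mathbb{P}(X>0)+\text{const}$.

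Your own intermediate bound $(e^{1-c}/k)^k$ already shows this, since it is of order one for each $k$. There is no $m^{-1}$ slack to recover: $(1-p)^{k(m-k)}\ge(1-p)^{k(m-1)}$, and even $\binom{m}{k}(1-p)^{k(m-1)}$ is of order one. The reason is structural. Any $k$ vertices of out-degree $0$ already form a $k$-set with no out-arcs. Your count of closed $k$-sets is therefore essentially $\mathbb{E}\binom{X_{\mathrm{out}}}{k}$, and it re-counts configurations already inside $\{X>0\}$.

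A secondary slip: for $\log^2 m<k\ll\sqrt m$ the base $2e\,m^{1/2}e^{-c/2}/k$ tends to infinity, not zero. This disappears after the repair below.

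\textbf{Repair: use a minimal witness.} If $\mathtt{D}(m,p)$ is not strongly connected, its condensation is a DAG with at least two vertices. It has a sink strong component $B$ and a source strong component $A\neq B$: a source of the DAG with $B$ removed stays a source, because $B$ has no out-arcs. $A$ and $B$ are disjoint, so one of them, call it $S$ with $k=\size{S}\le m/2$, has no in-arcs or no out-arcs \emph{and} induces a strongly connected subdigraph. Fix a root in $S$; strong connectivity gives a spanning out-arborescence of $\mathtt{D}[S]$. There are $k^{k-2}$ such arborescences, each present with probability $p^{k-1}$, and these internal arcs are independent of the $k(m-k)$ cut arcs. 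The correct bound is therefore
\[
2\sum_{k=2}^{m/2}\binom{m}{k}k^{k-2}p^{k-1}(1-p)^{k(m-k)}.
\]
This is exactly the expression the paper takes from Frieze--Karo\'nski in its proof of Proposition~\ref{prop:connected}; the paper itself gives no proof of this lemma, only the citation. Here the $k=2$ term is $O(\log m/m)$. Using the single bound $e^{-p(m-k)}\le e^{-pm/2}$, each term with $k\ge3$ is at most $\frac{1}{pk^2}\bigl(e\,mp\,e^{-pm/2}\bigr)^k$, so these terms sum to $O(m^{-1/2}\log^2 m)$ with no need to split the range of $k$.

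With this repair, Steps~2 and~3 go through as you describe. One small correction in Step~2: a vertex carrying both types contributes a factor $m(1-p)^{2(m-1)}=O(1/m)$. The quantity $m(1-p)^m$ that you display is of order one.
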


\begin{proof}[\textbf{Proof of Proposition \ref{prop:connected}}]
1. Decompose $X_1$ as
\begin{equation*}
 X_1 = \sum\limits_{v \in \mathcal{V}} I_{v},
\end{equation*}
where
\begin{equation*}
I_{v} = \begin{cases}
 1, & \text{if } v \text{ is an isolated node};\\
 0, & \text{ otherwise}.
\end{cases}
\end{equation*}
Hence,
\begin{equation*}
\begin{aligned}
\mathbb{E}(X_1) & = \sum\limits_{v \in \mathcal{V}} \mathbb{E} I_{v} = m(1-p)^{m-1} = m \exp\left((m-1)\log(1-p)\right)\\
& \leq m \exp\left( -(m-1) p\right)= \exp(p) \exp(-c(m,p))\\
& \leq \begin{cases}
(1+o(1)) \exp(-c(m,p)), & \text{ when } \lim\limits_{m\to \infty}\frac{c(m, p)}{\log m}< + \infty;\\
(1+p)\exp(-c(m,p)), & \text{ when } \lim\limits_{m\to \infty}\frac{c(m, p)}{\log m}= + \infty.\\
\end{cases}
\end{aligned}
\end{equation*}

2.
Let $X_{k}$ denote the number of components with $k$ nodes in $\mathtt{G}(m,p)$.
Then,
\begin{equation*}
\begin{aligned}
\mathbb{P}(\mathtt{G}(m,p) \text{ is not connected}) & = \mathbb{P} \biggl( \bigcup_{k=1}^{m/2} \bigl\{ \mathtt{G}(m,p) \text{ has a component of order } k \bigr\} \biggr) \\
& = \mathbb{P} \biggl( \bigcup_{k=1}^{m/2} \{ X_k > 0 \} \biggr),\\
\end{aligned}
\end{equation*}
which implies that
\begin{equation*}
\mathbb{P}(\mathtt{G}(m,p) \text{ is connected}) \geq 1- \mathbb{P}(X_1>0) - \sum\limits_{k=2}^{m/2} \mathbb{P}(X_{k}>0)  \geq  1- \mathbb{E}(X_1) - \sum\limits_{k=2}^{m/2} \mathbb{P}(X_{k}>0).
\end{equation*}
Notice that
\begin{equation*}
\sum\limits_{k=2}^{m/2} \mathbb{P}(X_{k}>0) \leq \sum\limits_{k=2}^{m/2} \mathbb{E}X_k \leq \sum\limits_{k=2}^{m/2} \binom{m}{k} k^{k-2} p^{k-1} (1-p)^{k(m-k)},
\end{equation*}
where the second inequality is followed by (2.10) of \citet{frieze2015introduction}.

Denote  $u_k:=  \binom{m}{k} k^{k-2} p^{k-1} (1-p)^{k(m-k)}$.
For $2\leq k < 8$, we have
\begin{equation*}
\begin{aligned}
u_k & \leq \exp(k) m^k \left( \frac{\log m + c(m,p)}{m} \right)^{k-1} \exp\left(-k(m-8)\frac{\log m + c(m,p)}{m}\right), \\
& \leq \begin{cases}
\left(1 + o(1) \right) \exp({k(\widehat{C}-c(m,p))}) \left( \frac{\log m}{m} \right)^{k-1}, & \text{ when } \lim\limits_{m\to \infty}\frac{c(m, p)}{\log m}< + \infty;\\
\left(1 + \mathcal{O}(\frac{\log m}{m}) \right) \exp({k(1+8p-c(m,p))})  p^{k-1}, & \text{ when } \lim\limits_{m\to \infty}\frac{c(m, p)}{\log m}= + \infty.\\
\end{cases}
\end{aligned}
\end{equation*}
where $\widehat{C}>0$ is a constant.
For $k \geq 8$, we have
\begin{equation*}
\begin{aligned}
u_k  & \leq \left( \frac{m e}{k} \right)^k k^{k-2} \left( \frac{\log m + c(m,p)}{m} \right)^{k-1} \exp({-k \frac{\log m + c(m,p)}{2}}),\\
& \leq \begin{cases} m \left( \exp({1 - c(m,p)/2 + o(1)}) \widehat{C} \log m \right)^k \left( m^{-1/2} \right)^k, & \text{ when } \lim\limits_{m\to \infty}\frac{c(m, p)}{\log m}< + \infty;\\
\exp(k({1 - c(m,p)/2 })) m^{k/2} p^{k-1},& \text{ when } \lim\limits_{m\to \infty}\frac{c(m, p)}{\log m}= + \infty.
\end{cases}
\end{aligned}
\end{equation*}

As a result, whenever $\lim_{m\to \infty}\frac{c(m, p)}{\log m}< + \infty$,  it follows that
\begin{equation*}
\begin{aligned}
\sum\limits_{k=2}^{m/2} u_k  \leq &~ (1 + o(1)) \max\{\exp(2(\widehat{C}-c(m,p))), \exp(7(\widehat{C}-c(m,p)))\}\frac{\log m}{m} + \sum\limits_{k=8}^{m/2} m^{1 + o(1) - k/2}\\
 = &~ \mathcal{O}(m^{o(1) - 1});
\end{aligned}
\end{equation*}
When $\lim_{m\to \infty}\frac{c(m, p)}{\log m}= + \infty$, we obtain
\begin{equation*}
\begin{aligned}
 \sum\limits_{k=2}^{m/2} u_k  \leq &~ \left(1+\mathcal{O}(\frac{\log m}{m})\right) \max\{\exp(2(1+8p-c(m,p))), \exp(7(1+8p-c(m,p)))\}\frac{p}{1-p} \\
 &~ + \exp(8(1-\frac{c(m,p)}{2}+\frac{\log m}{2})) \frac{p}{1-p} \\
 \leq &~ \frac{p}{1-p} \mathcal{O}\left(\exp(-2c(m,p))\right) + \frac{p}{1-p} \exp(8(1-\frac{c(m,p)}{2}+\frac{\log m}{2}))\\
  = &~ \frac{p}{1-p} \mathcal{O}\left(\exp(-4 c(m,p)+4 \log m ) \right) \\
  = &~ \mathcal{O}(m^{- \frac{c(m,p)}{\log m}})
 \end{aligned}
\end{equation*}

Therefore, when $\lim_{m\to \infty}\frac{c(m, p)}{\log m}< + \infty$,
\begin{equation*}
\mathbb{P}(\mathtt{G}(m,p) \text{ is connected}) \geq 1- (1+o(1)) \exp(-c(m,p)) -  \mathcal{O}(m^{o(1) - 1}).
\end{equation*}
When $\lim_{m\to \infty}\frac{c(m, p)}{\log m}= + \infty$,
\begin{equation*}
\mathbb{P}(\mathtt{G}(m,p) \text{ is connected}) \geq 1- (1+p) \exp(-c(m,p)) -  \mathcal{O}(m^{- \frac{c(m,p)}{\log m}}).
\end{equation*}

\end{proof}

By combining Lemma~\ref{lem:di_connected} and the proof of Proposition \ref{prop:connected}, we can also obtain the following corollary, whose proof is omitted here.
\begin{coro}\label{coro:connected}
Let $\mathtt{D}(m,p)$ be a directed Erdős–Rényi random graph with $m$ nodes and probability $p$.
Suppose that $c(m, p):= mp-\log(m)\geq 0$ satisfying $\lim_{m\to \infty} c(m, p) =c$.
Then it holds that
\begin{equation*}
\begin{aligned}
 & \mathbb{P}(\mathtt{D}(m,p) \text{ is strongly connected})\\
 \geq & \begin{cases}
1- (2+o(1)) \exp(-c(m,p)) -  \mathcal{O}(m^{o(1) - 1}), & \text{ when } \lim\limits_{m\to \infty}\frac{c(m, p)}{\log m}< + \infty;\\
 1- (2+p) \exp(-c(m,p)) -  \mathcal{O}(m^{- \frac{c(m,p)}{\log m}}), & \text{ when } \lim\limits_{m\to \infty}\frac{c(m, p)}{\log m}= + \infty.
 \end{cases}\\
\end{aligned}
\end{equation*}

\end{coro}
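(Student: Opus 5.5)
The statement to prove is Corollary~\ref{coro:connected}, the directed analogue of Proposition~\ref{prop:connected}. I would transplant the proof of Proposition~\ref{prop:connected} to $\mathtt{D}(m,p)$. The plan is to bound the failure probability by a first-moment count of ``obstructions'' to strong connectivity, and to split these into two kinds. The first kind are single nodes with zero in-degree or zero out-degree. The second kind are larger vertex sets $S$ with $2\le |S| = k \le m/2$ that either receive no arcs from $S^c$ or send no arcs to $S^c$.

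The justification for this split is the condensation of $\mathtt{D}(m,p)$. If the digraph is not strongly connected, its condensation DAG has a source component and a sink component that differ. At least one of them has size at most $m/2$. A source component receives no in-arcs from outside, and a sink component sends no out-arcs. Hence
\begin{equation*}
\mathbb{P}(\text{not strongly connected}) \le \mathbb{E}(X_1^{\mathrm{in}}) + \mathbb{E}(X_1^{\mathrm{out}}) + \sum_{k=2}^{m/2} \mathbb{E}(Y_k),
\end{equation*}
where $X_1^{\mathrm{in}}$ and $X_1^{\mathrm{out}}$ count nodes with in-degree zero and out-degree zero respectively, and $Y_k$ counts strongly connected $k$-sets closed on one side.

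\textbf{Step 1 (isolated-type nodes).} Each node has in-degree zero with probability $(1-p)^{m-1}$, and the same holds for out-degree. Exactly as in part 1 of the proof of Proposition~\ref{prop:connected},
\begin{equation*}
\mathbb{E}(X_1^{\mathrm{in}}) + \mathbb{E}(X_1^{\mathrm{out}}) \le 2\exp(p)\exp(-c(m,p)).
\end{equation*}
In the two regimes this becomes $(2+o(1))e^{-c}$ or at most $(2+p)e^{-c}$; the latter follows from $2e^{p}\le 2+p$ up to the same crude slack the paper uses. This step produces the factor $2$, matching the limit $\exp(-2\exp(-c))$ in Lemma~\ref{lem:di_connected}.

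\textbf{Step 2 (larger closed sets).} A strongly connected set on $k$ vertices contains a spanning in-tree, or equivalently a closed walk covering it. I would bound the number of such spanning structures by $k^{k-2}\cdot 2^{k-1}$, using Cayley's count with arc orientations. Alternatively, I would use the fact that a strongly connected digraph contains a spanning subgraph with at most $2(k-1)$ arcs, which costs $p^{2(k-1)}$ or less. Closedness on one side requires $k(m-k)$ specified absent arcs, contributing $(1-p)^{k(m-k)}$, and there are two choices of side. This gives
\begin{equation*}
\mathbb{E}(Y_k) \le 2\binom{m}{k}\, C^k k^{k-2} p^{k-1} (1-p)^{k(m-k)}.
\end{equation*}
This is the undirected term $u_k$ up to a factor $C^k$. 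The bounds for $2\le k<8$ and for $k\ge 8$ then go through verbatim, with $\widehat C$ enlarged to absorb $\log C$. They yield $\mathcal{O}(m^{o(1)-1})$ in the first regime and $\mathcal{O}(m^{-c(m,p)/\log m})$ in the second.

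\textbf{Main obstacle.} The delicate point is Step 2: getting a spanning-structure count whose per-vertex overhead is only exponential. A naive count over all strongly connected arc sets would be too large. I would first try the in-tree plus out-tree argument, which gives $(k^{k-2})^2 p^{2(k-1)}$. Since $kp$ is about $\log m \ll m$, the extra $(kp)^{k-1}$ factor is still dominated by $m^{-k/2}$ for $k\ge 8$, so the existing summation should survive. Combining Steps 1 and 2 yields the displayed bounds, and the limit statement follows as in Lemma~\ref{lem:di_connected}.
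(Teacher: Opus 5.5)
Your proposal is correct and is exactly the adaptation the paper intends; the paper omits this proof, saying only that it follows by combining Lemma~\ref{lem:di_connected} with the proof of Proposition~\ref{prop:connected}. Nodes of zero in- or out-degree give the factor $2$, and your condensation argument plus the single spanning in-arborescence bound $\mathbb{P}(S\text{ strongly connected})\le k^{k-1}p^{k-1}$ already puts each closed $k$-set within a factor $2k$ of the undirected $u_k$, so the paper's $k<8$ and $k\ge 8$ estimates carry over. Two small fixes are needed: drop the in-tree-plus-out-tree detour, which is unnecessary and invalid as stated because the two trees can share arcs (so their joint presence is not $p^{2(k-1)}$), and get the constant $2+p$ in Step~1 from $(1-p)^{m-1}\le e^{-(m-1)(p+p^2/2)}\le e^{-mp}$ once $(m-1)p\ge 2$ (true for large $m$ since $c(m,p)\ge 0$), rather than from $2e^{p}\le 2+p$, which is false.
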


\begin{proof}[\textbf{Proof of Theorem~\ref{thm:connectivity}}]

Replacing each undirected edge $(u,v)\in \mathcal{E}|_{\mathcal{G}}$ by a pair of opposite arcs $u\to v$ and $v\to u$, $(\mathcal{G}, \mathcal{E}|_{\mathcal{G}})$ is transformed to a bi-directed graph.
For each node $v$, let
\[
\mathrm{In}(v):=\{u\to v:(u,v)\in \mathcal{E}|_{\mathcal{G}}\},\qquad d(v) := \size{\mathrm{In}(v)}.
\]
For any deletion ratio $\beta\in(0,1)$, define the \emph{fixed-budget in-deletion} digraph $\mathtt{D}^{\mathrm{fix}}(\beta)$ by deleting \emph{exactly}
$k_v:=\lfloor \beta\, d(v)\rfloor$ in-arcs from $\mathrm{In}(v)$, sampled uniformly without replacement, independently across nodes $v$ conditional on $(\mathcal{G}, \mathcal{E}|_{\mathcal{G}})$.

Let random keys $\{U_{u,v}\}_{u\in \mathrm{In}(v), v\in \mathcal{G}}  \overset{\text{i.i.d.}}{\sim} \ U[0,1]$ (the uniform distribution on [0, 1]), independent of $(\mathcal{G}, \mathcal{E}|_{\mathcal{G}})$. $\mathtt{D}^{\mathrm{fix}}(\beta)$ can be equivalently defined by: \textit{For each $v$, order $\mathrm{In}(v)$ by $U_{u,v}$ increasingly and delete the $k_v$ in-arcs with the smallest keys.}

The equivalence is owing to that the induced ordering is a uniformly random permutation, which leads to the deleted in-arcs are sampled uniformly without replacement.
Moreover, for any $t\in[0,1]$, define the \emph{threshold in-deletion} digraph $\mathtt{D}^{\mathrm{th}}(t)$ by deleting each in-arc $u\to v$ in $(\mathcal{G}, \mathcal{E}|_{\mathcal{G}})$ if and only if $U_{u,v}\le t$.
Conditional on $(\mathcal{G}, \mathcal{E}|_{\mathcal{G}})$, the deletions are independent across arcs with $\mathbb{P}(u\to v\ \text{deleted}\mid (\mathcal{G}, \mathcal{E}|_{\mathcal{G}}))=t$.

For each $v$, let $T_v$ be the $k_v$-th order statistic among the set $\{U_{u,v}:u\to v\in \mathrm{In}(v)\}$ (with the convention $T_v:=0$ if $k_v=0$).
Then by construction, the deleted set of in-arcs at $v$ in $\mathtt{D}^{\mathrm{fix}}(\beta)$ is
\[
S_v=\{u\to v\in \mathrm{In}(v): U_{u,v}\le T_v\}.
\]

Fix $\delta\in(0,1)$ and consider the event
\[
B(\delta):=\bigcap_{v=1}^{m_g} \{\, T_v\le \beta+\delta\,\}.
\]
On $B(\delta)$ we have, for every $v$,
\[ S_v\subseteq \{U_{u,v}\le \beta+\delta\},
\]
and hence the following \emph{digraph inclusions} hold:
\begin{equation}\label{eq:sandwich}
\mathtt{D}^{\mathrm{th}}(\beta+\delta)\ \subseteq\ \mathtt{D}^{\mathrm{fix}}(\beta).
\end{equation}

Conditional on $(\mathcal{G}, \mathcal{E}|_{\mathcal{G}})$, for each $v$ and any $t\in[0,1]$, define
\[
X_v(t):=\size{\{u\to v\in \mathrm{In}(v): U_{u,v}\le t\}}\sim \mathrm{Bin}(d(v),t),
\]
where $\mathrm{Bin}(m, p)$ is the binomial distribution with parameters $m$ and $p$.
Note that $T_v\le t$ iff $X_v(t)\ge k_v$ and $T_v>t$ iff $X_v(t)\le k_v-1$.
Using Chernoff bounds for binomials, there exists an absolute constant $c_0>0$ such that for every $v$,
\[
\mathbb{P}\big(T_v\notin (0,\beta+\delta]\mid (\mathcal{G}, \mathcal{E}|_{\mathcal{G}})\big)
\ \le\ \exp\big(-c_0\,\delta^2 d(v)\big),
\]
whenever $\delta d(v)\ge 2$ (the floor in $k_v$ only affects constants).
Denote $d_{\min}:=\min_v d(v)\gtrsim mp$.
Then by a union bound,
\begin{equation}\label{eq:union}
\mathbb{P}\big(\mathcal{E}(\delta)^c\mid (\mathcal{G}, \mathcal{E}|_{\mathcal{G}})\big)\ \le\ m\exp\big(-c_0\,\delta^2 d_{\min}\big).
\end{equation}

Let $\delta < 1-\beta$.
We further get
\[
\mathbb{P}\Big( \mathtt{D}^{\mathrm{th}}(\beta+\delta)\subseteq \mathtt{D}^{\mathrm{fix}}(\beta)\Big)\ \ge\ 1-m\exp\big(-c_0\,\delta^2 d_{\min} \mid (\mathcal{G}, \mathcal{E}|_{\mathcal{G}}) \big).
\]

Following the condition (ii) in Theorem~\ref{thm:connectivity},
when $\size{\mathcal{B}_i}\geq \frac{1-\alpha}{2 \alpha} \log(m)^3$, any normal node deletes at most $\beta = \alpha \max\{H_{i,n}\}$-fraction in-arcs.
When $\size{\mathcal{B}_i} < \frac{1-\alpha}{2 \alpha} \log(m)^3$, we have
$$\sum\limits_{j \in \Gcal_i} \id\{S_{i, j} \ge R_i\} \le 2 (\log m)^3 \text{ holds for each } i \in \mathcal{G}. $$
This fact implies any normal node deletes at most $\beta = \frac{2 (\log m)^3}{d_{\min}}$-fraction in-arcs.

For any fixed node $v$ and $\gamma \in (0,1)$, using Chernoff bounds for $d(v)\sim \mathrm{Bin}(m-1,p)$, we have
\begin{equation*}
\mathbb{P}\big(d(v)\le (1-\gamma)(m-1)p\big)\ \le\ \exp\!\Big(-\frac{\gamma^{2}}{2}(m-1)p\Big).
\end{equation*}
Let $\gamma =\frac{1}{2}$.
A union bound can be derived by
\begin{equation*}
\begin{aligned}
\mathbb{P}\big(d_{\min}\le \tfrac12 (n-1)p\big)
 & =\Pbb\Big(\exists v\in[m]:\ d(v)\le \tfrac12 (m-1)p\Big)\\
& \le\ \sum\limits_{v=1}^{m}\Pbb\big(d(v)\le \tfrac12 (m-1)p \big) \\
& \le\ m\exp\!\Big(-\frac{(m-1)p}{8}\Big),
\end{aligned}
\end{equation*}
which also gives
\begin{equation*}
\mathbb{P}\big(d_{\min}>\tfrac12 (m-1)p \big)\ \ge\ 1-m\exp\!\Big(-\frac{(m-1)p}{8}\Big).
\end{equation*}

Hence, by Condition (i) in Theorem~\ref{thm:connectivity},  with probability $1-m\exp(-\frac{(m-1)p}{8})$, the graph $(\mathcal{G}, \mathcal{E}'|_{\mathcal{G}})$ can be decomposed as a $\mathtt{D}^{\mathrm{fix}}(\beta_0)$ combining with some additional arcs added, where $\beta_0:= \max\{\frac{4 (\log m)^3}{(m-1)p}, \alpha \max_i\{H_{i,n}\}\}$.
Besides, $\mathbb{P}( \mathtt{D}^{\mathrm{th}}(\beta+\delta)\subseteq \mathtt{D}^{\mathrm{fix}}(\beta))\ \ge\ 1-m\exp(-c_0\, \frac{1}{2}\delta^2 (m-1)p).$

Then we have
\begin{equation*}
\begin{aligned}
&\mathbb{P}((\mathcal{G}, \mathcal{E}'|_{\mathcal{G}}) \text{ is strongly connected }) \\
\geq &
\left[1-m\exp\!\Big(-\frac{(m-1)p}{8}\Big)\right]
\mathbb{P}\big(\mathtt{D}^{\mathrm{fix}}(\beta_0) \text{ is strongly connected }\big) \\
 \geq & \left[1-m\exp\!\Big(-\frac{(m-1)p}{8}\Big)\right]\mathbb{P}\Big( \mathtt{D}^{\mathrm{th}}(\beta_0+\delta)\subseteq \mathtt{D}^{\mathrm{fix}}(\beta_0, \text{ } \mathtt{D}^{\mathrm{th}}(\beta_0+\delta) \text{ is strongly connected}\Big)\\
\geq & \left[1-m\exp\!\Big(-\frac{(m-1)p}{8}\Big)\right]\left[1-m\exp\big(-\frac{c \delta^2}{2} (m-1)p\big)\right] \mathbb{P}(\mathtt{D}^{\mathrm{th}}(\beta_0+\delta) \text{ is s. c.} )\\
= & \left[1-m\exp\!\Big(-\frac{(m-1)p}{8}\Big)\right]\left[1-m\exp\big(-\frac{c \delta^2}{2} (m-1)p\big)\right] \mathbb{P}(\mathtt{D}(m, p(1-\beta_0-\delta)) \text{ is s. c.})\\
\geq & \begin{cases}
& 1- 3 \exp(-c(m,p, \delta))
-2m \exp(-\frac{(m-1)p}{8}) \\
&~~ -2m\exp(-\,\frac12 c_0 \delta^2 (m-1)p) - \mathcal{O}\!\bigl(m^{-1+o(1)}\bigr), \text{ when } \lim\limits_{m\to \infty}\frac{c(m,p, \delta)}{\log m}< + \infty,\\
& 1 - (2+p(1-\beta_0-\delta)) \exp(-c(m,p, \delta)) -2m\exp(-\frac{(m-1)p}{8})
\\
&~~ -2m\exp(-\,\frac12 c_0 \delta^2 (m-1)p) -  \mathcal{O}(m^{- \frac{c(m,p, \delta)}{\log m}}), \text{ when } \lim\limits_{m\to \infty}\frac{c(m,p, \delta)}{\log m}= + \infty.
 \end{cases}
\end{aligned}
\end{equation*}
Here, ``s. c.'' is the abbreviation of ``strongly connected'', and the last inequality is followed from Corollary~\ref{coro:connected}.

\end{proof}

\section{Proof of Theorem~\ref{thm:main0} and Corollary~\ref{coro:consensus}}

To facilitate analysis, we introduce some notations.
Denote the concatenation of optimization variables, auxiliary variables $\{\ybf_{i, k}\}$ and local costs in normal machines as
\begin{equation*}
{\bm \Theta}_{k}:= \begin{pmatrix}
- & {\btheta_{1, k}}^{\top} &  -\\
- & {\btheta_{2, k}}^{\top} & - \\
& \vdots  & \\
- & {\btheta_{m_g, k}}^{\top} & - \\
\end{pmatrix},
\Ybf_{k}:= \begin{pmatrix}
- & \ybf_{1, k}^{\top} &  -\\
- & \ybf_{2, k}^{\top} & - \\
& \vdots  & \\
- & \ybf_{m_g, k}^{\top} & - \\
\end{pmatrix},
{\fbf}({\bm \Theta}_{k}):= \begin{pmatrix}
f_1(\btheta_{1, k})\\
f_2(\btheta_{2, k})\\
\vdots \\
f_{m_g}(\btheta_{m_g, k}) \\
\end{pmatrix}.
\end{equation*}
Moreover, we define the ${\vbf}_1$-weighted average of optimization variables by $\tilde{\btheta}_k:={\bm \Theta}_{k}^{\top} {\vbf}_1$, and the diagonal matrix of $\Ybf_{k}$ by $\tilde\Ybf_{k}:= \mathrm{Diag}(\Ybf_{k})$.
Let the exact gradient and stochastic gradient of ${\fbf}({\bm \Theta}_k)$ be
\begin{equation*}
\nabla {\fbf}({\bm \Theta}_k):= \begin{pmatrix}
- & {\nabla f_1(\btheta_{1, k})}^{\top} &  -\\
- & \nabla f_2(\btheta_{2, k})^{\top} & - \\
& \vdots  & \\
- & \nabla f_{m_g}(\btheta_{m_g, k})^{\top} & - \\
\end{pmatrix},
\Gbf({\bm \Theta}_k; \Xi_k):= \begin{pmatrix}
- &  \gbf({\btheta}_{1, k}; {\xi}_{1, k})^{\top}  & -\\
- & \gbf({\btheta}_{2, k}; {\xi}_{2, k})^{\top}  & - \\
& \vdots  & \\
- & \gbf({\btheta}_{m_g, k}; {\xi}_{m_g, k})^{\top}  & - \\
\end{pmatrix},
\end{equation*}
where $\Xi_{k}$ represents the collection of random variables for mini-batches $\{\xi_{1,k}, \ldots, \xi_{m_g, k}\}$ over normal nodes at the $k$-th iteration.
Then the compact update scheme of DRSGD algorithm over the strongly connected component $(\mathcal{G}, \mathcal{E}'|_{\mathcal{G}})$  can be rewritten as
\begin{equation*}
\begin{cases}
\Ybf_{k+1} & = \Abf \Ybf_{k},\\
{\bm \Theta}_{k+1}  & = \Abf{\bm \Theta}_{k} - \eta_k \tilde\Ybf_{k}^{-1}  \Gbf({\bm \Theta}_k; {\Xi}_k).
\end{cases}
\end{equation*}

Lemma~\ref{lem:E} reveals the geometric convergence rate of  $\{\Ybf_{k}\}$ and $ \{\frac{{\vbf}_1^{\top} {\tilde\Ybf_k}^{-1}}{m_g}\}$, as well as the uniform boundedness of $\{\tilde\Ybf_{k}^{-1}\}$.

\begin{lemma}\label{lem:E}
Suppose Condition~\ref{asp:sure_detection2} holds.
Let $\Ybf_{\infty}:= \lim_{k\to \infty} \Ybf_k$, $\Ybf_{0}:= \Abf^{t_0}$ for some $t_0 \in \mathbb{N}$, then the following inequalities hold for any $k\in \mathbb{N}$:

\begin{enumerate}[(i)]
\item  $\|\Ybf_k - \Ybf_{\infty}\|_2 \leq c \rho^{k+t_0} \leq c \rho^{k}$.
\item $w:= \sup_{k}\|{\mathrm{Diag}(\Abf^{k})}^{-1}\|_2 < +\infty$.
In particular,  $\sup_{k}\|\tilde\Ybf_{k}^{-1}\|_2\leq w$.
\item  $\|\frac{{\bm 1}^{\top}}{m_g}- \frac{{\vbf}_1^{\top} {\tilde\Ybf_k}^{-1}}{m_g}\|\leq   \frac{ w^2 \|{\vbf}_1\|  \|\Ybf_k - \Ybf_{\infty}\|_2}{m_g}$.
\end{enumerate}

\end{lemma}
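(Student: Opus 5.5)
The plan is to identify $\Ybf_k$ explicitly as a matrix power and read off all three claims from the Perron--Frobenius bound $\|\Abf^k - \bm{1}\vbf_1^\top\|_2 \le c\rho^k$ recorded after Theorem~\ref{thm:connectivity}. Under Condition~\ref{asp:sure_detection2}, the recursion $\Ybf_{k+1} = \Abf\Ybf_k$ with $\Ybf_0 = \Abf^{t_0}$ gives $\Ybf_k = \Abf^{k+t_0}$. Hence $\Ybf_\infty = \lim_k \Abf^{k} = \bm{1}\vbf_1^\top$.

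For (i), I will write
\[
\|\Ybf_k - \Ybf_\infty\|_2 = \|\Abf^{k+t_0} - \bm{1}\vbf_1^\top\|_2 \le c\rho^{k+t_0} \le c\rho^k,
\]
where the last step uses $\rho<1$.

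For (ii), first note that $\mathrm{Diag}(\Ybf_k) = \mathrm{Diag}(\Abf^{k+t_0})$, so the bound on $\tilde\Ybf_k^{-1}$ follows from the bound on $w$. It therefore suffices to show $\inf_k \min_i [\Abf^k]_{ii} > 0$, and I split this into large and small $k$.

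For large $k$, take the entrywise bound $|[\Abf^k]_{ii} - [\vbf_1]_i| \le c\rho^k$. Since $\vbf_1$ is strictly positive, there is $k_1$ such that $[\Abf^k]_{ii} \ge \tfrac12 \min_i [\vbf_1]_i > 0$ for all $k \ge k_1$.

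For the finitely many $k < k_1$, I use the self-loops. The mixing matrix has a positive diagonal, and the reweighting \eqref{eq:reassign} keeps $\Wbf(i,i)>0$ because a node never cuts its own in-arc. All entries of $\Abf$ are nonnegative, so $[\Abf^k]_{ii} \ge (\Abf_{ii})^k > 0$. Taking the minimum over these finitely many $k$ together with the large-$k$ bound gives $w<\infty$.

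The main obstacle is precisely this positivity-of-diagonal step. It relies on checking that self-loops survive pruning; if they did not, I would instead fall back on primitivity plus the large-$k$ argument, with $t_0$ chosen large. The stated lower bound on $t_0$ hints that this is the authors' intent.

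For (iii), note that $\mathrm{Diag}(\Ybf_\infty) = \mathrm{Diag}(\vbf_1)$, so $\vbf_1^\top \mathrm{Diag}(\Ybf_\infty)^{-1} = \bm{1}^\top$. Writing $\tilde\Ybf_\infty := \mathrm{Diag}(\Ybf_\infty)$, I then use the resolvent-type identity
\[
\bm{1}^\top - \vbf_1^\top\tilde\Ybf_k^{-1} = \vbf_1^\top\bigl(\tilde\Ybf_\infty^{-1} - \tilde\Ybf_k^{-1}\bigr) = \vbf_1^\top \tilde\Ybf_\infty^{-1}\bigl(\tilde\Ybf_k - \tilde\Ybf_\infty\bigr)\tilde\Ybf_k^{-1}.
\]
To bound this, I need three facts:
\begin{itemize}
\item $\|\tilde\Ybf_\infty^{-1}\|_2 \le w$, because $\tilde\Ybf_\infty^{-1}$ is the limit of the matrices $\mathrm{Diag}(\Abf^k)^{-1}$, each of which has norm at most $w$;
\item $\|\tilde\Ybf_k^{-1}\|_2 \le w$, by (ii);
\item $\|\mathrm{Diag}(\Mbf)\|_2 = \max_i |M_{ii}| \le \|\Mbf\|_2$ for any square matrix $\Mbf$, which controls $\|\tilde\Ybf_k - \tilde\Ybf_\infty\|_2$ by $\|\Ybf_k - \Ybf_\infty\|_2$.
\end{itemize}
Combining these gives the bound $w^2\|\vbf_1\|\,\|\Ybf_k - \Ybf_\infty\|_2$, and dividing by $m_g$ yields (iii).
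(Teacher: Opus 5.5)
Your proposal is correct and follows the paper's proof. In both, $\Ybf_k=\Abf^{k+t_0}$ gives (i) directly; (ii) rests on the positive diagonal of $\Abf$ together with $\Abf^k\to{\bm 1}\vbf_1^\top$; and (iii) uses the identity $\tilde{\Ybf}_\infty^{-1}-\tilde{\Ybf}_k^{-1}=\tilde{\Ybf}_\infty^{-1}(\tilde{\Ybf}_k-\tilde{\Ybf}_\infty)\tilde{\Ybf}_k^{-1}$ with $\vbf_1^\top\tilde{\Ybf}_\infty^{-1}={\bm 1}^\top$. Your large-$k$/small-$k$ split in (ii) spells out what the paper asserts in one line, namely that $\min_i[\Abf^k]_{ii}$ stays bounded away from zero uniformly in $k$. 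Like the paper, it assumes the self-loop weights survive the reweighting \eqref{eq:reassign}.
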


\begin{proof}[\textbf{Proof of Lemma~\ref{lem:E}}]

\begin{enumerate}[(i)]
\item Notice that $\Ybf_{k}\in \mathbb{R}^{m_g\times m_g}$ is actually updated by $\Ybf_{k} = \Abf^k \Ybf_0$.
Then, we have
\begin{equation*}
\|\Ybf_k - \Ybf_{\infty}\|_2 = \|\Abf^{k+t_0} - {\bm 1}_{m_g}{\vbf}_1^{\top}\|_2 \leq c \rho^{k+t_0}.
\end{equation*}
\item Since $\Abf$ is an irreducible row-stochastic matrix with positive diagonal elements, we can deduce that the sequence $\{\Abf^{k}\}$ is convergent.
Moreover, this fact implies each diagonal element of $\Abf^{k}$ is nonzero and bounded, which indicates that $e$ is finite.

\item
By Cauchy-Schwarz inequality,
\begin{equation*}
\begin{aligned}
\norm{\frac{{\bm 1}^{\top}}{m_g}- \frac{{\vbf}_1^{\top} {\tilde\Ybf_k}^{-1}}{m_g}} & = \norm{\frac{{\bm 1}^{\top}}{m_g}-\frac{{\vbf}_1^{\top} \tilde\Ybf_{\infty}^{-1}}{m_g} + \frac{{\vbf}_1^{\top} \tilde\Ybf_{\infty}^{-1}}{m_g} - \frac{{\vbf}_1^{\top} {\tilde\Ybf_k}^{-1}}{m_g}}  \\
& = \norm{\frac{{\vbf}_1^{\top} \tilde\Ybf_{\infty}^{-1}}{m_g} - \frac{{\vbf}_1^{\top} {\tilde\Ybf_k}^{-1}}{m_g}}\\
& = \norm{\frac{{\vbf}_1^{\top}}{m_g} \tilde\Ybf_{k}^{-1}(\tilde\Ybf_{\infty}- {\tilde\Ybf_k})\tilde\Ybf_{\infty}^{-1}}\\
& \leq   \frac{ w^2 \|{\vbf}_1\|  \|\Ybf_k - \Ybf_{\infty}\|_2}{m_g}.\\
\end{aligned}
\end{equation*}
\end{enumerate}
\end{proof}

\begin{lemma}\label{lem:square_S_control}
Under the setting in Theorem~\ref{thm:main0},
suppose Condition~\ref{asp:sure_detection2} and Condition~\ref{asp:loss} hold, we have that
\begin{equation*}
\mathbb{E} \left[\left\|\frac{{\vbf}_1^{\top}\tilde\Ybf_k^{-1}\Gbf({\bm \Theta}_{k}; \Xi_k)}{m_g}\right\|^2 \Bigg|\Fcal_k \right] \leq \frac{w^2 \|{\vbf}_1\|^2 \sigma^2}{m_g} + \norm{\frac{{\vbf}_1^{\top}\tilde\Ybf_k^{-1}}{m_g}  \nabla {\fbf}({\bm \Theta}_k)}^2.
\end{equation*}

\end{lemma}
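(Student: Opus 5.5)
The plan is a bias--variance split of the stochastic gradient. Write $\Gbf({\bm \Theta}_k;\Xi_k) = \nabla{\fbf}({\bm \Theta}_k) + \mathbf{E}_k$, where $\mathbf{E}_k$ collects the rows $\gbf_i(\btheta_{i,k};\xi_{i,k}) - \nabla f_i(\btheta_{i,k})$ for $i\in\Gcal$. Set $\ubf_k^\top := {\vbf}_1^\top\tilde\Ybf_k^{-1}/m_g$. Since $\Ybf_k = \Abf^{k}\Ybf_0$ is deterministic (no sampling randomness enters the $\ybf$-recursion), $\ubf_k$ is a deterministic vector and in particular $\Fcal_k$-measurable. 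The same holds for ${\bm \Theta}_k$, which depends only on $\xi_{\cdot,t}$ with $t\le k-1$, so $\nabla{\fbf}({\bm \Theta}_k)$ is $\Fcal_k$-measurable.

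Expanding the square gives
\begin{equation*}
\bigl\|\ubf_k^\top\Gbf\bigr\|^2 = \bigl\|\ubf_k^\top\nabla{\fbf}({\bm \Theta}_k)\bigr\|^2 + 2\bigl\langle \ubf_k^\top\nabla{\fbf}({\bm \Theta}_k),\, \ubf_k^\top \mathbf{E}_k\bigr\rangle + \bigl\|\ubf_k^\top\mathbf{E}_k\bigr\|^2 .
\end{equation*}
Take conditional expectations term by term.

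\emph{Cross term.} It vanishes, because $\Ebb[\gbf_i(\btheta_{i,k};\xi_{i,k})\mid\Fcal_k] = \nabla f_i(\btheta_{i,k})$ and $\ubf_k$ and $\nabla{\fbf}({\bm \Theta}_k)$ can be pulled out of the conditional expectation.

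\emph{Noise term.} Write $\ubf_k^\top\mathbf{E}_k = \sum_{i} [\ubf_k]_i\, \mathbf{e}_{i,k}$, where $\mathbf{e}_{i,k}$ is the $i$-th row of $\mathbf{E}_k$. The mini-batches $\xi_{i,k}$ are sampled independently across nodes given $\Fcal_k$, and each $\mathbf{e}_{i,k}$ has conditional mean zero. Hence the off-diagonal products $\Ebb[\langle \mathbf{e}_{i,k},\mathbf{e}_{j,k}\rangle\mid\Fcal_k]$ vanish for $i\neq j$. This leaves
\begin{equation*}
\sum_i [\ubf_k]_i^2\, \Ebb\bigl[\|\mathbf{e}_{i,k}\|^2\mid\Fcal_k\bigr] \le \sigma^2\|\ubf_k\|^2
\end{equation*}
by Condition~\ref{asp:loss}(iii), with $\sigma=\max_i\sigma_i$.

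\emph{Bounding the weight vector.} By Lemma~\ref{lem:E}(ii), $\|\tilde\Ybf_k^{-1}\|_2\le w$, so
\begin{equation*}
\|\ubf_k\|^2 \le \frac{w^2\|{\vbf}_1\|^2}{m_g^2} \le \frac{w^2\|{\vbf}_1\|^2}{m_g}.
\end{equation*}
The last inequality, which discards a factor $1/m_g$, is valid because $m_g\ge1$. The stated bound is therefore slightly loose, and this step is harmless.

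\emph{Main obstacle.} The only real difficulty is justifying the measurability and independence claims. First, the $\mathcal{F}_k$-measurability of ${\bm \Theta}_k$ uses that ${\bm \Theta}_0$ is fixed, since the optimization-phase probability space is independent of the warm-up and detection randomness. Second, the cross-node independence of the $\xi_{i,k}$ conditional on $\Fcal_k$ is implicit in the sampling model and should be stated as the assumption being used. If that independence is not available, I would fall back on Cauchy--Schwarz, which gives
\begin{equation*}
\bigl\|\textstyle\sum_i [\ubf_k]_i\mathbf{e}_{i,k}\bigr\|^2 \le \|\ubf_k\|^2\sum_i\|\mathbf{e}_{i,k}\|^2 \le \frac{w^2\|{\vbf}_1\|^2}{m_g^2}\, m_g\sigma^2,
\end{equation*}
which still yields exactly $w^2\|{\vbf}_1\|^2\sigma^2/m_g$. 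That route requires no independence at all, so I would in fact present it as the main argument.
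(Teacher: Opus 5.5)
Your proposal is correct and follows the paper's proof: you split $\Gbf({\bm \Theta}_k;\Xi_k)$ into $\nabla{\fbf}({\bm \Theta}_k)$ plus zero-mean noise, kill the cross term by conditional unbiasedness, and bound the noise term by Cauchy--Schwarz together with $\|\tilde\Ybf_k^{-1}\|_2\le w$ and Condition~\ref{asp:loss}(iii), which gives exactly $m_g\cdot\frac{w^2\|{\vbf}_1\|^2}{m_g^2}\sigma^2$. The Cauchy--Schwarz route you chose as your main argument is precisely the paper's; your independence-based variant would additionally sharpen the noise term by a factor $1/m_g$.
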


\begin{proof}[\textbf{Proof of Lemma~\ref{lem:square_S_control}}]

Denote the $i$-th element of $\frac{{\vbf}_1^{\top}\tilde\Ybf_k^{-1}}{m_g}$ as $\pi_{i,k}$.
By adding and subtracting the exact local gradients $\sum_{i=1}^{m_g}\pi_{i,k}\nabla f_{i}(\btheta_{i,k})$ inside the norm and expanding the square, we obtain:
\begin{equation*}
\begin{aligned}
& \mathbb{E} \left[\left\|\frac{{\vbf}_1^{\top}\tilde\Ybf_k^{-1}\Gbf({\bm \Theta}_{k}; \Xi_k)}{m_g}\right\|^2 \Bigg|\Fcal_k \right]\\
=  &\mathbb{E}  \left[\left\|\sum\limits_{i=1}^{m_g}\pi_{i,k}\left(\gbf_i({\btheta}_{i, k}; {\xi}_{i, k}) - \nabla f_{i}(\btheta_{i,k})\right)+ \sum\limits_{i=1}^{m_g}\pi_{i,k}\nabla f_{i}(\btheta_{i,k})\right\|^2 \Bigg|\Fcal_k\right]    \\
= & \mathbb{E}  \left[\left\|\sum\limits_{i=1}^{m_g}\pi_{i,k}\left(\gbf_i({\btheta}_{i, k}; {\xi}_{i, k}) - \nabla f_{i}(\btheta_{i,k})\right)\right\|^2\Bigg|\Fcal_k\right] +  \norm{\sum\limits_{i=1}^{m_g}\pi_{i,k}\nabla f_{i}(\btheta_{i,k})}^2 \\
& ~ + 2 \left\langle \sum\limits_{i=1}^{m_g}\pi_{i,k}\mathbb{E}\left[\gbf_i({\btheta}_{i, k}; {\xi}_{i, k}) - \nabla f_{i}(\btheta_{i,k}) \bigg| \Fcal_k \right], \sum\limits_{i=1}^{m_g}\pi_{i,k}\nabla f_{i}(\btheta_{i,k}) \right\rangle.
\end{aligned}
\end{equation*}

Since the mini-batch stochastic gradients are unbiased estimators of the exact local gradients given $\Fcal_k$, i.e., $\mathbb{E}[\gbf_i({\btheta}_{i, k}; {\xi}_{i, k}) \mid \Fcal_k] = \nabla f_{i}(\btheta_{i,k})$, the cross-term vanishes. Thus, the equation simplifies to:
\begin{equation*}
\begin{aligned}
& \mathbb{E} \left[\left\|\frac{{\vbf}_1^{\top}\tilde\Ybf_k^{-1}\Gbf({\bm \Theta}_{k}; \Xi_k)}{m_g}\right\|^2 \Bigg|\Fcal_k \right]\\
= &  \mathbb{E}  \left[\left\|\sum\limits_{i=1}^{m_g}\pi_{i,k}\left(\gbf_i({\btheta}_{i, k}; {\xi}_{i, k}) - \nabla f_{i}(\btheta_{i,k})\right)\right\|^2\Bigg|\Fcal_k\right] +  \norm{\sum\limits_{i=1}^{m_g}\pi_{i,k}\nabla f_{i}(\btheta_{i,k})}^2.
\end{aligned}
\end{equation*}

Finally, applying the Cauchy-Schwarz inequality and the bounded variance condition (Condition~\ref{asp:loss}), we can bound the first term. Substituting the matrix norm upper bound $w$ yields the desired result:
\begin{equation*}
\begin{aligned}
& \mathbb{E} \left[\left\|\frac{{\vbf}_1^{\top}\tilde\Ybf_k^{-1}\Gbf({\bm \Theta}_{k}; \Xi_k)}{m_g}\right\|^2 \Bigg|\Fcal_k \right] \\
\leq & m_g \left\|\frac{{\vbf}_1^{\top}\tilde\Ybf_k^{-1}}{m_g}\right\|^2 \sigma^2 + \left\|\sum\limits_{i=1}^{m_g}\pi_{i,k}\nabla f_{i}(\btheta_{i,k})\right\|^2\\
\leq & \frac{w^2 \|{\vbf}_1\|^2 \sigma^2}{m_g} + \norm{\frac{{\vbf}_1^{\top}\tilde\Ybf_k^{-1}}{m_g} \nabla {\fbf}({\bm \Theta}_k)}^2,
\end{aligned}
\end{equation*}
where we define $\sigma = \max\{\sigma_i: i \in \Gcal\}$ in Theorem~\ref{thm:main0}.

\end{proof}

\begin{lemma}\label{lem:gradient_diff}
Let $\tilde{\btheta}_k:= {\vbf}_1^{\top} {\bm \Theta}_k$, $Q_{i,k}:= \mathbb{E}\left[\norm{\tilde{\btheta}_k -{\btheta}_{i,k}}^2\right]$, and $M_k:= \frac{1}{m_g}\sum_{i=1}^{m_g}Q_{i, k}$.
Under the setting in Theorem~\ref{thm:main0},
suppose Condition~\ref{asp:sure_detection2} and \ref{asp:loss}, we have that
\begin{equation*}
\begin{aligned}
 \mathbb{E}\left[\norm{\nabla f(\tilde{\btheta}_k)- \frac{{\vbf}_1^{\top}\tilde\Ybf_k^{-1}\nabla {\fbf}({\bm \Theta}_k)}{m_g}}^2\right] & \leq
2L^2 M_k + \frac{ 2w^4 \|{\vbf}_1\|^2 \|\Ybf_k - \Ybf_{\infty}\|^2_2}{m_g^2} \mathbb{E}\left[\norm{\nabla {\fbf}({\bm \Theta}_k)}_{\mathrm{F}}^2\right]. \\
\end{aligned}
\end{equation*}
\end{lemma}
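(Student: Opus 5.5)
We split the difference into a term driven by consensus error and a term driven by the mismatch between the rescaling weights and uniform weights. Write $\nabla f(\tilde{\btheta}_k) = \frac{1}{m_g}\sum_{i=1}^{m_g}\nabla f_i(\tilde{\btheta}_k) = \frac{\bm 1^\top}{m_g}\nabla\fbf(\bm 1\tilde{\btheta}_k^\top)$. Insert the intermediate quantity $\frac{\bm 1^\top}{m_g}\nabla\fbf({\bm\Theta}_k)$ and decompose
\begin{equation*}
\nabla f(\tilde{\btheta}_k)- \frac{{\vbf}_1^{\top}\tilde\Ybf_k^{-1}\nabla {\fbf}({\bm \Theta}_k)}{m_g}
= \underbrace{\frac{1}{m_g}\sum_{i=1}^{m_g}\bigl(\nabla f_i(\tilde{\btheta}_k)-\nabla f_i(\btheta_{i,k})\bigr)}_{T_1}
+ \underbrace{\Bigl(\frac{\bm 1^\top}{m_g}-\frac{{\vbf}_1^{\top}\tilde\Ybf_k^{-1}}{m_g}\Bigr)\nabla\fbf({\bm\Theta}_k)}_{T_2}.
\end{equation*}
Then $\|T_1+T_2\|^2\le 2\|T_1\|^2+2\|T_2\|^2$, and we take expectations.

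\textbf{Bounding $T_1$.} Each $f_i$ is an empirical average of $\ell(\cdot;\sbf)$, and $\ell$ is $L$-smooth almost surely by Condition~\ref{asp:loss}(ii). Hence each $f_i$ is $L$-smooth. Apply Jensen's inequality (convexity of $\|\cdot\|^2$ over the uniform average), then smoothness:
\begin{equation*}
\|T_1\|^2\le \frac{1}{m_g}\sum_{i=1}^{m_g}\|\nabla f_i(\tilde{\btheta}_k)-\nabla f_i(\btheta_{i,k})\|^2\le \frac{L^2}{m_g}\sum_{i=1}^{m_g}\|\tilde{\btheta}_k-\btheta_{i,k}\|^2.
\end{equation*}
Taking expectation gives $\mathbb{E}\|T_1\|^2\le L^2 M_k$. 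After the factor $2$, this is the term $2L^2M_k$.

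\textbf{Bounding $T_2$.} Treat the row vector times matrix via $\|\mathbf{a}^\top\mathbf{B}\|\le\|\mathbf{a}\|\,\|\mathbf{B}\|_2\le\|\mathbf{a}\|\,\|\mathbf{B}\|_{\mathrm F}$. Then Lemma~\ref{lem:E}(iii) gives
\begin{equation*}
\|T_2\|^2\le \frac{w^4\|\vbf_1\|^2\|\Ybf_k-\Ybf_\infty\|_2^2}{m_g^2}\,\|\nabla\fbf({\bm\Theta}_k)\|_{\mathrm F}^2.
\end{equation*}
The sequence $\Ybf_k=\Abf^{k+t_0}$ is deterministic, so the prefactor comes out of the expectation. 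With the factor $2$, this gives exactly the second term of the claim.

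The step needing care is the identity $\frac{\bm 1^\top}{m_g}=\frac{\vbf_1^\top\tilde\Ybf_\infty^{-1}}{m_g}$ used inside Lemma~\ref{lem:E}(iii). It holds because $\Ybf_\infty=\bm 1\vbf_1^\top$, so $\tilde\Ybf_\infty=\mathrm{Diag}(\vbf_1)$, which is invertible since $\vbf_1>0$. We take this from the lemma as stated. The rest is routine, so I expect no real obstacle beyond checking that the norm inequality for the matrix–vector product is applied in the Frobenius form matching the statement.
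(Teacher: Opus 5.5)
Your proposal is correct and follows the paper's proof step for step. Both insert $\frac{1}{m_g}\sum_{i}\nabla f_i(\btheta_{i,k})$, split with $\|a+b\|^2\le 2\|a\|^2+2\|b\|^2$, bound the consensus term by $L$-smoothness (with the Jensen step you make explicit), and bound the weight-mismatch term by Lemma~\ref{lem:E}(iii) together with $\|\cdot\|_2\le\|\cdot\|_{\mathrm F}$. Your remarks that $\Ybf_k=\Abf^{k+t_0}$ is deterministic and that $\tilde\Ybf_\infty=\mathrm{Diag}(\vbf_1)$ only fill in details the paper leaves implicit.
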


\begin{proof}[\textbf{Proof of Lemma~\ref{lem:gradient_diff}}]
By adding and subtracting the average of the exact local gradients $\frac{1}{m_g}\sum_{i=1}^{m_g}\nabla f_{i}(\btheta_{i,k})$, we can decouple the estimation error into two parts.
Applying the basic inequality $\|a+b\|^2 \leq 2\|a\|^2 + 2\|b\|^2$, we have:
\begin{equation*}
\begin{aligned}
& \mathbb{E}\left[\norm{\nabla f(\tilde{\btheta}_k)- \frac{{\vbf}_1^{\top}\tilde\Ybf_k^{-1}\nabla {\fbf}({\bm \Theta}_k)}{m_g}}^2\right] \\
 = & \mathbb{E}\left[\norm{\nabla f(\tilde{\btheta}_k) -\frac{1}{m_g}\sum\limits_{i=1}^{m_g}\nabla f_{i}(\btheta_{i,k})+\frac{1}{m_g}\sum\limits_{i=1}^{m_g}\nabla f_{i}(\btheta_{i,k})-\sum\limits_{i=1}^{m_g} \pi_{i, k} \nabla f_{i}({\btheta}_{i,k})}^2\right]\\
 \leq & 2 \mathbb{E}\left[\norm{\nabla f(\tilde{\btheta}_k) -\frac{1}{m_g}\sum\limits_{i=1}^{m_g}\nabla f_{i}(\btheta_{i,k})}^2 \right] + 2\mathbb{E}\left[\norm{\frac{1}{m_g}\sum\limits_{i=1}^{m_g}\nabla f_{i}(\btheta_{i,k})-\sum\limits_{i=1}^{m_g} \pi_{i, k} \nabla f_{i}({\btheta}_{i,k})}^2\right].
\end{aligned}
\end{equation*}

For the first term, we utilize the definition $\nabla f(\tilde{\btheta}_k) = \frac{1}{m_g}\sum_{i=1}^{m_g}\nabla f_i(\tilde{\btheta}_k)$ and the $L$-smoothness of each $f_i$ owing to Condition~\ref{asp:loss}(ii).
For the second term, we apply Lemma~\ref{lem:E}.
Combining these two bounds, we obtain:
\begin{equation*}
\begin{aligned}
& \mathbb{E}\left[\norm{\nabla f(\tilde{\btheta}_k)- \frac{{\vbf}_1^{\top}\tilde\Ybf_k^{-1}\nabla {\fbf}({\bm \Theta}_k)}{m_g}}^2\right] \\
 \leq &  \frac{2L^2}{m_g}\sum\limits_{i=1}^{m_g} \mathbb{E}\left[\norm{\tilde{\btheta}_k -\btheta_{i,k}}^2 \right] + 2\mathbb{E}\left[\norm{\left(\frac{{\bm 1}_{m_g}}{m_g}- \frac{{\vbf}_1^{\top}\tilde\Ybf_k^{-1}}{m_g}\right)^{\top} \nabla {\fbf}({\bm \Theta}_{k})}^2\right] \\
 \leq & 2L^2 M_k + \frac{ 2w^4 \|{\vbf}_1\|^2 \|\Ybf_k - \Ybf_{\infty}\|^2_2}{m_g^2}\mathbb{E}\left[\norm{\nabla {\fbf}({\bm \Theta}_k)}_{\mathrm{F}}^2\right].
\end{aligned}
\end{equation*}
\end{proof}

\begin{lemma}\label{lem:nabla_f}
We have the following inequality under Condition~\ref{asp:sure_detection2} and \ref{asp:loss}:
\begin{equation}\label{eq:nabla_f}
\mathbb{E}\left[\norm{\nabla {\fbf}({\bm \Theta}_k)}_{\mathrm{F}}^2\right]\leq 3L^2m_g M_k + 3m_g \zeta^2 + 3\mathbb{E}\norm{\nabla f(\tilde{\btheta}_k){\bm 1}_{m_g}^{\top}}_{\mathrm{F}}^2.
\end{equation}

\end{lemma}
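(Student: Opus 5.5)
The bound comes from a three-term split of each row of $\nabla{\fbf}({\bm \Theta}_k)$, followed by summing over $i$ and taking expectations. The Frobenius norm squared is $\sum_{i=1}^{m_g}\|\nabla f_i(\btheta_{i,k})\|^2$. For each normal node $i$ we write
\begin{equation*}
\nabla f_i(\btheta_{i,k}) = \bigl(\nabla f_i(\btheta_{i,k})-\nabla f_i(\tilde{\btheta}_k)\bigr) + \bigl(\nabla f_i(\tilde{\btheta}_k)-\nabla f(\tilde{\btheta}_k)\bigr) + \nabla f(\tilde{\btheta}_k),
\end{equation*}
and apply the elementary inequality $\|a+b+c\|^2\le 3\|a\|^2+3\|b\|^2+3\|c\|^2$. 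This immediately explains the constant $3$ appearing in every term of \eqref{eq:nabla_f}.

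The three pieces are then bounded one at a time.

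\textbf{Consensus piece.} By Condition~\ref{asp:loss}(ii), each $f_i$ is $L$-smooth, because it is an average of $\ell(\cdot;\sbf)$, each of which has an $L$-Lipschitz gradient. Hence $\|\nabla f_i(\btheta_{i,k})-\nabla f_i(\tilde{\btheta}_k)\|^2\le L^2\|\btheta_{i,k}-\tilde{\btheta}_k\|^2$. Taking expectations and summing over $i$ gives $L^2\sum_i Q_{i,k}=L^2 m_g M_k$, with $Q_{i,k}$ and $M_k$ as in Lemma~\ref{lem:gradient_diff}.

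\textbf{Global-gradient piece.} Summing $\|\nabla f(\tilde{\btheta}_k)\|^2$ over the $m_g$ rows gives exactly $\|\nabla f(\tilde{\btheta}_k){\bm 1}_{m_g}^{\top}\|_{\mathrm F}^2$, so this term matches the statement verbatim after taking expectation.

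\textbf{Heterogeneity piece.} This is the step I expect to need the most care. We want $\sum_i\mathbb{E}\|\nabla f_i(\tilde{\btheta}_k)-\nabla f(\tilde{\btheta}_k)\|^2\le m_g\zeta^2$. The plan is to bound each summand by $\max_{i\in\Gcal}\|\nabla f_i(\btheta)-\frac{1}{m_g}\sum_{i}\nabla f_i(\btheta)\|^2$, evaluated at $\btheta=\tilde{\btheta}_k$, and then invoke the heterogeneity bound \eqref{eq:heter}, which controls this maximum by $4\sigma_0^2/N_0=\zeta$.

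Two subtleties arise here.
\begin{enumerate}
\item \eqref{eq:heter} is stated for a fixed $\btheta$ and in expectation over data. I would treat $\zeta$ as a uniform deterministic bound on the gradient dissimilarity over $\btheta$, which is how the paper uses it throughout. Note that the sampling randomness of the optimization phase is independent of the data sets, so the datasets can be conditioned on.
\item The statement writes $\zeta^2$ while \eqref{eq:heter} gives $\zeta$. I would read $\zeta$ as the dissimilarity bound in the sense $\max_i\|\nabla f_i-\nabla f\|\le\zeta$, which is consistent with the $\zeta^2$ terms in $D_1$ and $D_2$ and yields $m_g\zeta^2$.
\end{enumerate}
If a fully rigorous version is required, I would state this as an explicit uniform bounded-heterogeneity assumption derived from \eqref{eq:heter}.

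Adding the three bounds and multiplying by $3$ gives \eqref{eq:nabla_f}.
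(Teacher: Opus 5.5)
Your proposal follows the paper's proof essentially verbatim: the same three-way split of $\nabla f_i(\btheta_{i,k})$ around $\nabla f_i(\tilde{\btheta}_k)$ and $\nabla f(\tilde{\btheta}_k)$, the inequality $\|a+b+c\|^2\le 3(\|a\|^2+\|b\|^2+\|c\|^2)$, $L$-smoothness for the consensus term, a heterogeneity bound for the middle term, and summation over the $m_g$ normal nodes. The paper simply asserts $\mathbb{E}\|\nabla f_i(\tilde{\btheta}_k)-\nabla f(\tilde{\btheta}_k)\|^2\le\zeta^2$ without addressing either the $\zeta$-versus-$\zeta^2$ mismatch with \eqref{eq:heter} or the fact that $\tilde{\btheta}_k$ is random rather than fixed, so your explicit uniform bounded-heterogeneity reading is exactly the assumption the paper uses implicitly.
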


\begin{proof}[\textbf{Proof of Lemma~\ref{lem:nabla_f}}]
We first bound each $\mathbb{E}[\|\nabla f_i(\btheta_{i,k})\|^2], i \in [m_g]$.
\begin{equation}
\begin{aligned}
\mathbb{E}[\|\nabla f_i(\btheta_{i,k})\|^2] & =  \mathbb{E}[\|\nabla f_i(\btheta_{i,k})-  \nabla f_i(\tilde{\btheta}_{k}) + \nabla f_i(\tilde{\btheta}_{k}) - \nabla f( \tilde{\btheta}_{k}) +  \nabla f( \tilde{\btheta}_{k})  \|^2] \\
& \leq  3 \mathbb{E} [\|\nabla f_i(\btheta_{i,k})-  \nabla f_i(\tilde{\btheta}_{k})\|^2] + 3  \mathbb{E} [\|\nabla f_i(\tilde{\btheta}_{k}) - \nabla f( \tilde{\btheta}_{k})\|^2]\\
&~~~+ 3 \mathbb{E} [\| \nabla f( \tilde{\btheta}_{k})  \|^2]   \\
& \leq 3L^2 \mathbb{E}[\|\btheta_{i,k} - \btheta_{k} \|^2] + 3 \zeta^2  + 3 \mathbb{E} [\| \nabla f( \tilde{\btheta}_{k})  \|^2],
\end{aligned}
\end{equation}
where the second  inequality is because each $f_i, i \in \Gcal$ is  $L$-smooth  owing to Condition~\ref{asp:loss}(ii).

Then we have
\begin{equation}
\begin{aligned}
& \mathbb{E}\left[\norm{\nabla {\fbf}({\bm \Theta}_k)}_{\mathrm{F}}^2\right]  \leq  \sum\limits_{i=1}^{m_g} \mathbb{E}[\|\nabla f_i(\btheta_{i,k})\|^2]  \\
\leq & 3m_g L^2 \frac{1}{m_g}\sum\limits_{i=1}^{m_g} \mathbb{E}[\|\btheta_{i,k} - \btheta_{k} \|^2] + 3 m_g \frac{1}{m_g}\sum\limits_{i=1}^{m_g} \mathbb{E} [\|\nabla f_i(\tilde{\btheta}_{k}) - \nabla f( \tilde{\btheta}_{k})\|^2] \\
&+ 3 m_g \mathbb{E} [\| \nabla f( \tilde{\btheta}_{k})  \|^2]\\
\leq  & 3L^2m_g M_k + 3m_g \zeta^2 +  3\mathbb{E}\norm{\nabla f(\tilde{\btheta}_k){\bm 1}_{m_g}^{\top}}_{\mathrm{F}}^2.
\end{aligned}
\end{equation}
\end{proof}

\begin{lemma}\label{lem:sum_Mk}
We have the following inequality under Condition~\ref{asp:sure_detection2} and \ref{asp:loss}:
\begin{equation}\label{eq:Mkk}
\sum\limits_{k=0}^{K-1} M_{k} \leq  \frac{2 \eta^2 w^2 c^2}{1-\frac{18\eta^2L^2m_g w^2 c^2}{(1-\rho)^2}}\Biggl[\frac{9 \sum\limits_{j=0}^{K-1} \mathbb{E}\norm{\nabla f(\tilde{\btheta}_j){\bm 1}_{m_g}^{\top}}_{\mathrm{F}}^2 + 9  m_g  \zeta^2 K}{(1-\rho)^2} + \frac{m_g \sigma^2 K}{1-\rho^2}\Biggr],
\end{equation}
and
\begin{equation*}
\begin{aligned}
\sum\limits_{k=0}^{K-1} \rho^k M_{k}
& \leq   \frac{18\eta^2L^2m_g w^2 c^2}{(1-\rho^2)(1-\rho)}\sum\limits_{k=0}^{K-1} M_{k} + \frac{18\eta^2w^2 c^2 }{(1-\rho^2)(1-\rho)}\sum\limits_{k=0}^{K-1} \mathbb{E}\norm{\nabla f(\tilde{\btheta}_k){\bm 1}_{m_g}^{\top}}_{\mathrm{F}}^2\\
& ~~~ + \frac{18 \eta^2 m_g \zeta^2 w^2 c^2 }{(1-\rho)^3} + \frac{2m_g \eta^2 \sigma^2 w^2 c^2}{(1-\rho^2)(1-\rho)}.\\
\end{aligned}
\end{equation*}
\end{lemma}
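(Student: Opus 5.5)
The plan is to study the consensus error matrix $\bm\Theta_k - \bm 1_{m_g}\tilde{\btheta}_k^\top$ and unroll it from the compact recursion $\bm\Theta_{k+1}=\Abf\bm\Theta_k-\eta\tilde\Ybf_k^{-1}\Gbf(\bm\Theta_k;\Xi_k)$. We have ${\vbf}_1^\top\Abf={\vbf}_1^\top$ and $\Abf\bm 1=\bm 1$. Hence $(\Ibf-\bm 1{\vbf}_1^\top)\Abf=(\Abf-\bm 1{\vbf}_1^\top)(\Ibf-\bm 1{\vbf}_1^\top)$. Iterating, we obtain
\begin{equation*}
(\Ibf-\bm 1{\vbf}_1^\top)\bm\Theta_k=(\Abf^k-\bm 1{\vbf}_1^\top)\bm\Theta_0-\eta\sum_{t=0}^{k-1}(\Abf^{k-1-t}-\bm 1{\vbf}_1^\top)\tilde\Ybf_t^{-1}\Gbf(\bm\Theta_t;\Xi_t).
\end{equation*}
The initialization term either vanishes, if we take a consensual start, or is absorbed; the displayed bound suggests a common initialization. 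By the Perron--Frobenius bound $\|\Abf^s-\bm 1{\vbf}_1^\top\|_2\le c\rho^s$ and Lemma~\ref{lem:E}(ii), each summand is bounded by $c\rho^{k-1-t}w\|\Gbf(\bm\Theta_t;\Xi_t)\|_{\mathrm F}$. Note that $m_gM_k=\mathbb E\|(\Ibf-\bm 1{\vbf}_1^\top)\bm\Theta_k\|_{\mathrm F}^2$.

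Next, split $\Gbf=(\Gbf-\nabla\fbf)+\nabla\fbf$ and use $\|a+b\|^2\le2\|a\|^2+2\|b\|^2$. For the noise part, the martingale differences $\Gbf(\bm\Theta_t;\Xi_t)-\nabla\fbf(\bm\Theta_t)$ are conditionally mean-zero given $\Fcal_t$, and the matrices $\tilde\Ybf_t^{-1}$ and $\Abf^{s}$ are deterministic. So cross terms vanish in expectation, leaving $\eta^2w^2c^2\sum_t\rho^{2(k-1-t)}m_g\sigma^2$. For the gradient part, apply Cauchy--Schwarz with weights $\rho^{k-1-t}$, which gives $\eta^2w^2c^2(1-\rho)^{-1}\sum_t\rho^{k-1-t}\mathbb E\|\nabla\fbf(\bm\Theta_t)\|_{\mathrm F}^2$. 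Then Lemma~\ref{lem:nabla_f} bounds this by $3L^2m_gM_t+3m_g\zeta^2+3\mathbb E\|\nabla f(\tilde{\btheta}_t)\bm 1^\top\|_{\mathrm F}^2$. Altogether,
\begin{equation*}
m_gM_k\le 2\eta^2w^2c^2\Bigl[\tfrac{1}{1-\rho}\sum_{t<k}\rho^{k-1-t}\bigl(9L^2m_gM_t+9m_g\zeta^2+9\,\mathbb E\|\nabla f(\tilde{\btheta}_t)\bm 1^\top\|_{\mathrm F}^2\bigr)+\tfrac{m_g\sigma^2}{1-\rho^2}\Bigr].
\end{equation*}
The factor $9$ rather than $3$ comes from an extra factor $3$ in the splitting; I would allow some slack in the constants to match the statement.

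To obtain \eqref{eq:Mkk}, sum over $k=0,\dots,K-1$ and exchange the order of summation. The geometric sums give $\sum_{k>t}\rho^{k-1-t}\le(1-\rho)^{-1}$, which produces the $(1-\rho)^{-2}$ factors. Divide by $m_g$, so that $M_t$ carries the coefficient $18\eta^2L^2m_gw^2c^2/(1-\rho)^2$. Move this term to the left-hand side, which is legitimate once the step-size conditions in Theorem~\ref{thm:main0} (via $D_4$) make the coefficient less than $1$. Then divide to get the stated prefactor.

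For the weighted sum $\sum_k\rho^kM_k$, multiply the per-$k$ bound by $\rho^k$ before summing. The key identity is $\sum_{k>t}\rho^k\rho^{k-1-t}\le\rho^{t+1}/(1-\rho^2)\le(1-\rho^2)^{-1}$; this is where the factor $(1-\rho^2)(1-\rho)$ comes from. Sum the $\zeta^2$ and $\sigma^2$ terms as plain geometric series without the $K$ factor, because $\sum_k\rho^k\le(1-\rho)^{-1}$. Keep the $M_t$ term on the right rather than absorbing it. The main obstacle is the bookkeeping of constants, especially justifying the $\sigma^2$ cross-term cancellation under the conditional expectation when $\Fcal_t$ is nested. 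It is also necessary to check that $\tilde\Ybf_t$ is indeed deterministic, which holds since $\Ybf_t=\Abf^{t+t_0}$.
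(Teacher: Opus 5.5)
Your proposal is correct and is essentially the paper's argument: you unroll the recursion through $\Abf^{s}-\bm 1\vbf_1^\top$, split it into a martingale-noise part (whose cross terms vanish) and a gradient part controlled by geometric weights and Lemma~\ref{lem:nabla_f}, sum over $k$ and absorb the $M_t$ term, then reweight by $\rho^k$; the only difference is that you work with the whole matrix in Frobenius norm and a weighted Cauchy--Schwarz, instead of the paper's per-node $Q_{i,k}$ and diagonal/off-diagonal ($T_3$, $T_4$) expansion. Two bookkeeping remarks: (a) after dividing by $m_g$ your bound carries no $m_g$ on the $M_t$, $\zeta^2$ and $\sigma^2$ terms, has only $1/m_g$ on the $\nabla f(\tilde{\btheta}_t)$ term, and has constant $3$ rather than $9$, so it is sharper than the statement and implies it because $m_g\ge 1$ and $a\mapsto (1-a)^{-1}$ is increasing; (b) the initialization term cannot be ``absorbed'', since the stated bounds need $(\Abf^k-\bm 1\vbf_1^\top)\bm\Theta_0=\bm 0$, i.e.\ a consensual start, which is exactly what the paper's ``without loss of generality $\bm\Theta_0=\bm 0$'' assumes.
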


\begin{proof}[\textbf{Proof of Lemma~\ref{lem:sum_Mk}}]
We bound $Q_{i,k}$ as follows:
\begin{equation}
\begin{aligned}
Q_{i,k} & = \mathbb{E}[\| {\vbf}_1^{\top}{\bm \Theta}_{k} - {\bm 1}_{\{i\}}^{\top}{\bm \Theta}_{k}  \|^2]  \\
& = \mathbb{E}[\|({\vbf}_1^{\top}-{\bm 1}_{\{i\}}^{\top}\Abf){\bm \Theta}_{k-1} - ({\vbf}_1^{\top}-{\bm 1}_{\{i\}}^{\top})\eta \mathrm{Diag}(\Ybf_{k})^{-1} \Gbf({\bm \Theta}_{k-1}, \Xi_{k-1})\|^2]\\
& = \mathbb{E}[\|({\vbf}_1^{\top}- {\bm 1}_{\{i\}}^{\top}\Abf^k){\bm \Theta}_0 - \sum\limits_{j=0}^{k-1} \eta ({\vbf}_1^{\top} - {\bm 1}_{\{i\}}^{\top}\Abf^{k-j-1})\mathrm{Diag}(\Ybf_{j})^{-1} \Gbf({\bm \Theta}_{j}, \Xi_{j})\|^2].\\
\end{aligned}
\end{equation}
Without loss of generality, we can assume ${\bm \Theta}_0={\bm 0}$, then $Q_{i,k}$ can be further bounded by
\begin{equation}
\begin{aligned}
Q_{i,k} & = \mathbb{E}[\| \sum\limits_{j=0}^{k-1} \eta ({\vbf}_1^{\top} - {\bm 1}_{\{i\}}^{\top}\Abf^{k-j-1})\mathrm{Diag}(\Ybf_{j})^{-1} \Gbf({\bm \Theta}_{j}, \Xi_{j})\|^2]\\
& = \mathbb{E}[\| \eta \sum\limits_{j=0}^{k-1}  ({\vbf}_1^{\top} - {\bm 1}_{\{i\}}^{\top}\Abf^{k-j-1})\mathrm{Diag}(\Ybf_{j})^{-1} (\Gbf({\bm \Theta}_{j}, \Xi_{j})- \nabla {\fbf} ({\bm \Theta}_{j})+ \nabla {\fbf} ({\bm \Theta}_{j}))\|^2]\\
& \leq 2\eta^2 \mathbb{E}[\|  \sum\limits_{j=0}^{k-1}  ({\vbf}_1^{\top} - {\bm 1}_{\{i\}}^{\top}\Abf^{k-j-1})\mathrm{Diag}(\Ybf_{j})^{-1} (\Gbf({\bm \Theta}_{j}, \Xi_{j})- \nabla {\fbf} ({\bm \Theta}_{j}))\|^2] \\
& ~~~ + 2\eta^2 \mathbb{E}[\| \sum\limits_{j=0}^{k-1}  ({\vbf}_1^{\top} - {\bm 1}_{\{i\}}^{\top}\Abf^{k-j-1})\mathrm{Diag}(\Ybf_{j})^{-1}  \nabla {\fbf} ({\bm \Theta}_{j})\|^2] \\
& := T_1 +T_2.
\end{aligned}
\end{equation}
For $T_1$, we have
\begin{equation*}
\begin{aligned}
T_1 & = 2\eta^2 \sum\limits_{j=0}^{k-1}\mathbb{E}[\|    ({\vbf}_1^{\top} - {\bm 1}_{\{i\}}^{\top}\Abf^{k-j-1})\mathrm{Diag}(\Ybf_{j})^{-1} (\Gbf({\bm \Theta}_{j}, \Xi_{j})- \nabla {\fbf} ({\bm \Theta}_{j}))\|^2]\\
& \leq 2\eta^2 \sum\limits_{j=0}^{k-1}\mathbb{E}\left[\mathbb{E}[\|\Gbf({\bm \Theta}_{j}, \Xi_{j})- \nabla {\fbf} ({\bm \Theta}_{j})\|_{\mathrm{F}}^2\big| \Fcal_{j}]\right]w^2 c^2 \rho^{2(k-j-1)}\\
& \leq 2m_g \eta^2 \sigma^2 w^2 c^2\sum\limits_{j=0}^{k-1} \rho^{2(k-j-1)} \leq  \frac{2m_g \eta^2 \sigma^2 w^2 c^2}{1-\rho^2}.
\end{aligned}
\end{equation*}

For $T_2$, we have
\begin{equation*}
\begin{aligned}
T_2 =& 2\eta^2 \mathbb{E}[\|  \sum\limits_{j=0}^{k-1}  ({\vbf}_1^{\top} - {\bm 1}_{\{i\}}^{\top}\Abf^{k-j-1})\mathrm{Diag}(\Ybf_{j})^{-1}  \nabla {\fbf} ({\bm \Theta}_{j})\|^2]\\
=& 2\eta^2 \sum\limits_{j=0}^{k-1} \mathbb{E} [\| ({\vbf}_1^{\top} - {\bm 1}_{\{i\}}^{\top}\Abf^{k-j-1})\mathrm{Diag}(\Ybf_{j})^{-1}  \nabla {\fbf} ({\bm \Theta}_{j}) \|^2] \\
& + 2\eta^2 \sum\limits_{j\neq j'} \mathbb{E} \Big\langle ({\vbf}_1^{\top} - {\bm 1}_{\{i\}}^{\top}\Abf^{k-j-1})\mathrm{Diag}(\Ybf_{j})^{-1}  \nabla {\fbf} ({\bm \Theta}_{j}),\\
& ~~~~~~~~~~~~~~~~~~({\vbf}_1^{\top} - {\bm 1}_{\{i\}}^{\top}\Abf^{k-j'-1})\mathrm{Diag}(\Ybf_{j'})^{-1}  \nabla {\fbf} ({\bm \Theta}_{j'})  \Big\rangle \\
:= & T_3 +T_4.
\end{aligned}
\end{equation*}

For $T_3$, by Lemma~\ref{lem:nabla_f}, we have
\begin{equation*}
\begin{aligned}
T_3 & \leq  2\eta^2 \sum\limits_{j=0}^{k-1} \mathbb{E}[\|\nabla \fbf ({\bm \Theta}_{j})\|_{\mathrm{F}}^2] \| ({\vbf}_1^{\top} - {\bm 1}_{\{i\}}^{\top}\Abf^{k-j-1})\mathrm{Diag}(\Ybf_{j})^{-1}\|^2 \\
& \leq  2\eta^2  \sum\limits_{j=0}^{k-1}  \Bigl(3L^2m_g M_j + 3m_g\zeta^2+ 3\mathbb{E}\Bignorm{\nabla f(\tilde{\btheta}_j){\bm 1}_{m_g}^{\top}}_{\mathrm{F}}^2\Bigr) \\
&~~~\times \bignorm{ ({\vbf}_1^{\top} - {\bm 1}_{\{i\}}^{\top}\Abf^{k-j-1})\mathrm{Diag}(\Ybf_{j})^{-1} }^2\\
& \leq 6\eta^2 \sum\limits_{j=0}^{k-1} L^2m_g M_j \| ({\vbf}_1^{\top} - {\bm 1}_{\{i\}}^{\top}\Abf^{k-j-1})\mathrm{Diag}(\Ybf_{j})^{-1}\|^2 + \frac{6m_g \eta^2 \zeta^2 w^2 c^2}{1-\rho^2} \\
& ~~~ + 6\eta^2 \sum\limits_{j=0}^{k-1} \mathbb{E}\norm{\nabla f(\tilde{\btheta}_j){\bm 1}_{m_g}^{\top}}_{\mathrm{F}}^2  \| ({\vbf}_1^{\top} - {\bm 1}_{\{i\}}^{\top}\Abf^{k-j-1})\mathrm{Diag}(\Ybf_{j})^{-1}\|^2 \\
& \leq 6\eta^2 \sum\limits_{j=0}^{k-1} L^2m_g M_j w^2c^2 \rho^{2(k-j-1)} + \frac{6m_g \eta^2 \zeta^2 w^2 c^2}{1-\rho^2}  \\
&~~~+ 6\eta^2 \sum\limits_{j=0}^{k-1} \mathbb{E}\Bignorm{\nabla f(\tilde{\btheta}_j){\bm 1}_{m_g}^{\top}}_{\mathrm{F}}^2  w^2 c^2 \rho^{2(k-j-1)}. \\
\end{aligned}
\end{equation*}

For $T_4$,  by Lemma~\ref{lem:nabla_f}, we have

\begin{equation*}
\begin{aligned}
T_4 & = 2\eta^2 \sum\limits_{j\neq j'}^{k-1} \mathbb{E}\Big\langle ({\vbf}_1^{\top} - {\bm 1}_{\{i\}}^{\top}\Abf^{k-j-1})\mathrm{Diag}(\Ybf_{j})^{-1}  \nabla f ({\bm \Theta}_{j}),\\
& ~~~~~~~~~~~~~~~~~~({\vbf}_1^{\top} - {\bm 1}_{\{i\}}^{\top}\Abf^{k-j'-1})\mathrm{Diag}(\Ybf_{j'})^{-1}  \nabla f ({\bm \Theta}_{j'})  \Big\rangle\\
& \leq 2\eta^2 \sum\limits_{j\neq j'}^{k-1} \mathbb{E}\left[\frac{\|\nabla {\fbf}({\bm \Theta}_{j})\|_{\mathrm{F}}^2+\|\nabla {\fbf}({\bm \Theta}_{j'})\|_{\mathrm{F}}^2}{2}\right] w^2 c^2 \rho^{k-j-1} \rho^{k-j'-1}  \\
& = 2\eta^2 \sum\limits_{j\neq j'}^{k-1} \mathbb{E}\left[\|\nabla {\fbf}({\bm \Theta}_{j})\|_{\mathrm{F}}^2\right] w^2 c^2 \rho^{2k-j-j'-2}  \\
& \leq 2\eta^2 \sum\limits_{j \neq j'}^{k-1} (3L^2m_g M_j
+ 3m_g \zeta^2  + 3\mathbb{E}\norm{\nabla f(\tilde{\btheta}_j){\bm 1}_{m_g}^{\top}}_{\mathrm{F}}^2)w^2 c^2 \rho^{2k-j-j'-2}\\
& = 12\eta^2 \sum\limits_{j =0}^{k-1} (L^2m_g M_j + \mathbb{E}\norm{\nabla f(\tilde{\btheta}_j){\bm 1}_{m_g}^{\top}}_{\mathrm{F}}^2)w^2 \sum\limits_{j'= j+1}^{k-1}\rho^{2k-j-j'-2} \\
&~~~+12 \eta^2 m_g \zeta^2 w^2 c^2 \sum\limits_{j>j'}^{k-1} \rho^{2k-j-j'-2}   \\
& \leq 12\eta^2 \sum\limits_{j =0}^{k-1} (L^2m_g M_j + \mathbb{E}\norm{\nabla f(\tilde{\btheta}_j){\bm 1}_{m_g}^{\top}}_{\mathrm{F}}^2)w^2 c^2 \frac{\rho^{k-j-1}}{1-\rho} + 12 \eta^2 m_g \zeta^2 w^2 c^2  \frac{1}{(1-\rho)^2}.\\
\end{aligned}
\end{equation*}

Plugging $T_3$ and $T_4$ into $T_2$ yields the bound for $T_2$:

\begin{equation*}
\begin{aligned}
T_2 & \leq  6\eta^2  \sum\limits_{j=0}^{k-1} L^2m_g M_jw^2 c^2 (\frac{2\rho^{k-j-1}}{1-\rho}+ \rho^{2(k-j-1)}) + \frac{6m_g\eta^2 \zeta^2w^2 c^2}{1-\rho^2} \\
& ~~~+ 6\eta^2 \sum\limits_{j =0}^{k-1}  \mathbb{E}\norm{\nabla f(\tilde{\btheta}_j){\bm 1}_{m_g}^{\top}}_{\mathrm{F}}^2w^2 c^2 (\frac{2\rho^{k-j-1}}{1-\rho}+ \rho^{2(k-j-1)}) +  \frac{12 \eta^2 m_g \zeta^2 w^2 c^2}{(1-\rho)^2}\\
& \leq  18\eta^2  \sum\limits_{j=0}^{k-1} L^2m_g M_jw^2 c^2 \frac{\rho^{k-j-1}}{1-\rho} \\
&~~~+ 18\eta^2 \sum\limits_{j =0}^{k-1}  \mathbb{E}\norm{\nabla f(\tilde{\btheta}_j){\bm 1}_{m_g}^{\top}}_{\mathrm{F}}^2 \frac{\rho^{k-j-1} w^2 c^2}{1-\rho} + \frac{18 \eta^2 m_g \zeta^2 w^2 c^2}{(1-\rho)^2}.
\end{aligned}
\end{equation*}

Plugging the bound for $T_1$ and $T_2$ into $Q_{i,k}$ yields that:

\begin{equation}\label{eq:Qik}
\begin{aligned}
Q_{i,k} & \leq  18\eta^2  \sum\limits_{j=0}^{k-1} L^2m_g M_jw^2 c^2 \frac{\rho^{k-j-1}}{1-\rho} + 18\eta^2 \sum\limits_{j =0}^{k-1}  \mathbb{E}\norm{\nabla f(\tilde{\btheta}_j){\bm 1}_{m_g}^{\top}}_{\mathrm{F}}^2w^2 c^2 \frac{\rho^{k-j-1}}{1-\rho}\\
& ~~~ + \frac{18 \eta^2 m_g \zeta^2 w^2 c^2}{(1-\rho)^2} +  \frac{2m_g \eta^2 \sigma^2 w^2 c^2}{1-\rho^2}. \\
\end{aligned}
\end{equation}

Taking the average of \eqref{eq:Qik} over $i =1, \ldots, m_g$, we obtain:
\begin{equation}\label{eq:Mk}
\begin{aligned}
M_{k} & \leq  18\eta^2L^2m_g w^2 c^2 \sum\limits_{j=0}^{k-1} M_j \frac{\rho^{k-j-1}}{1-\rho} + 18\eta^2w^2 c^2 \sum\limits_{j =0}^{k-1}  \mathbb{E}\norm{\nabla f(\tilde{\btheta}_j){\bm 1}_{m_g}^{\top}}_{\mathrm{F}}^2 \frac{\rho^{k-j-1}}{1-\rho} \\
& ~~~ +  \frac{18 \eta^2 m_g \zeta^2 w^2 c^2}{(1-\rho)^2} +  \frac{2m_g \eta^2 \sigma^2 w^2 c^2}{1-\rho^2}. \\
\end{aligned}
\end{equation}

Notice that for a non-negative sequence $\{R_{j}\}$, it holds that
\begin{equation}
\begin{aligned}
&\sum\limits_{k=0}^{K-1}\sum\limits_{j=0}^{k-1} R_j \rho^{k-j-1} \\
= & R_0( \rho^{K-1} + \ldots + \rho^{0}) + R_1( \rho^{K-2} + \ldots + \rho^{0}) + \ldots + R_{K-1} \rho^{0} \\
\leq &  \frac{1}{1-\rho}\sum\limits_{j=0}^{K-1} R_j,
\end{aligned}
\end{equation}
and
\begin{equation}
\begin{aligned}
&\sum\limits_{k=0}^{K-1} \rho^k \sum\limits_{j=0}^{k-1} R_j \rho^{k-j-1} = \sum\limits_{k=0}^{K-1}  \sum\limits_{j=0}^{k-1} R_j \rho^{2k-j-1} \\
\leq & \sum\limits_{k=0}^{K-1}  \sum\limits_{j=0}^{k-1} R_j \rho^{2(k-j-1)} \leq \frac{1}{1-\rho^2}\sum\limits_{j=0}^{K-1} R_j.
\end{aligned}
\end{equation}

Summing \eqref{eq:Mk} from $k = 0$ to $K -1$, we get:

\begin{equation*}\label{eq:sumMk}
\begin{aligned}
\sum\limits_{k=0}^{K-1} M_{k} & = 18\eta^2L^2m_g w^2 c^2 \sum\limits_{k=0}^{K-1}\sum\limits_{j=0}^{k-1} M_j \frac{\rho^{k-j-1}}{1-\rho} + 18\eta^2w^2 c^2 \sum\limits_{k=0}^{K-1}\sum\limits_{j =0}^{k-1}  \mathbb{E}\norm{\nabla f(\tilde{\btheta}_j){\bm 1}_{m_g}^{\top}}_{\mathrm{F}}^2 \frac{\rho^{k-j-1}}{1-\rho} \\
& ~~~ + \sum\limits_{k=0}^{K-1}\frac{18 \eta^2 m_g \zeta^2 w^2 c^2}{(1-\rho)^2}+  \sum\limits_{k=0}^{K-1} \frac{2m_g \eta^2 \sigma^2 w^2 c^2}{1-\rho^2}. \\
& \leq  \frac{18\eta^2L^2m_g w^2 c^2}{(1-\rho)^2} \sum\limits_{j=0}^{K-1} M_j  + \frac{18\eta^2w^2 c^2}{(1-\rho)^2} \sum\limits_{j=0}^{K-1} \mathbb{E}\norm{\nabla f(\tilde{\btheta}_j){\bm 1}_{m_g}^{\top}}_{\mathrm{F}}^2 \\
& ~~~ + \frac{18 \eta^2 m_g \zeta^2 w^2 c^2}{(1-\rho)^2}K + \frac{2m_g \eta^2 \sigma^2 w^2 c^2K}{1-\rho^2}. \\
\end{aligned}
\end{equation*}

Rearranging the terms in the above bound, we get the bound for $\sum_{k=0}^{K-1} M_{k}$:
\begin{equation}
\begin{aligned}
\sum\limits_{k=0}^{K-1} M_{k} & \leq  \frac{1}{1-\frac{18\eta^2L^2m_g w^2 c^2}{(1-\rho)^2}}\frac{18\eta^2w^2 c^2}{(1-\rho)^2} \sum\limits_{j=0}^{K-1} \mathbb{E}\norm{\nabla f(\tilde{\btheta}_j){\bm 1}_{m_g}^{\top}}_{\mathrm{F}}^2 \\
& ~~~ + \frac{1}{1-\frac{18\eta^2L^2m_g w^2 c^2}{(1-\rho)^2}} \left(\frac{18 \eta^2 m_g \zeta^2 w^2 c^2}{(1-\rho)^2}K + \frac{2m_g \eta^2 \sigma^2 w^2 c^2K}{1-\rho^2}\right). \\
\end{aligned}
\end{equation}

Furthermore, multiplying both sides of \eqref{eq:Mk}  by $\rho^k$, and summing this inequality over $k = 0, \ldots, K -1$ gives
\begin{equation}
\begin{aligned}
& \sum\limits_{k=0}^{K-1} \rho^k M_{k}\\
\leq & 18\eta^2L^2m_g w^2 c^2 \sum\limits_{k=0}^{K-1} \rho^k \sum\limits_{j=0}^{k-1} M_j \frac{\rho^{k-j-1}}{1-\rho} \\
& + 18\eta^2w^2 c^2 \sum\limits_{k=0}^{K-1} \rho^k \sum\limits_{j =0}^{k-1}  \mathbb{E}\norm{\nabla f(\tilde{\btheta}_j){\bm 1}_{m_g}^{\top}}_{\mathrm{F}}^2 \frac{\rho^{k-j-1}}{1-\rho} \\
&+\sum\limits_{k=0}^{K-1} \rho^k \frac{18 \eta^2 m_g \zeta^2 w^2 c^2}{(1-\rho)^2}  + \sum\limits_{k=0}^{K-1} \rho^k \frac{2m_g \eta^2 \sigma^2 w^2 c^2}{1-\rho^2} \\
 \leq &  \frac{18\eta^2L^2m_g w^2 c^2}{(1-\rho^2)(1-\rho)}\sum\limits_{k=0}^{K-1} M_{k} + \frac{18\eta^2w^2 c^2 }{(1-\rho^2)(1-\rho)}\sum\limits_{k=0}^{K-1} \mathbb{E}\norm{\nabla f(\tilde{\btheta}_k){\bm 1}_{m_g}^{\top}}_{\mathrm{F}}^2 \\
 & ~ + \frac{18 \eta^2 m_g \zeta^2 w^2 c^2 }{(1-\rho)^3} + \frac{2m_g \eta^2 \sigma^2 w^2 c^2 }{(1-\rho^2)(1-\rho)}.\\
\end{aligned}
\end{equation}

\end{proof}

\begin{proof}[\textbf{Proof of Theorem~\ref{thm:main0}}]
According to Condition~\ref{asp:loss}, we know each $f_i$ is Lipschitz smooth with parameter $L$, which implies $f$ is also Lipschitz smooth with parameter $L$.
Then we have
\begin{equation*}
\begin{aligned}
 f(\tilde{\btheta}_{k+1})  =  & f\left({\vbf}_1^{\top}\Abf{\bm \Theta}_k - \eta {\vbf}_1^{\top}\tilde\Ybf_k^{-1} \Gbf({\bm \Theta}_k; \Xi_k)\right)\\
  \overset{L-\text{smooth}}{\leq} & f(\tilde{\btheta}_k) - m_g \eta \left\langle \nabla f(\tilde{\btheta}_k), \frac{{\vbf}_1^{\top}\tilde\Ybf_k^{-1}}{m_g} \Gbf({\bm \Theta}_k; \Xi_k) \right\rangle + \frac{m_g^2 \eta^2 L}{2} \norm{\frac{{\vbf}_1^{\top}\tilde\Ybf_k^{-1}}{m_g} \Gbf({\bm \Theta}_k; \Xi_k)}^2.
\end{aligned}
\end{equation*}
Taking expectations of both sides conditioned on $\mathcal{F}_{k}$, we have
\begin{equation*}
\begin{aligned}
& \mathbb{E}[f(\tilde{\btheta}_{k+1})|\mathcal{F}_{k}] \\
\leq & f(\tilde{\btheta}_k) - m_g \eta \left\langle \nabla f(\tilde{\btheta}_k), \frac{{\vbf}_1^{\top}\tilde\Ybf_k^{-1}\nabla {\fbf}({\bm \Theta}_k)}{m_g}  \right\rangle + \frac{m_g^2 \eta^2 L}{2} \mathbb{E}\left[\norm{\frac{{\vbf}_1^{\top}\tilde\Ybf_k^{-1}\Gbf({\bm \Theta}_k; \Xi_k)}{m_g}}^2 \Bigg| \mathcal{F}_{k}\right] \\
= & f(\tilde{\btheta}_k)  - \frac{m_g \eta}{2}\norm{\frac{{\vbf}_1^{\top}\tilde\Ybf_k^{-1}\nabla {\fbf}({\bm \Theta}_k)}{m_g}}^2 + \frac{m_g \eta}{2}\norm{\nabla f(\tilde{\btheta}_k)- \frac{{\vbf}_1^{\top}\tilde\Ybf_k^{-1}\nabla {\fbf}({\bm \Theta}_k)}{m_g}}^2\\
& - \frac{m_g \eta}{2} \|\nabla f(\tilde{\btheta}_k)\|^2 + \frac{m_g^2 \eta^2 L}{2} \mathbb{E}\left[\norm{\frac{{\vbf}_1^{\top}\tilde\Ybf_k^{-1}\Gbf({\bm \Theta}_k; \Xi_k)}{m_g}}^2 \Bigg| \mathcal{F}_{k}\right].\\
\end{aligned}
\end{equation*}

Taking expectations with respect to $\mathcal{F}_{k}$, and employing Lemma  \ref{lem:square_S_control},  \ref{lem:gradient_diff} and tower property, we attain that
\begin{equation}\label{eq:all}
\begin{aligned}
\mathbb{E}[f(\tilde{\btheta}_{k+1})]
\leq &  \mathbb{E}\left[f(\tilde{\btheta}_k)\right]   - (\frac{m_g \eta}{2} - \frac{m_g^2\eta^2 L}{2})\mathbb{E}\left[\norm{\frac{{\vbf}_1^{\top}\tilde\Ybf_k^{-1}\nabla {\fbf}({\bm \Theta}_k)}{m_g}}^2\right] + \eta m_g L^2 M_k \\
& + \frac{ \eta w^4 \|{\vbf}_1\|^2 \|\Ybf_k - \Ybf_{\infty}\|^2_2}{m_g} \mathbb{E}\left[\norm{\nabla {\fbf}({\bm \Theta}_k)}_{\mathrm{F}}^2\right] \\
&+ \frac{m_g \eta^2 w^2\|{\vbf}_1\|^2 \sigma^2 L}{2} - \frac{m_g \eta}{2} \mathbb{E}[\|\nabla f(\tilde{\btheta}_k)\|^2].
\end{aligned}
\end{equation}

Rearranging the terms in \eqref{eq:all}, we have
\begin{equation}\label{eq:all_1}
\begin{aligned}
& \sum\limits_{k=0}^{K-1}(\frac{m_g \eta}{2} - \frac{m_g^2\eta^2 L}{2})\mathbb{E}\left[\norm{\frac{\vbf_1^{\top}(\tilde\Ybf_k)^{-1}\nabla {\fbf}({\bm \Theta}_k)}{m_g}}^2\right] + \sum\limits_{k=0}^{K-1} \frac{m_g \eta}{2} \mathbb{E}[\|\nabla f(\tilde{\btheta}_k)\|^2]\\
\leq &  f(\tilde{\bm \theta}_0) -f^\ast + \frac{m_g \eta^2 w^2 \|{\vbf}_1\|^2 \sigma^2  LK}{2} + \sum\limits_{k=0}^{K-1} \eta m_g L^2 M_k \\
&+ \sum\limits_{k=0}^{K-1}  \frac{ \eta w^4 \|{\vbf}_1\|^2 \|\Ybf_k - \Ybf_{\infty}\|^2_2}{m_g} \mathbb{E}\left[\norm{\nabla {\fbf}({\bm \Theta}_k)}_{\mathrm{F}}^2\right].
\end{aligned}
\end{equation}

Let $V_k:= \frac{ \eta w^2  \|\Ybf_k - \Ybf_{\infty}\|_2}{m_g} \mathbb{E}\left[\norm{\nabla {\fbf}({\bm \Theta}_k)}_{\mathrm{F}}^2\right]$.
Plugging \eqref{eq:nabla_f} into $V_{k}$ and combining Lemma~\ref{lem:sum_Mk}, we obtain
\begin{equation*}
\begin{aligned}
&~~~ \sum\limits_{k=0}^{K-1} V_k \\
& \leq \frac{ \eta w^2 c }{m_g} \sum\limits_{k=0}^{K-1} \rho^k [ 3m_g\zeta^2 + 3m_g L^2 M_k] +   \sum\limits_{k=0}^{K-1} \frac{ 3\eta w^4 \|{\vbf}_1\|^2 \|\Ybf_k - \Ybf_{\infty}\|^2_2}{m_g}  \mathbb{E}(\|\nabla f(\tilde{\btheta}_k){\bm 1}_{m_g}^{\top}\|_{\mathrm{F}}^2)]  \\
& \leq \frac{ \eta w^2 c }{m_g}  \left[ \frac{3m_g\zeta^2}{1-\rho} +  \frac{54 \eta^2L^4 m_g^2 w^2 c^2}{(1-\rho^2)(1-\rho)}\sum\limits_{k=0}^{K-1} M_{k} + \frac{54 m_g L^2\eta^2w^2 c^2 }{(1-\rho^2)(1-\rho)}\sum\limits_{k=0}^{K-1} \mathbb{E}\norm{\nabla f(\tilde{\btheta}_k){\bm 1}_{m_g}^{\top}}_{\mathrm{F}}^2\right. \\
& \left.~~~~~~~~~~~~+ \frac{54 \eta^2 m_g^2 L^2\zeta^2 w^2 c^2 }{(1-\rho)^3} + \frac{6m_g^2 L^2 \eta^2 \sigma^2 w^2 c^2}{(1-\rho^2)(1-\rho)} \right] \\
&~~~+ \frac{ \eta w^4 \|{\vbf}_1\|^2 \|\Ybf_k - \Ybf_{\infty}\|^2_2}{m_g} \sum\limits_{k=0}^{K-1}   3\mathbb{E}(\|\nabla f(\tilde{\btheta}_k){\bm 1}_{m_g}^{\top}\|_{\mathrm{F}}^2)]  \\
& =   \frac{3\eta w^2 c  \zeta^2}{1-\rho} + \frac{54 \eta^3 L^4 m_g w^4 c^3 }{(1-\rho^2)(1-\rho)}\sum\limits_{k=0}^{K-1} M_{k} + \frac{54  L^2\eta^3 w^4 c^3  }{(1-\rho^2)(1-\rho)}\sum\limits_{k=0}^{K-1} \mathbb{E}\norm{\nabla f(\tilde{\btheta}_k){\bm 1}_{m_g}^{\top}}_{\mathrm{F}}^2 \\
& ~~~+ \frac{54 \eta^3 m_g L^2 \zeta^2 w^4 c^3 }{(1-\rho)^3} + \frac{6m_g L^2 \eta^3 \sigma^2 w^4 c^3 }{(1-\rho^2)(1-\rho)}   \\
&~~~+  \sum\limits_{k=0}^{K-1}   \frac{ 3\eta w^4 \|{\vbf}_1\|^2 \|\Ybf_k - \Ybf_{\infty}\|^2_2}{m_g} \mathbb{E}(\|\nabla f(\tilde{\btheta}_k){\bm 1}_{m_g}^{\top}\|_{\mathrm{F}}^2)].\\
\end{aligned}
\end{equation*}

Plugging the above bound into \eqref{eq:all_1}, we have

\begin{equation*}
\begin{aligned}
& \sum\limits_{k=0}^{K-1}(\frac{m_g \eta}{2} - \frac{m_g^2\eta^2 L}{2})\mathbb{E}\left[\norm{\frac{\vbf_1^{\top}(\tilde\Ybf_k)^{-1}\nabla {\fbf}({\bm \Theta}_k)}{m_g}}^2\right] \\
&~~~+ \sum\limits_{k=0}^{K-1} \left(\frac{m_g \eta}{2}-\frac{54 m_g L^2\eta^3 w^4 c^3  }{(1-\rho^2)(1-\rho)}\right) \mathbb{E}[\|\nabla f(\tilde{\btheta}_k)\|^2]\\
 & \leq f(\tilde{\btheta}_0) -f^\ast + \frac{m_g \eta^2 w^2 \|{\vbf}_1\|^2 \sigma^2  LK}{2} + \left(\eta m_g L^2+\frac{54 \eta^3 L^4 m_g w^4 c^3 }{(1-\rho^2)(1-\rho)}\right) \sum\limits_{k=0}^{K-1} M_k  \\
& ~~~+ \frac{54 \eta^3 m_g L^2 \zeta^2 w^4 c^3 }{(1-\rho)^3}  + \frac{6m_g L^2 \eta^3 \sigma^2 w^4 c^3 }{(1-\rho^2)(1-\rho)} \\
&~~~+  \sum\limits_{k=0}^{K-1}   \frac{3 \eta w^4 \|{\vbf}_1\|^2 \|\Ybf_k - \Ybf_{\infty}\|^2_2}{m_g}\mathbb{E}(\|\nabla f(\tilde{\btheta}_k){\bm 1}_{m_g}^{\top}\|_{\mathrm{F}}^2)].\\
\end{aligned}
\end{equation*}

Plugging \eqref{eq:Mkk} and rearranging the order, we get

\begin{equation*}
\begin{aligned}
& \sum\limits_{k=0}^{K-1}(\frac{m_g \eta}{2} - \frac{m_g^2\eta^2 L}{2})\mathbb{E}\biggl[\biggnorm{\frac{\vbf_1^{\top}(\tilde\Ybf_k)^{-1}\nabla {\fbf}({\bm \Theta}_k)}{m_g}}^2\biggr] \\
&+ \sum\limits_{k=0}^{K-1} \mathbb{E}[\|\nabla f(\tilde{\btheta}_k)\|^2] \Big\{\frac{m_g \eta}{2}- 3 \eta w^4 \|{\vbf}_1\|^2 \|\Ybf_k - \Ybf_{\infty}\|^2_2 - \frac{18\eta^3 m_g^2 L^2 w^2 c^2}{(1-\rho)^2(1-\frac{18\eta^2L^2m_g w^2 c^2}{(1-\rho)^2})} \\
& ~~~~~~~~~~~~~~~~~~~~~~~~~~~ -\frac{54 m_g L^2\eta^3 w^4 c^3  }{(1-\rho^2)(1-\rho)}  - \frac{972\eta^5 L^4 m_g^2 w^6 c^5}{(1-\rho)^3(1-\rho^2)(1-\frac{18\eta^2L^2m_g w^2 c^2}{(1-\rho)^2})}  \Big\}  \\
\leq & f(\tilde{\btheta}_0) -f^\ast \\
& + \frac{m_g \eta^2 w^2 \|{\vbf}_1\|^2 \sigma^2  LK}{2}  + \frac{3\eta w^2 c  \zeta^2}{1-\rho}    + \frac{54 \eta^3 m_g L^2 \zeta^2 w^4 c^3 }{(1-\rho)^3} + \frac{6m_g L^2 \eta^3 \sigma^2 w^4 c^3 }{(1-\rho^2)(1-\rho)} \\
& +  \left(\eta m_g L^2+\frac{54 \eta^3 L^4 m_g w^4 c^3 }{(1-\rho^2)(1-\rho)}\right) \frac{1}{1-\frac{18\eta^2L^2m_g w^2 c^2}{(1-\rho)^2}} \left( \frac{18 \eta^2 m_g \zeta^2 w^2 c^2}{(1-\rho)^2}K+ \frac{2m_g \eta^2 \sigma^2 w^2 c^2 K}{1-\rho^2}\right) \\
\leq & f(\tilde{\btheta}_0) -f^\ast \\
& + \frac{m_g \eta^2 w^2 \|{\vbf}_1\|^2 \sigma^2  LK}{2}  + \frac{3\eta w^2 c  \zeta^2}{1-\rho}   + \frac{54 \eta^3 m_g L^2 \zeta^2 w^4 c^3 }{(1-\rho)^3} + \frac{6m_g L^2 \eta^3 \sigma^2 w^4 c^3 }{(1-\rho^2)(1-\rho)} \\
& + \frac{18 \eta^3 m_g^2 L^2 \zeta^2 w^2 c^2K}{(1-\rho)^2(1-\frac{18\eta^2L^2 m_g w^2 c^2}{(1-\rho)^2})}  + \frac{2\eta^3 m_g^2 L^2 \sigma^2 w^2 c^2 K}{(1-\rho^2)(1-\frac{18\eta^2L^2 m_g w^2 c^2}{(1-\rho)^2})}  \\
& + \frac{972 \eta^5 m_g^2 L^4  \zeta^2 w^6 c^5 K}{(1-\rho)^3 (1-\rho^2)(1-\frac{18\eta^2L^2 m_g w^2 c^2}{(1-\rho)^2})} + \frac{108 m_g^2\eta^5 L^4 \sigma^2 w^6 c^5  K}{(1-\rho^2)^2(1-\rho)(1-\frac{18\eta^2L^2m_g w^2 c^2}{(1-\rho)^2})}.\\
\end{aligned}
\end{equation*}

Dividing $\eta K$ by the both sides, we obtain
\begin{equation}\label{eq:all_3}
\begin{aligned}
& \frac{1}{K}\sum\limits_{k=0}^{K-1}(\frac{m_g}{2} - \frac{m_g^2\eta L}{2})\mathbb{E}\left[\norm{\frac{\vbf_1^{\top}(\tilde\Ybf_k)^{-1}\nabla {\fbf}({\bm \Theta}_k)}{m_g}}^2\right]\\
&+ \frac{1}{K}\sum\limits_{k=0}^{K-1} \bigg(\frac{m_g}{2}- 3 w^4 \|{\vbf}_1\|^2 \|\Ybf_k - \Ybf_{\infty}\|^2_2 - \frac{18\eta^2 m_g^2 L^2 w^2 c^2}{(1-\rho)^2(1-\frac{18\eta^2L^2m_g w^2 c^2}{(1-\rho)^2})}\\
&~~~~~~~~~~~~ -\frac{54 m_g L^2\eta^2 w^4 c^3  }{(1-\rho^2)(1-\rho)} - \frac{972\eta^4 L^4 m_g^2 w^6 c^5}{(1-\rho)^3(1-\rho^2)(1-\frac{18\eta^2L^2m_g w^2 c^2}{(1-\rho)^2})}  \bigg) \mathbb{E}[\|\nabla f(\tilde{\btheta}_k)\|^2] \\
\leq & \frac{f(\tilde{\btheta}_0) -f^\ast}{K\eta} + \frac{m_g \eta w^2 \|{\vbf}_1\|^2 \sigma^2  L}{2} + \frac{3 w^2 c  \zeta^2}{(1-\rho)K}   + \frac{54 \eta^2 m_g L^2 \zeta^2 w^4 c^3 }{(1-\rho)^3 K} \\
& + \frac{6m_g L^2 \eta^2 \sigma^2 w^4 c^3 }{(1-\rho^2)(1-\rho)K} +  \frac{18 \eta^2 m_g^2 L^2 \zeta^2 w^2 c^2}{(1-\rho)^2(1-\frac{18\eta^2L^2 m_g w^2 c^2}{(1-\rho)^2})}  + \frac{2\eta^2 m_g^2 L^2 \sigma^2 w^2 c^2 }{(1-\rho^2)(1-\frac{18\eta^2L^2 m_g w^2 c^2}{(1-\rho)^2})} \\
&  + \frac{972 \eta^4 m_g^2 L^4  \zeta^2 w^6 c^5 }{(1-\rho)^3 (1-\rho^2)(1-\frac{18\eta^2L^2 m_g w^2 c^2}{(1-\rho)^2})} + \frac{108 m_g^2\eta^4 L^4 \sigma^2 w^6 c^5 }{(1-\rho^2)^2(1-\rho)(1-\frac{18\eta^2L^2m_g w^2 c^2}{(1-\rho)^2})}.\\
\end{aligned}
\end{equation}

Set $t_0:= \lceil \frac{\log(m_g /48c^2w^4\|{\vbf}_1\|^2)}{2\log(\rho)}  \rceil$ and $\Ybf_{0}:= \Abf^{t_0}$.
By Lemma~\ref{lem:E}, we have
\begin{equation*}
\frac{m_g}{2}- 3  w^4 \|{\vbf}_1\|^2 \|\Ybf_k - \Ybf_{\infty}\|^2_2 \geq \frac{m_g}{2} - 3c^2w^4 \|{\vbf}_1\|^2 \rho^{2t_0} \geq \frac{7m_g}{16}.
\end{equation*}

Now we substitute $\eta = \sqrt{\frac{1}{m_g K}}$ in \eqref{eq:all_3}.
Note that $K\geq \max\{\frac{1}{m_g}, 4m_g L^2, D_3, D_4\}$, it can be inferred that $\eta \leq \min\{\frac{1}{2m_g L},  \frac{1-\rho}{12\sqrt{2} \sqrt{m_g} L e c}, \frac{\sqrt{(1-\rho)(1-\rho^2)}}{12\sqrt{3}  L w^2 c^{3/2} },  1\}$.
Then we have the followings:
\begin{equation*}
\begin{aligned}
& \frac{m_g}{2}-\frac{m_g^2 \eta L}{2}  >0, \\
& 1-\frac{18\eta^2L^2m_g w^2 c^2}{(1-\rho)^2}  > \frac{1}{2},\\
& \frac{54 m_g L^2\eta^2 w^4 c^3  }{(1-\rho^2)(1-\rho)}  \leq \frac{54 m_g L^2\eta w^4 c^3  }{(1-\rho^2)(1-\rho)} \leq \frac{m_g}{8},\\
& \frac{18\eta^2 m_g^2 L^2 w^2 c^2}{(1-\rho)^2(1-\frac{18\eta^2L^2m_g w^2 c^2}{(1-\rho)^2})}  \leq \frac{m_g}{8},\\
& \frac{972\eta^4 L^4 m_g^2 w^6 c^5}{(1-\rho)^3(1-\rho^2)(1-\frac{18\eta^2L^2m_g w^2 c^2}{(1-\rho)^2})}  \leq \frac{m_g}{16},
\end{aligned}
\end{equation*}
which implies
\begin{equation*}
\begin{aligned}
\frac{m_g}{8} & \leq  \frac{m_g}{2}-\frac{54 m_g L^2\eta^2 w^4 c^3  }{(1-\rho^2)(1-\rho)}- 3  w^4 \|{\vbf}_1\|^2 \|\Ybf_k - \Ybf_{\infty}\|^2_2- \frac{18\eta^2 m_g^2 L^2 w^2 c^2}{(1-\rho)^2(1-\frac{18\eta^2L^2m_g w^2 c^2}{(1-\rho)^2})}\\
& ~ - \frac{972\eta^4 L^4 m_g^2 w^6 c^5}{(1-\rho)^3(1-\rho^2)(1-\frac{18\eta^2L^2m_g w^2c^2}{(1-\rho)^2})}.\\
\end{aligned}
\end{equation*}
Removing the term $\mathbb{E}\left[\norm{\frac{\vbf_1^{\top}\tilde\Ybf_k^{-1}\nabla {\fbf}({\bm \Theta}_k)}{m_g}}^2\right]$ in \eqref{eq:all_3}, we have
\begin{equation*}
\begin{aligned}
 &\frac{1}{K}\sum\limits_{k=0}^{K-1} \frac{m_g}{8} \mathbb{E}[\|\nabla f(\tilde{\btheta}_k)\|^2] \\
 & \leq  \frac{\sqrt{m_g}(f(\tilde{\bm \theta}_0) -f^\ast)}{\sqrt{K}} + \frac{\sqrt{m_g} w^2 \|{\vbf}_1\|^2 \sigma^2  L}{2\sqrt{K}}+ \frac{3 w^2 c  \zeta^2}{(1-\rho)K}   + \frac{54  L^2 \zeta^2 w^4 c^3 }{(1-\rho)^3 K^{2}} \\
 & ~ + \frac{6 L^2  \sigma^2 w^4 c^3 }{(1-\rho^2)(1-\rho)K^2} + \frac{2 m_g L^2 \sigma^2 w^2 c^2 }{(1-\rho^2)(1-\frac{18\eta^2L^2m_g w^2 c^2}{(1-\rho)^2})K}  +  \frac{18  m_g L^2 \zeta^2 w^2 c^2}{(1-\rho)^2(1-\frac{18\eta^2L^2 m_g w^2 c^2}{(1-\rho)^2})K} \\
 & ~ + \frac{972  L^4  \zeta^2 w^6 c^5 }{(1-\rho)^3 (1-\rho^2)(1-\frac{18\eta^2L^2 m_g w^2 c^2}{(1-\rho)^2})K^2} + \frac{108  L^4 \sigma^2 w^6 c^5 }{(1-\rho^2)^2(1-\rho)(1-\frac{18\eta^2L^2m_g w^2 c^2}{(1-\rho)^2})K^2}\\
& \leq   \frac{2\sqrt{m_g}(f(\tilde{\bm \theta}_0) -f^\ast)+ \sqrt{m_g} w^2 \|{\vbf}_1\|^2 \sigma^2  L}{2\sqrt{K}} + \frac{3 w^2 c  \zeta^2 + 4 m_g L^2 \sigma^2 w^2 c^2 + 36  m_g L^2 \zeta^2 w^2 c^2 }{(1-\rho) K} \\
& ~ + \frac{54 L^2 \zeta^2 w^4 c^3+ 216L^4w^6c^5\sigma^2+ 6 L^2  \sigma^2 w^4 c^3 + 1944  L^4  \zeta^2 w^6 c^5}{(1-\rho)^3K^2}.
\end{aligned}
\end{equation*}

Using again the condition that $K\geq \max\{D_1, D_2\}$, we have
\begin{equation*}
\frac{1}{K}\sum\limits_{k=0}^{K-1} \frac{m_g}{8} \mathbb{E}[\|\nabla f(\tilde{\btheta}_k)\|^2] \leq  \frac{5\sqrt{m_g}(f(\tilde{\bm \theta}_0) -f^\ast)+ \sqrt{m_g} w^2 \|{\vbf}_1\|^2 \sigma^2  L}{2\sqrt{K}},
\end{equation*}
which finally leads to
\begin{equation*}
\frac{1}{K}\sum\limits_{k=0}^{K-1}  \mathbb{E}[\|\nabla f(\tilde{\btheta}_k)\|^2] \leq  \frac{20(f(\tilde{\bm \theta}_0) -f^\ast)+ 4 w^2 \|{\vbf}_1\|^2 \sigma^2  L}{\sqrt{Km_g}}.
\end{equation*}

\end{proof}

\begin{proof}[\textbf{Proof of Corollary~\ref{coro:consensus}}]
By Lemma~\ref{lem:sum_Mk}, we have
   \begin{equation*}
\frac{1}{K}\sum\limits_{k=0}^{K-1} M_{k} \leq  \frac{2}{K}\left[\frac{18\eta^2w^2 c^2}{(1-\rho)^2} \sum\limits_{j=0}^{K-1} \mathbb{E}\norm{\nabla f(\tilde{\btheta}_j){\bm 1}_{m_g}^{\top}}_{\mathrm{F}}^2 + \frac{18 \eta^2 m_g \zeta^2 w^2 c^2}{(1-\rho)^2}K  + \frac{2m_g \eta^2 \sigma^2 w^2 c^2K}{1-\rho^2}\right].
\end{equation*}

Employing Theorem~\ref{thm:main0},  we can obtain

\begin{equation*}
\begin{aligned}
\frac{1}{K}\sum\limits_{k=0}^{K-1} M_{k} & \leq \frac{36 \eta^2w^2 c^2}{(1-\rho)^2} \frac{20(f(\tilde{\bm \theta}_0) -f^\ast)+ 4 w^2 \|{\vbf}_1\|^2 \sigma^2  L}{K^{1/2} m_g^{1/2}} + \frac{36 \eta^2 m_g \zeta^2 w^2 c^2}{(1-\rho)^2} + \frac{4m_g \eta^2 \sigma^2 w^2 c^2}{1-\rho^2}\\
& \leq  \frac{36 w^2 c^2}{(1-\rho)^2} \frac{20(f(\tilde{\bm \theta}_0) -f^\ast)+ 4 w^2 \|{\vbf}_1\|^2 \sigma^2  L}{K^{3/2} m_g^{3/2}} + \frac{36  \zeta^2 w^2 c^2}{(1-\rho)^2 K} + \frac{4 \sigma^2 w^2 c^2}{(1-\rho^2) K}\\
& = \mathcal{O}\Bigl(\frac{\zeta^2 +\sigma^2}{K} +\frac{1}{K^{3/2}}\Bigr).
\end{aligned}
\end{equation*}

\end{proof}

\bibliographystyle{plain}
\bibliography{ref}

\end{document}